%% file: main.tex
\newif\iftpdp\tpdpfalse

\documentclass[11pt,letterpaper]{article}
\usepackage[margin=1in]{geometry}
\usepackage{times}
\usepackage[dvipsnames]{xcolor}
\usepackage{graphicx} 
\usepackage{amsmath, amsfonts, enumerate, amssymb, amsthm, nccmath}
\usepackage{nicefrac}

\usepackage{algorithmic}
\usepackage[linesnumbered,ruled,vlined]{algorithm2e}
\usepackage{typed-checklist}
\usepackage[shortlabels,inline]{enumitem}

\usepackage{thm-restate}
\usepackage{multirow}
\usepackage{booktabs}

\usepackage{DefaultPackages}
\usepackage{MathematicA}

\newcommand{\var}[1]{\Var\sbk{#1}}

\title{Parameter-Free Triangle Counting}

\author{Asaf Etgar\thanks{Department of Applied Mathematics, Yale University} \and Anna Gilbert\thanks{Department of Statistics and Data Science, Yale University} \and Quanquan Liu\thanks{Department of Computer Science, Yale University} \and Andrew McGregor\thanks{Manning College of Information \& Computer Sciences, University of Massachusetts Amherst}}
\date{}

\usepackage{mathtools}
\usepackage{hyperref}
\definecolor{mydarkblue}{rgb}{0,0.08,0.45}
\hypersetup{ %
    colorlinks=true,
    linkcolor=mydarkblue,
    citecolor=mydarkblue,
    filecolor=mydarkblue,
    urlcolor=mydarkblue,
}

\newtheorem{theorem}{Theorem}[section]
\newtheorem{claim}[theorem]{Claim}

\newtheorem{corollary}[theorem]{Corollary}

\newtheorem{remark}[theorem]{Remark}

\theoremstyle{definition}
\newtheorem{definition}[theorem]{Definition}

\newcommand{\eps}{\varepsilon}

\input{macro}

\newcommand{\cA}{\mathcal{A}}

\DeclareMathOperator{\Mult}{Mult}

\newcommand{\tguess}{\tau}
\newcommand{\fail}{\texttt{FAIL}}
\newcommand{\wt}{\widetilde}
\DeclareMathOperator{\oracle}{{oracle}}
\newcommand{\light}{\texttt{L}}
\newcommand{\heavy}{\texttt{H}}
\newcommand{\expec}[1]{\mathbb E\left [ #1 \right ]}
\newcommand{\hTh}{\widehat{T}_\heavy}
\newcommand{\hTl}{\widehat{T}_\light}
\newcommand{\logpass}{\textsc{LogPass}}
\newcommand{\warmup}{\textsc{OnePass}}
\newcommand{\loglogpass}{\textsc{LogLogPass}}
\newcommand{\ppass}{\textsc{p-Pass}}

\usepackage{tikz}
\usetikzlibrary{shapes.geometric, arrows.meta, positioning, fit, calc, backgrounds, decorations.pathreplacing, decorations.pathmorphing}

\usepackage[capitalize,nameinlink]{cleveref}
\crefname{algocf}{alg.}{algs.}
\Crefname{algocf}{Algorithm}{Algorithms}

\usepackage[colorinlistoftodos,textsize=small,color=red!25!white,obeyFinal]{todonotes}

\iftpdp
\newcommand{\qq}[1]{}
\else
\newcommand{\qq}[1]{{\color{purple} Quanquan: #1}}
\fi

\allowdisplaybreaks

\crefalias{AlgoLine}{line}
\crefname{algocfline}{Line}{Lines}

\newcommandx{\Asaf}[2][1=]{\todo[linecolor=blue,backgroundcolor=blue!25,bordercolor=blue,#1]{#2}}

\newcommandx{\Anna}[2][1=]{\todo[linecolor=green,backgroundcolor=green!25,bordercolor=green,#1]{#2}}

\newcommandx{\QQ}[2][1=]{\todo[linecolor=purple,backgroundcolor=purple!25,bordercolor=purple,#1]{#2}}

\newcommandx{\Andrew}[2][1=]{\todo[linecolor=yellow,backgroundcolor=yellow!25,bordercolor=yellow,#1]{#2}}

\declaretheorem[name=Lemma, numberwithin=section]{lemma}

\begin{document}

\begin{titlepage}
\maketitle

\input{abstract}

\thispagestyle{empty}
\end{titlepage}

\pagenumbering{gobble}
\newpage
\setcounter{tocdepth}{2}
\tableofcontents
\newpage
\cleardoublepage
\newpage
\pagenumbering{arabic}

\clearpage
\sloppy
\addtocontents{toc}{\protect\setcounter{tocdepth}{1}}

\input{introduction}
\input{background}
\input{logn_pass_algorithm}

\input{mixed-bound-formulation}

\input{lower-bound}


\bibliographystyle{alpha}
\bibliography{ref}

\appendix
\input{verified-algorithm-proof}

\end{document}

%% file: macro.tex
\DeclareMathOperator{\ccRR}{R}
\DeclareMathOperator{\cost}{cost}
\DeclareMathOperator{\err}{err}

\DeclareMathOperator{\ccout}{out}
\newcommand{\tritest}{\mbox{\sc TriTest}}
\newcommand{\toltritest}{\mbox{\sc TolTriTest}}

\DeclareMathOperator{\poly}{poly}

%% file: abstract.tex
\begin{abstract}
Given an undirected, unweighted graph, the triangle counting problem seeks the number of occurrences of three-cycles in a graph dataset. Triangle and, more generally, subgraph counting are classical and important problems in graph algorithms. These problems are central to a wide variety of applications, including community detection, computing the clustering coefficient, motif discovery in protein networks, and broad social network analysis. In many of these applications, the graph datasets are so voluminous or are collected via incremental updates that we model them as streams of updates to an underlying graph. It is in this model that we seek to count the number of triangles $T$ in a graph $G = (V,E)$ with $n$ vertices and $m$ edges. 

There are a number of foundational results for the streaming triangle counting problem, both theoretical in nature and developed for practical application. There is, however, one major drawback to \textit{all} previous constant space algorithms: to achieve both a constant factor approximation to the triangle count and to obtain the sublinear space guarantees of these algorithms, 
one needs to know \textit{a priori} a constant factor approximation of the triangle count. 
Thus, to obtain the best space guarantees for estimating the number of triangles, one must already have a 
good estimation of the number of triangles! This requirement is inherently circular.

We initiate the study of \textit{parameter-free streaming triangle counting}, without any \textit{a priori} 
knowledge of $T$ or any quantities depending on $T$, provided $m$, the length of the stream. We describe a family of $O(p)$ pass parameter-free triangle counting algorithms that guarantee a mixed multiplicative and additive approximation of $T$ and use $\widetilde{O}\rbk{\frac{m+T}{\sqrt{T}}}$ expected space. Moreover, this family leads to an $O(\log\log(n))$ pass algorithm that gives a $(1+\eps)$ multiplicative approximation of $T$ with the same space complexity. These algorithms rely on the notion of a \emph{verified} parametrized algorithm: An algorithm that is parametrized by $\tau$ and which provides either an approximation of $T$ when $\tau \le T$, or declares that $T < \tau$. Furthermore, we demonstrate a lower bound: any parameter-free algorithm that provides a multiplicative approximation for all values of $T$ must use $\Theta(m)$ space, even on streams where the actual triangle count is moderately large.

\end{abstract}

%% file: introduction.tex
\section{Introduction}

Given an undirected, unweighted graph $G = (V, E)$, the \emph{triangle counting problem} seeks
to determine the number of three-cycles in $G$. Triangle counting, and more generally subgraph
counting, are among the most classical and well-studied problems in graph algorithms, with a rich
body of foundational results in both the theoretical~\cite{atserias2013size, bera2017towards,
seshadhri2013triadic, kolountzakis2012efficient, bar2002reductions, kallaugher2019complexity,
jayaram2021optimal, mcgregor2020triangle, mcgregor2016better, chen2022triangle,
cormode2017second} and applied~\cite{tangwongsan2013parallel, pavan2013counting, jha2015space,
shin2020fast, jha2015counting, shin2018think, wang2019rept, yang2022distributed, shin2018dislr,
shin2021cocos, shin2017wrs, lee2020temporal} communities. The triangle count of a graph encodes
a wealth of structural information.

In particular, streaming triangle counting has applications
across a broad range of data science problems, including: community detection in social networks
(see e.g.~\cite{palla2005uncovering, prat2012shaping, ST21} and references therein); computing
the well-known clustering coefficient for broad applications like network
neuroscience~\cite{MSEW18}; finding structural holes and small worlds in
communities~\cite{LGFZT20}; recommendation systems~\cite{ZZLSZ14}; semantic user
search~\cite{LZZY18}; social and network science measurements~\cite{granovetter1977strength,
watts1998collective, newman2004finding, farkas2001spectra, foucault2010friend,
leskovec2008microscopic}; clustering~\cite{tseng2021parallel}; and motif discovery in protein
networks~\cite{milo2002network}. There is even an annual competition devoted to fast triangle
counting solutions~\cite{GraphChallenge}, and comprehensive surveys of the broader significance
of subgraph counting can be found in~\cite{al2018triangle} and~\cite{easley2010networks}.

In many of these applications, graph data arrive incrementally and cannot be stored in their
entirety, motivating the study of triangle counting in the \emph{streaming model}. As a concrete
example, consider large-scale recommendation systems, which must adapt to a continuously growing
item set. Such additions are naturally modeled as a stream of edge insertions, and in corporate
settings the underlying graphs can reach hundreds of billions of entries. For datasets of this
scale, one requires accurate triangle counts using space that is sublinear in the size of the
input. Accordingly, there is sustained and growing interest in approximate, constant-pass triangle
counting algorithms for edge-insertion streams, where the goal is to achieve the best possible
approximation using the smallest possible memory footprint.

Many prior algorithms achieve this goal via clever sampling: a randomly chosen subset of stream
edges is maintained, from which the triangle count is estimated and then scaled up. Unfortunately,
under worst-case ordering such techniques meet known space lower bounds: $\Omega(m)$ space can be
required for a stream of $m$ edges when the triangle count $T$ is
even moderately small~\cite{kallaugher2017hybrid,BravermanOV13}, and even in random-order streams
$\Omega\! (\varepsilon^{-2}\sqrt{m/T})$ space is unavoidable when $T \leq \sqrt{m}$ \cite{mcgregor2020triangle}. 

There is, however, a further and more fundamental
limitation shared by all prior low-space algorithms: to simultaneously achieve a
constant-factor approximation of $T$ and the sublinear $\widetilde{O}(m/\sqrt{T})$ space
guarantee, these algorithms require a priori knowledge of a constant-factor approximation of $T$
itself. While it is reasonable to assume knowledge of $m$, it is unrealistic to assume that we
have an approximation of $T$ in real-world networks, since two similarly sized graphs can have
vastly different triangle counts. This requirement is inherently circular: to allocate the right
amount of memory before the stream begins, one must already know, to within a constant factor,
the very quantity one is trying to compute. Allocating too much space wastes resources;
allocating too little causes the algorithm to miss the few triangles present and produce a wildly incorrect estimate.

One natural remedy is to run successive passes, decreasing a guess $T_i$ for a lower bound on
$T$ by a factor of two each time and restarting the algorithm whenever the current guess fails.
Setting the initial guess to $T_0 = m^{3/2}$, this strategy terminates after $O(\log n)$ passes.
Our goal is to do better: we seek an algorithm that requires no lower bound on $T$ or any
quantity depending on $T$ --- that is, a \emph{parameter-free} algorithm. We initiate this study
in the random-order streaming model, where edges arrive in the order of a uniformly random permutation, and
present a family of parameter-free algorithms that offer a tradeoff between number of passes and quality of approximation. We present both a single-pass algorithm and an $O(\log \log n)$ pass algorithm in this setting. Recall that assuming a random ordering is necessary to achieve sub-linear space even when a constant factor approximation to $T$ is known \cite{kallaugher2017hybrid,BravermanOV13}.

\subsection{Our Results and Technical Overview}

Our main result is an $O(p)$-pass parameter-free triangle counting algorithm in the random-order streaming model for any $p \leq \lceil\log \log(n)\rceil $. That is, we prove the following.

\begin{theorem}[Main Result, informal]
Let $S$ be a stream of edges (in random order) representing a simple graph $G$ with $n$ vertices, $m$ edges, and $T$ triangles. Then there exists an $O(p)$ pass algorithm that uses expected 
$\wt{O}\rbk{\delta^{-1}\varepsilon^{-3.5}\rbk{\frac{m + T}{\sqrt{T}}}}$ space, and with probability $1-\delta - 1/\poly(n)$ outputs $\widehat T$, an approximation of $T$, with 
\[
    T \leq \widehat T \leq (1+\varepsilon) T + 
    \widetilde{O}(n^{1/2^p}).
\]
The algorithm does not require any prior knowledge of $T$.
\end{theorem}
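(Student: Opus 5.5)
We build the algorithm from a \emph{verified} parametrized subroutine and a doubly-exponential search over the guess $\tguess$. The basic primitive is an algorithm $\cA(\tguess)$ that uses $\wt O\rbk{\frac{m+T}{\sqrt{\tguess}}}$ space, up to the $\varepsilon,\delta$ factors in the statement. With high probability it does one of two things: if $\tguess \le T$, it outputs $\widehat T$ with $T\le \widehat T\le (1+\varepsilon)T$; if $T<\tguess$, it either declares ``$T<\tguess$'' or outputs a valid approximation.

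To build $\cA(\tguess)$, I would take a standard random-order light/heavy edge-sampling estimator, sampling edges at rate roughly $1/\sqrt{\tguess}$ and classifying edges by oracle-estimated triangle load. I would add a second, independent estimator whose variance is controlled by $\tguess$. The verification step then compares the estimate against $\tguess$ and fails safely when the estimate is below roughly $\tguess$. A Chebyshev/Bernstein bound at threshold $\tguess$ justifies this. The $\frac{T}{\sqrt{\tguess}}$ term enters because the number of sampled wedges closing into triangles can scale with $T$.

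\textbf{Guess schedule.} Rather than halving $\tguess$ from $m^{3/2}$, which costs $O(\log n)$ passes, we square-root it. Start with $\tguess_0 = m^{3/2}\le n^3$ and set $\tguess_{i+1}=\sqrt{\tguess_i}$, running one verified pass (or $O(1)$ passes) per guess. After $p$ rounds we have $\tguess_p = n^{3/2^{p}}$, which is $\wt O(n^{1/2^p})$ up to the constant in the exponent; this can be fixed by starting at $n$-scale or by adjusting indices. We stop at the first $i$ where $\cA(\tguess_i)$ does not declare failure, and output $\widehat T$. If every round fails, we output $\widehat T = \tguess_p$, which is a valid additive-error answer because then $T<\tguess_p$. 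In that case $T\le \widehat T\le T+\tguess_p$.

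\textbf{Space accounting.} This is the main obstacle. The space of round $i$ is about $(m+T)/\sqrt{\tguess_i}$, and we only run round $i$ if round $i-1$ failed. That failure means $T<\tguess_{i-1}=\tguess_i^2$, i.e.\ $\sqrt{\tguess_i}>T^{1/4}$, which is too weak: it gives only $(m+T)/T^{1/4}$. We also cannot allow the algorithm to continue to guesses much below $T$.

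So the real requirement is this: in round $i$ with $\tguess_i\le T$, the algorithm must succeed and stop, and the last executed round must have $\tguess_i\ge$ something like $T/\mathrm{polylog}$ or $\varepsilon T$. To get this, I would refine each square-root round with a cheap in-pass geometric search. Inside the window $[\tguess_{i+1},\tguess_i]$, run the verified estimators in parallel for the powers of $2$ within it, but only from the top down. Equivalently, first run a coarse pass with space $(m+T)/\sqrt{\tguess_{i+1}}$ to obtain a crude lower bound.

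The cleaner route, which I would try first, uses expected space with an early-abort mechanism. Each parallel copy with guess $\tguess'$ monitors its memory and aborts once it exceeds $\wt O\rbk{(m+\tguess'')/\sqrt{\tguess'}}$, where $\tguess''$ is the largest verified lower bound so far. We then bound the expected space by summing over guesses weighted by failure probability $\delta$; this is where the $\delta^{-1}$ in the space bound arises via Markov. The final output uses the smallest successful guess $\tguess\ge \varepsilon T$, which yields the claimed $\wt O\rbk{\delta^{-1}\varepsilon^{-3.5}(m+T)/\sqrt T}$ expected space. Success probability $1-\delta-1/\poly(n)$ follows from a union bound over the $O(p\log n)$ verified runs.
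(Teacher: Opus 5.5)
There is a genuine gap. You correctly take the verified subroutine as the primitive, and you correctly diagnose that a square-root schedule on $\tau$ breaks the space bound. But none of the repairs you propose works.

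\begin{itemize}
\item Running the verified estimators for all powers of $2$ in $[\tau_{i+1},\tau_i]$ in parallel costs the space of the smallest guess, $(m+T)/\sqrt{\tau_{i+1}}$. For $T\approx\tau_i$ this is again $(m+T)/T^{1/4}$.
\item The ``coarse pass'' at $\tau_{i+1}$ is exactly the same expensive run.
\item ``Top down'' has no meaning inside one pass unless the copies read disjoint parts of the stream.
\item The early-abort route is circular. The verified estimate is only formed in post-processing after the whole stream has been read (its oracle uses edges from the entire stream), so there is no ``largest verified lower bound so far'' during a pass. Moreover, a copy with guess $\tau'$ stores roughly $m/\sqrt{\tau'}$ edges regardless of $T$. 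So no budget that does not already know $T$ can cut off exactly the copies with $\tau'\ll T$. An aborted copy is also not a certified \fail{}, so correctness is lost as well.
\end{itemize}

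Consequently, nothing in your sketch guarantees that the final successful guess is at least $\varepsilon T$, and that is precisely what the space bound needs. Separately, the $\delta^{-1}$ factor comes from the subroutine's own sampling rates, not from a Markov argument over guesses.

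The paper's argument rests on two ingredients you do not have.

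\textbf{Stream segmentation.} A single pass is cut into $t=O(\log n)$ contiguous segments with multinomial lengths, so that each segment is, marginally, distributed as a rate-$1/t$ Bernoulli edge sample. The halving guesses are tried \emph{sequentially}, one per segment, with input $(1-\varepsilon)t^{-3}\tau_j$. The pass stops at the first non-\fail{} output, so the space used is that of a guess within a factor $2$ of the true count. This requires each segment's triangle count to concentrate at $t^{-3}T$. By Lemma~\ref{lem:conc} that only holds once $T\gtrsim\varepsilon^{-2}t\,\Delta_e\log n$. With $\Delta_e\le n$, a failed pass therefore certifies $T=\widetilde{O}(n)$, and the segmented search cannot certify anything below $\widetilde{O}(\Delta_e)$.

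\textbf{Heavy/light reduction of $\Delta_e$.} To go below $\Delta_e$, a certified bound $T\le U$ buys a space budget $m/\sqrt{U}$. That budget pays for a vertex-sampling oracle at heaviness threshold $\gamma=\sqrt{U}$. Triangles on heavy edges are counted directly from the sampled edges, and the light substream has $\Delta_e\le 2\gamma$. The segmented search is then run on the light substream with floor $\tau^*=\max(U_i,\varepsilon\widehat{T}_{\heavy})$, which keeps its space within $\varepsilon^{-1/2}$ of the target. It either returns an estimate or certifies $T=\widetilde{O}(\sqrt{U})$.

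The doubly exponential decay $n\to n^{1/2}\to n^{1/4}\to\cdots$ in the theorem comes from this iteration $U\mapsto\sqrt{U}$ of the heaviness threshold. It does not come from square-rooting the guess fed to the verified algorithm.
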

If $T < m$, the space bound simplifies to  $\wt{O}\rbk{\delta^{-1}\varepsilon^{-3.5}\rbk{m/\sqrt{T}}}$. When $p = \lceil\log \log(n)\rceil$, a simple modification gives a $1 + \epsilon$ approximation $\widehat T$ to $T$, with no additive error. 


\paragraph{Mixed approximation bound.}
We refer to the type of approximation result given as a \emph{mixed approximation} bound. More generally, we say that $X$ is an $(\alpha,\beta)$ approximation of $T$ if $T\le X \le \alpha T + \beta$. Our main result is, therefore, an $(1+\varepsilon,\widetilde{O}(n^{1/2^p}))$ mixed approximation of $T$ with an $O(p)$ pass parameter-free algorithm, and a $(1+\eps,0)$ approximation in $O(\log\log(n))$ passes.

\paragraph{Verified triangle counting.}
A key building block of our algorithm is a \emph{verified} triangle counting subroutine, which
we develop by modifying the algorithm of McGregor and Vorotnikova~\cite{mcgregor2020triangle}.
All prior triangle counting algorithms take a parameter $\tau$ intended to be a lower bound on
$T$ and produce an estimate without any guarantee of correctness if $\tau$ is not actually a
lower bound. Our verified subroutine, by contrast, either outputs a $(1 \pm \varepsilon)$
approximation of $T$ when $T \geq \tau$, or outputs \fail~when $T < \tau$ (with high
probability). This verification property is crucial: it allows us to search over possible values
of $\tau$ across passes without fear of silently returning an incorrect estimate, as each
invocation either certifies its own output or explicitly declares failure. We define this result in Section~\ref{sec:warmup} with additional details in Section~\ref{appendix:verifiedalgorithm}.

\paragraph{Stream segmentation and the one-pass result.}
A na\"{i}ve approach to parameter-free counting would geometrically decrease a guess $\tau_i$ by
a constant factor in each pass, starting from $m^{3/2}$. This requires $O(\log n)$ passes, and
reducing the pass count further by using a larger factor comes at a steep cost in space. To
circumvent this, we observe that a randomly ordered stream of $m$ edges can be partitioned into
$O(\log n)$ contiguous segments, each of which behaves statistically like an independent uniform
sample of the edges. Provided the maximum per-edge triangle participation $\Delta_e$ is not too
large, each segment is \emph{representative}: its triangle count is a $(1 \pm \varepsilon)$
fraction of the expected value. This allows us to run $O(\log n)$ geometrically spaced guesses
for $\tau$ across the segments of a \emph{single pass}, rather than across separate passes.
The result is a one-pass algorithm that either produces a $(1 \pm \varepsilon)$ approximation of
$T$, or certifies that $T = \widetilde{O}(n)$. We discuss this result in Section~\ref{subsec:stream segment} with more details in the subsequent section.

\paragraph{Heavy and light edges, $O(p)$ pass and  $O(\log \log n)$-pass algorithms.}
When designing sampling based \emph{parametrized} triangle counting algorithms, the parameter $\tau$ is used to separate the edges in the graph to \emph{heavy} and \emph{light}, where heavy roughly corresponds to edges participating in $\sqrt{\tau}$ triangles. This separation, and the assumption that $\tau$ is a good approximation of $T$, allow one to set sampling probabilities and approximate the triangles with or without heavy edges separately. We bypass the need for a parameter $\tau$ by \emph{adaptively} setting a heaviness threshold $\gamma$ according to the certificate provided by the verified algorithm in a previous pass. 
If the one-pass algorithm did not return an estimate, we have established that $T = \widetilde{O}(n)$, allowing a space budget of $\widetilde{O}(m/\sqrt{n})$ in subsequent passes. We use this budget to separate edges into \emph{heavy} edges,
those involved in at least $\gamma = \widetilde{\Theta}(\sqrt{n})$ triangles, and \emph{light}
edges, the rest. 
An oracle for this separation is constructed by sampling each vertex
independently with a small probability and collecting all incident edges; a Chernoff bound
argument shows that this oracle correctly classifies every edge with high probability. Triangles
involving at least one heavy edge are then counted directly from the sampled edge set, while
triangles consisting entirely of light edges are handled by running the one-pass subroutine on
the substream of light edges. If this round again fails to produce an estimate, we learn that $T
= \widetilde{O}(\sqrt{n})$, shrinking the threshold $\gamma$ accordingly and repeating. After
$i$ rounds we have either output an approximation of $T$ or deduced that $T =
\widetilde{O}(n^{1/2^i})$. After $p$ rounds, one can output a partial estimate and an upper bound of $\widetilde{O}(n^{1/2^p})$ on the remaining fraction, resulting in the $(1+\eps,\widetilde{O}(n^{1/2^p}))$ mixed approximation. After $O(\log \log n)$ rounds, keeping all edges of the stream is within the space budget, providing an exact count.
\paragraph{Lower bound.}
We complement our algorithmic results with a space lower bound showing that  parameter-free triangle estimation in a random-order stream is inherently expensive: even if the true number of triangles is moderately large, any algorithm that must return an $(\alpha,\beta)$-approximation without being told the relevant scale in advance can still be forced to use $\Omega(m/\sqrt{\beta})$ space, and in the fully parameter-free setting this becomes essentially linear space. In words, the algorithm cannot safely ``wait until it realizes'' that the graph has many triangles, because before it learns that, it must already have enough information to distinguish graphs with no triangles from graphs with only a small number of triangles, and that alone is hard.

The technical contribution is a communication-complexity reduction that strengthens the McGregor--Vorotnikova lower bound from ordinary triangle testing to a \emph{tolerant} version: by adjoining a gadget that guarantees many triangles while revealing almost nothing new to Alice, the proof shows that any protocol that works even on graphs with at least $\tau_2 \ge \tau_1 \ge 0$ triangles must still pay the full $\Omega(m/\sqrt{\tau_1})$ communication needed to distinguish $0$ from $\tau_1$ triangles. This tolerant communication lower bound is then transferred to the random-order streaming model, yielding the stated space lower bound for parameter-free triangle estimation.

%% file: background.tex
\section{Preliminaries}
\subsection{Definitions and Problem Statement}


Let $G = (V,E)$ be a simple graph with $|V| = n$ and $|E| = m$. We wish to count the number of triangles (three-cycles) in $G$, $T(G)$ or simply $T$. Of course, a naive algorithm is to iterate over all triples of vertices and check how many of them form triangles. However, when $G$ is very large, this
is often wasteful or even impossible when we cannot store the entirety of $G$ in memory. In the \emph{insert-only random order} graph streaming model, we see the edges of the graph one by one, arriving in the order of a uniformly random permutation on the edges. We refer to this stream as $S$. We wish to obtain an approximation of $T$ using sublinear (in $m$ the number of edges) space. The goal is to minimize the space used during the algorithm, as well as the number of passes over the stream. The success probability of the algorithm is dependent on the random permutation and the internal coin tosses of the algorithm.

As it turns out, our algorithms provide \emph{mixed approximation} bounds on $T$ of the form:
\begin{definition}(Mixed approximation)
    We say that $X$ is an $(\alpha,\beta)$ approximation of $T$ if $T\le X \le \alpha T + \beta$. If $\beta = 0$, we simply say that $X$ is an $\alpha$ approximation, or specify that it is a multiplicative approximation.
\end{definition}

A desirable mixed approximation result would therefore be of the form: fix $\varepsilon>0$ and return $X$, a $1+\varepsilon$ approximation of $T$ with high probability.


To fix notation and terminology, a key quantity in our analysis is the number of triangles adjacent to a given edge. For an edge $e$ and a set of edges $F$, let $t_e^F$ be the number of triangles in $\{e\}\cup F$ that include $e$. Let $t_e:=t_e^E$ where $E$ is the edge set of the graph. \emph{Light} edges have small values of $t_e$ while \emph{heavy} edges have large values of $t_e$ (where small and large are defined as necessary in a subsequent algorithm). We also have $\Delta_e(G) = \Delta_e = \max_{e} t_e$.

\subsection{Related results}
As we discussed in the introduction, all previous results require a lower bound estimate on the number of triangles in the entire stream. It is precisely this requirement that we eliminate.

In the multi‑pass edge-streaming model, where the algorithm may scan the input more than once, Cormode and Jowhari~\cite{cormode2014second} proved that any constant‑pass algorithm for arbitrary graphs requires $\Theta\left(\frac{m}{\sqrt{T}}\right)$ space, and they gave a two‑pass algorithm achieving this bound. Building on this result, Kolountzakis et al.~\cite{kolountzakis2012efficient} presented a three‑pass algorithm that uses $O\left(m^{1/2}+\frac{ m^{3/2}}{T}\right)$ space, while Bera and Chakrabarti~\cite{bera2017towards} later showed how to attain $O\left(\frac{m^{3/2}}{T}\right)$ space in just four passes.

In the adjacency‑list streaming model, where the neighbors of each vertex arrive in a contiguous block (so each edge is observed twice), McGregor, et al.~\cite{mcgregor2016better} gave a one‑pass algorithm using $O\left(\frac{m}{\sqrt{T}}\right)$ space. More recently, Kallaugher, et al.~\cite{kallaugher2019complexity} designed a two‑pass algorithm requiring $O\left(\frac{m}{T^{2/3}}\right)$ space, and they proved matching lower bounds (under standard communication‑complexity conjectures) for both one‑ and two‑pass algorithms in this model.

Finally, the query model considers the number of edge‑ or vertex‑queries rather than space usage. Although this setting is quite different, it faces the same challenge of ``heavy'' vertices or edges. Eden, et al.~\cite{eden2017approximately} initiated the study of triangle counting under query complexity bounds, and their techniques were extended by Eden et al.~\cite{eden2018approximating} to count general cliques and by Assadi et al.~\cite{AKK19} to count all constant‑size subgraphs.

%% file: logn_pass_algorithm.tex
\section{Warmup}
\label{sec:warmup}

\subsection{Verified Triangle Counting}
In many of the algorithms found in the literature, the assumption that the input parameter $\tau$ is a lower bound poses two complications: The first is the assumption of \emph{knowledge} of such a lower bound $\tau$. The second is the \emph{inability to verify} that $\tau$ is indeed a lower bound. That is, the algorithms will necessarily output some estimate, where in reality $\tau$ is not even a lower bound on $T$ - so the estimate need not be accurate. This raises the need for a \emph{verified} triangle counting algorithm: An algorithm that not only produces an estimate, but declares when the input parameter doesn't satisfy the assumptions placed on it. 
We modify a previous algorithm to provide such an algorithm and prove the following theorem:

\begin{restatable}{theorem}{verifiedalg}\label{thm:bb}
There exists a single-pass algorithm $\cA$ which takes one pass over a randomly ordered stream and parameter $\tguess$ such that  a) the algorithm uses $s(m,\tau) = \wt{O}(\epsilon^{-2}\delta^{-1} (m/\sqrt{\tguess} +T/\sqrt{\tau}) )$ space in expectation, and b) with probability at least $1-\delta-1/\poly(n)$: 
\begin{itemize}
\item If $T\geq \tguess$, the output is  a $(1\pm \epsilon)$-approximation of $T$.
\item If $T<\tguess$,  the output is a $(1\pm \epsilon)$-approximation of $T$ or  \fail.
\end{itemize}
Note that the algorithm has no a priori information about $T$.
\end{restatable}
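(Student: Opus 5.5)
We will obtain $\cA$ by modifying the single-pass random-order algorithm of McGregor and Vorotnikova~\cite{mcgregor2020triangle}. We recall its structure. Run it with parameter $\tguess$. It keeps a uniformly random prefix of the stream of length $\Theta(\epsilon^{-2}\, m/\sqrt{\tguess}\cdot\log n)$; equivalently, each edge is sampled with rate $p\approx \epsilon^{-2}\log n/\sqrt{\tguess}$. This prefix serves as an oracle that classifies each later edge $e$ as heavy ($t_e \gtrsim \sqrt{\tguess}$) or light. It also samples edges at rate $p$ and, for each sampled edge, records the wedges closed by later edges. Triangles whose first-arriving edge is light are estimated by $p^{-1}$ times the number of closed wedges. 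Triangles containing a heavy edge are handled with the prefix oracle.

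The analysis has two parts.

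\textbf{Case $T\ge \tguess$.} The unmodified analysis applies directly. A Chebyshev bound uses the variance bound $\Var \lesssim p^{-1}\sum_e t_e^2\mathbb{1}[e\text{ light}] \le p^{-1}\sqrt{\tguess}\,T$, which is at most $\epsilon^2T^2$ when $T\ge\tguess$. This gives a $(1\pm\epsilon)$-approximation with probability $1-\delta$ once the $\delta^{-1}$ factor is folded into the sampling rate.

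\textbf{Case $T<\tguess$.} Here we add the verification step, which is the only genuinely new part. The key observation is that the same variance bound gives an estimator error of order $\sqrt{p^{-1}\sqrt{\tguess}\,T/\delta}$ with probability $1-\delta$. Compare this to the threshold $\epsilon\tguess$: since $T<\tguess$, this error is at most $\epsilon\sqrt{T\tguess}\le \epsilon\tguess$. In other words, the absolute error is $O(\epsilon\tguess)$ whenever $T\le \tguess$, and it is at most $\epsilon T$ once $T\ge\tguess$.

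The modified algorithm therefore computes $\widehat T$ and uses the following output rule:
\begin{itemize}
\item if $\widehat T< (1-\epsilon)\tguess$, it outputs \fail;
\item otherwise, it outputs $\widehat T$.
\end{itemize}
If $T\ge\tguess$, then $\widehat T\ge (1-\epsilon)T\ge(1-\epsilon)\tguess$, so the algorithm never fails. If $T<\tguess$ and the algorithm outputs $\widehat T$, then $\widehat T\ge(1-\epsilon)\tguess$ while $|\widehat T-T|\le\epsilon'\tguess$. This forces $T\ge(1-2\epsilon)\tguess$, so the additive error $\epsilon'\tguess$ is at most $\epsilon T$ after rescaling $\epsilon$ by a constant. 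Hence the output is a $(1\pm\epsilon)$-approximation. To make this uniform, we need the variance bound to hold for arbitrary $T$ with $\tguess$ as the fixed scale. We also need the heavy/light oracle errors to be one-sided at scale $\sqrt{\tguess}$ with probability $1-1/\poly(n)$ via Chernoff bounds; this holds regardless of $T$.

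\textbf{Space.} The space bound follows from standard counting. The prefix and sampled edges contribute $\wt O(\epsilon^{-2}\delta^{-1}m/\sqrt{\tguess})$. The stored wedges and closing edges contribute in expectation $p\sum_e t_e\lesssim pT$, which gives the $T/\sqrt{\tguess}$ term. Since $T$ is unknown, this bound holds only in expectation.

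\textbf{Main obstacle.} We expect the hardest step to be re-examining the bias and variance of the heavy-edge part of the McGregor--Vorotnikova estimator when $T<\tguess$. Their analysis normalizes errors by $T$, and we must redo it as additive $O(\epsilon\tguess)$ bounds. Our first attempt will be to bound the heavy contribution by the number of heavy edges times their triangle counts, and to argue that misclassification costs at most $\epsilon\tguess$ in total.
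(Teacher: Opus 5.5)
Your verification mechanism is the same as the paper's. You threshold the estimate at $(1-\epsilon)\tau$ and show the error is $O(\epsilon\max(T,\tau))$, because the heavy/light cutoff is pinned to $\sqrt{\tau}$ rather than $\sqrt{T}$. The paper's light-triangle lemma is exactly $\Pr[|\tilde T_0C^2-T_0|\ge\epsilon\max(T,\tau/2)]\le\delta$, obtained by Chebyshev with variance $O(C\,T_0\sqrt{\tau})$.

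The gap is in the estimator underneath. An edge-sampled prefix at rate $p\approx\epsilon^{-2}\log n/\sqrt{\tau}$ cannot act as a heaviness oracle at scale $\sqrt{\tau}$. A triangle through $e$ is visible only if \emph{both} of its other edges lie in the prefix, so the visible count has mean about $p^2t_e$. For $t_e\approx\sqrt{\tau}$ this is roughly $\epsilon^{-4}\log^2 n/\sqrt{\tau}\ll 1$. So you cannot certify $t_e\le 2\sqrt{\tau}$ for edges declared light. Your variance bound $p^{-1}\sqrt{\tau}\,T$, on which the whole uniform-error argument rests, is then unsupported.

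What is needed, and what the paper does following McGregor--Vorotnikova, is vertex sampling. Keep every edge incident to a vertex sampled with probability $p$. Then a triangle through $e=uv$ is seen iff its apex is sampled, so $t_e^O\sim\mathrm{Bin}(t_e,p)$. Chernoff gives the classification with probability $1-1/\mathrm{poly}(n)$, at a cost of $2pm$ edges.

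Two smaller points on the light part. First, the estimator must count only triangles with \emph{no} heavy edge (the paper's $T_0$), not those whose first-arriving edge is light. A heavy edge shared as a second-arriving edge reintroduces covariance of order $t_e^2$. Second, if ``recording the wedges closed by later edges'' means storing the later edges adjacent to each sampled edge, that storage scales with degrees, not with $m/\sqrt{\tau}+T/\sqrt{\tau}$. This is why the paper uses colorful sampling with $C=\Theta(\delta\epsilon^2\sqrt{\tau})$ colors, at cost $m/C$.

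The heavy part, which you defer as the main obstacle, needs two ingredients your plan lacks.

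First, you need a one-pass way to \emph{store} the heavy edges within budget. The paper keeps multi-scale prefixes $E_i$, vertex-sampled at rate $p_i\propto 2^{-i}$ over the first $q_im=2^im/\sqrt{\tau}$ stream edges. It also keeps a set $P$ of later edges that close a triangle with some $E_i$. Then $\mathbb{E}|P|=\tilde O(\delta^{-1}\epsilon^{-2}T/\sqrt{\tau})$, which is the actual source of the $T/\sqrt{\tau}$ space term. The Missing Heavy Edges lemma bounds $\sum_{e\in E_{\mathrm{H}}\setminus P}t_e\le\epsilon T$ with probability $1-\delta$ via Markov.

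Second, no re-derivation at additive scale $\epsilon\tau$ is required, because the heavy error is one-sided. Each heavy $e$ has $pt_e=\Omega(\epsilon^{-2}\log n)$, so its contribution is $(1\pm\epsilon)$-accurate by Chernoff. Heavy edges absent from $P$ can only lower $X$.
\begin{itemize}
\item When $T<\tau/2$, one gets $X\le(1+\epsilon)T+\epsilon\tau/2<(1-2\epsilon)\tau$, so the test returns \texttt{FAIL} without using the missing-mass bound at all.
\item When $T\ge\tau/2$, every error term is $O(\epsilon T)$.
\end{itemize}
With these components substituted, your threshold argument (splitting at $(1-2\epsilon)\tau$ rather than the paper's $\tau/2$) goes through unchanged.
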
 
\begin{remark}
    To fix terminology, we say that an execution of $\Aa$ is \emph{successful} if it outputs a correct estimate or \fail. That is \ref{thm:bb} means $\prob{\Aa\text{ is successful}} \ge 1-\delta - 1/\poly(n)$.
\end{remark}

The algorithm is a modification of the algorithm described in \cite{mcgregor2020triangle}, and is not the main focus of our paper. We provide the details and proof in Appendix \ref{appendix:verifiedalgorithm} for completeness, and will mostly use the algorithm as a black box. However we use two aspects of the algorithm explicitly:
\begin{enumerate}[label=\textup{(\roman*)}]
    \item \label{property: algorithm test}The algorithm produces an estimate $X$, and tests whether $X \ge (1-\varepsilon)\tau$. If so, the output is $X$. Otherwise, the output is \fail.
    \item \label{property: algorithm upper bound} The output $X$ satisfies $X \le (1+\varepsilon) T + \varepsilon\tau$ with probability $1-\delta-1/\poly(n)$. 
\end{enumerate}

From this point on, we let $\Aa$ be the algorithm promised by Theorem~\ref{thm:bb} and its space complexity $s(m,\tau)$. 

\subsection{$O(\log n)$ pass algorithm}
With a verified algorithm at hand, devising a parameter free $O(\log(n))$ pass algorithm is simple: There are at most $m^{3/2}$ triangles in any graph, so execute $\Aa$ with $\tau = m^{3/2}$. If the estimate was too large, set $\tau_1 = \frac{1}{2}m^{3/2}$ and execute $\Aa$ again, with $\tau_1$ as a new guess of a lower bound. In each subsequent pass, we decrease our guess by a factor of $2$, set $\tau_j = 2^{-j}m^{3/2}$ and execute $\Aa$. Note that when $t = \log_2(m^{3/2}) = O(\log(n))$, $\tau_t = 1$, so there must be some $j$ where $2\tau_j \ge T \ge \tau_j$. Therefore we terminate at pass $j = O(\log(n))$. Formally, this is captured by the following claim.
\input{algorithms/logpass}
\begin{claim}
    With probability $1-\delta - 1/\poly(n)$ $\logpass$ returns a $(1\pm\varepsilon)$ estimate of $T$ using $O(s(m,T))$ expected space.
\end{claim}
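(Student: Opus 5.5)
We argue by conditioning on the event that every execution of $\Aa$ performed by $\logpass$ is successful in the sense of the Remark following Theorem~\ref{thm:bb}, and that property~\ref{property: algorithm upper bound} holds in each of them. There are at most $t+1 = O(\log n)$ executions. To control the overall failure probability, we run each execution with failure parameter $\delta/O(\log n)$; equivalently, we note that the $1/\poly(n)$ terms absorb a union bound over $O(\log n)$ events. The resulting $\log n$ factor in the space is hidden in the $\wt{O}$ of $s(m,\tau)$. On this good event the algorithm never returns a wrong value: each pass outputs either \fail{} or a $(1\pm\varepsilon)$-approximation of $T$. So whatever $\logpass$ returns is correct, and it remains to show that it returns something, namely that it stops at a pass $j$ with $\tau_j$ comparable to $T$.

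\textbf{Termination.} The guesses $\tau_j = 2^{-j}m^{3/2}$ decrease to $\tau_t = 1$, and $T \le m^{3/2}$. Hence there is a first index $j^*$ with $\tau_{j^*} \le T$. For this index, $\tau_{j^*-1} > T$, so $T < 2\tau_{j^*}$ (or $j^* = 0$). By the first bullet of Theorem~\ref{thm:bb}, the execution at pass $j^*$ outputs a $(1\pm\varepsilon)$-approximation of $T$ rather than \fail. Thus the algorithm stops no later than pass $j^*$. (If $T=0$, then at $\tau_t=1$ the output is either \fail{} or an approximation of $0$; in that case $\logpass$ outputs $0$ after the last pass.)

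\textbf{Space.} Passes run sequentially, and space can be reused between passes, so the expected space is the maximum over executed passes $j \le j^*$ of $s(m,\tau_j)$. Since $s(m,\tau) = \wt{O}(\epsilon^{-2}\delta^{-1}(m+T)/\sqrt{\tau})$ is decreasing in $\tau$, this maximum is attained at the smallest $\tau$ reached. That smallest value is at least $\tau_{j^*} > T/2$, which gives $s(m,\tau_{j^*}) = O(s(m,T))$. This is the main subtlety. The algorithm might instead stop earlier, at some $j < j^*$, where $\tau_j > T$ and the output passed the test $X \ge (1-\varepsilon)\tau_j$; stopping earlier only uses larger $\tau$ and hence less space, so it does not hurt. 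What must be ruled out is running past $j^*$, and that was handled above.

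The one point needing care is that the space bound is an \emph{expectation}. The index of the final pass is random, so we should not take the expectation of a random maximum naively. Instead, we bound the expected space of each pass $j\le j^*$ by $s(m,\tau_j) \le s(m,\tau_{j^*})$, and we observe that passes beyond $j^*$ happen only off the good event. If needed, we cap their cost by the trivial $O(m)$ space times the failure probability, or we simply make the algorithm halt at pass $j^*$ deterministically relative to the good event. Either way, the expected space is $O(s(m,T))$, up to the logarithmic factors already absorbed in $\wt{O}$.
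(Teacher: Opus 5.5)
Your main argument is the paper's: a union bound over the $O(\log n)$ invocations with $\delta$ rescaled, termination no later than the first index $j^*$ with $\tau_{j^*}\le T$, and monotonicity of $s(m,\cdot)$ together with $T/2<\tau_{j^*}\le T$.

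The step that does not hold is your final paragraph. You propose two remedies, and neither works:
\begin{itemize}
\item Capping the passes beyond $j^*$ by $O(m)$ space times the failure probability only gives an additive $O\bigl(m(\delta+1/\poly(n))\bigr)$ term. For constant $\delta,\varepsilon$ this is $\Theta(m)$. In that regime (with $T\le m$ polynomially large) it is not $O(s(m,T))$, since $s(m,T)=\widetilde{O}(m/\sqrt{T})$ there.
\item The algorithm cannot be made to halt at $j^*$, because $j^*$ depends on the unknown $T$.
\end{itemize}

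What your argument actually proves is a bound on the expected space on the success event $\mathcal{G}$. The paper's proof proves the same thing; it simply infers the bound from termination at some $i\le j^*$. Write $S_j$ for the space of the $j$-th invocation. On $\mathcal{G}$ the peak space is at most $\sum_{j\le j^*}S_j$, so
\[
\mathbb{E}[\text{space}\cdot\mathbf{1}_{\mathcal{G}}]\le\sum_{j\le j^*}s(m,\tau_j)=O\bigl(s(m,\tau_{j^*})\bigr).
\]
This holds because $s(m,\tau_j)$ grows like $2^{j/2}$ in $j$. Hence $\mathbb{E}[\text{space}\mid\mathcal{G}]=O(s(m,T))$ for, say, $\delta\le 1/2$. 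You should state the space guarantee in that form.

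An unconditional expectation bound would need the cost of running past $j^*$ to decay with the number of extra passes. Theorem~\ref{thm:bb} does not give you that: its guarantee is per invocation, and all passes read the same stream order.
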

\begin{proof}
    Note that $\tau_t = 1$, so let $j^*$ be the first time $\tau_{j^*} \le T$. Then $\Aa(\tau_{j^*})$ returns a $(1\pm \varepsilon)$ approximation of $T$. $\logpass$ does not return a correct estimate if and only if at pass $i\le {j^*}$, $\Aa(\tau_i)$ did not succeed. This happens with probability at most $\delta + 1/\poly(n)$. So 
\begin{fleqn}[\parindent]
    \begin{align*}
    \prob{\logpass \text{ outputs a correct estimate}} &=
    \prob{\bigcap_{j\le j^*} \Aa \text{ succeeds in pass $j$}} \\
    &\ge 1-j^*\delta - j^*/\poly(n) \\
    &\ge 1-O(\log(n)\delta) - 1/\poly(n).        
    \end{align*}
\end{fleqn}
    The result follows by rescaling $\delta$.

    Since $s(m,\tau)$ is a decreasing function of $\tau$, and $\tau_j$ is a decreasing function of $j$, $s(m,\tau_j)$ is an increasing function of $j$. Since the algorithm terminates at some pass $i$ with $i\le j^*$, the expected space usage of $\logpass$ is upper bounded by $s(m,\tau_{j^*}) = \wt{O}\rbk{\varepsilon^{-2}\delta^{-1}\rbk{\frac{m + T}{\sqrt{\tau_{j^*}}}}} = \wt{O}\rbk{\varepsilon^{-2}\delta^{-1}\rbk{\frac{m + T}{\sqrt{T}}}}$ where the last equality is since $\tau_{j^*} \le T \le 2\tau_{j^*}$.
\end{proof}

The choice to decrease our guess by a factor of $2$ with each pass is quite arbitrary, and results in the logarithmic number of passes. One could ask what would happen if we were to use a larger factor decrease $r$ in order to get at most $c = O(1)$ passes? That is, we look for $r$ such that we can set $\tau_j = r^{-j}m^{3/2}$ and $r^{-c}m^{3/2} = 1$. Consequently, $r = m^{3/2c}$. The correctness of the algorithm with geometrically decreasing guesses with a factor of $r$ goes through just the same, however there is a cost in terms of space complexity: Suppose that $T \le m$. Then $s(m,\tau) = \wt{O}(m/\sqrt{\tau})$. On pass $j$ where $\tau_j \le T \le r\tau_j$ we terminate, and use expected space $s(m,\tau_j) \le s(m,T/r) = \wt{O}(m/\sqrt{T/r}) = \wt{O}(m^{1+3/4c}/\sqrt{T})$. That is, the reduced number of passes comes at a cost of space blowup; We wish to maintain a small space complexity - even at the cost of more than constant passes, or a worse approximation.

\subsection{Segmenting a Stream}
\label{subsec:stream segment}
The aforementioned space blowup illustrates the need to search for a different approach.  A first attempt is to partition a prefix of a randomly-ordered stream into $1/p$ segments $\langle S_1 \mid S_2 \mid \cdots \mid S_{1/p}\rangle $ where each edge $e\in S_i$ with probability $p$. Deciding on a \emph{fixed} size for each segment $m_i$ makes the analysis difficult, as each segment is a sample of $m_i$ edges from $G$ \emph{without replacement}. To circumvent this issue, we sample the set of $1/p$ lengths from a \emph{multinomial} distribution. This solution is captured by the following lemma:
\begin{lemma} Let $S$ be a length $m$ random-order stream of the edge set $E$. Sample $(m_1, \ldots, m_{1/p})\sim \Mult(m;p, \ldots, p)$, and let $\langle S_1 \mid S_2 \mid \cdots \mid S_{1/p}\rangle$ be a segmentation of $S$ into $1/p$ segments, where $|S_j| = m_j$. Then for each $F\subset E$ with $|F| = k$, $\prob{S_i = F} = p^k(1-p)^{m-k}$.
    
\end{lemma}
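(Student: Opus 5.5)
We condition on the segment lengths and then average over the multinomial distribution. Fix a segment index $i$ and a set $F\subseteq E$ with $|F|=k$. The segmentation is determined by the vector $(m_1,\ldots,m_{1/p})$, which is sampled independently of the random permutation. Hence $S_i$ is the set of edges occupying the contiguous positions $m_1+\cdots+m_{i-1}+1,\ldots,m_1+\cdots+m_i$ of the stream. Conditioned on $m_i=k$, a uniformly random permutation places a uniformly random $k$-subset of $E$ in any fixed block of $k$ positions. The position of the block, which depends on $m_1,\dots,m_{i-1}$, does not matter. So
\[
\Pr[S_i=F \mid m_1,\ldots,m_{1/p}] = \binom{m}{k}^{-1}\mathbf{1}[m_i=k],
\]
and therefore $\Pr[S_i=F]=\binom{m}{k}^{-1}\Pr[m_i=k]$.

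The second step computes the marginal of one coordinate of a multinomial. Each coordinate is binomial, $m_i\sim\mathrm{Bin}(m,p)$. To see this, view $\Mult(m;p,\ldots,p)$ as throwing $m$ balls independently into $1/p$ bins, each with probability $p$; bin $i$ then receives each ball independently with probability $p$. Thus
\[
\Pr[m_i=k]=\binom{m}{k}p^k(1-p)^{m-k}.
\]
Multiplying by $\binom{m}{k}^{-1}$ gives $\Pr[S_i=F]=p^k(1-p)^{m-k}$, as claimed. Equivalently, $S_i$ has the law of a set containing each edge independently with probability $p$.

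For this to make sense, the statement must assume that $1/p$ is an integer and that the probabilities $p$ sum to $1$, so the segments partition the whole stream. If the paper intends a prefix with $1/p$ segments and a leftover, the same argument applies verbatim. In that case one uses a multinomial with an extra ``remainder'' cell, and $m_i$ is still $\mathrm{Bin}(m,p)$.

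The only delicate point is the first step: justifying that conditioning on all lengths, not just $m_i$, leaves $S_i$ uniform among $k$-subsets. This requires that the lengths be independent of the permutation, which holds by construction. No real obstacle is expected beyond stating that independence carefully.
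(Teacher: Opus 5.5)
Your proposal is correct and follows the same route as the paper, which also writes $\Pr[S_i=F]$ as $\Pr[m_i=k]=\binom{m}{k}p^k(1-p)^{m-k}$ times $\binom{m}{k}^{-1}$, the probability that a uniformly random $k$-subset equals $F$. Your explicit conditioning on the full length vector, using that the lengths are sampled independently of the permutation, justifies a step the paper's one-line proof uses without comment.
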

\begin{proof}
 The probability $S_i=F$ is
\[
\prob{m_i=k}\cdot  \prob{\text{random subset of }k \text{ elements equals } F}=\binom{m}{k} p^k (1-p)^{m-k} \frac{1}{\binom{m}{k}} = p^k (1-p)^{m-k}.
\]
\end{proof}
That is, the distribution of $S_i$ is the same as a sampled set of edges from $E$ where each edge is added to $S_i$ independently with probability $p$.
We use this argument repeatedly. Therefore, while the segments are simply contiguous parts of a random order stream, we refer to them as sets obtained by independently sampling edges with the appropriate probability. We face two problems:
(i) Different segments of the stream may not be representative of the entirety of the stream, and (ii) We need to ensure that we advance to a sufficiently accurate guess before we run out of data.

The first challenge is less of a concern for randomly ordered streams, and is handled by the following lemma:
\begin{lemma}[Lemma \ref{lem:conc}, informal] Let $G$ be a graph with $n$ vertices and $T$ triangles, and let $T_p$ be the number of triangles in the graph formed by sampling each edge independently with probability $p$. Then with probability at least $1-1/n^4$, $T_p$ is within $\pm\varepsilon p^3\max(\tau,T)$ of $p^3T$, where $\tau = \wt{\Omega}(\varepsilon^{-2}p^{-1}\eta + \varepsilon^{-2}p^{-5})$ and $\eta \ge \Delta_e$.
    
\end{lemma}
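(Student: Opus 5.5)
We want concentration of $T_p$, the number of triangles surviving independent edge sampling at rate $p$. Write $T_p=\sum_{\Delta} X_\Delta$ over the $T$ triangles, with $X_\Delta=\prod_{e\in\Delta}\mathbf 1[e \text{ sampled}]$, so $\mathbb E[T_p]=p^3T$. We aim for a deviation of $\varepsilon p^3\max(\tau,T)$ with failure probability $n^{-4}$. Plain Chebyshev only gives constant failure probability, so we plan to use a bounded-differences / polynomial concentration argument of Kim--Vu type, or a Freedman-type martingale inequality.

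\textbf{Step 1 (martingale).} Expose the edges one at a time in a fixed order $e_1,\dots,e_m$ and let $Y_i=\mathbb E[T_p\mid \mathbf 1[e_1],\dots,\mathbf 1[e_i]]$. The increment $Y_i-Y_{i-1}$ equals $(\mathbf 1[e_i]-p)$ times the conditional expected number of triangles through $e_i$ that survive given the revealed edges. Call this quantity $D_i$. Unconditionally $\mathbb E[D_i]\le p^2 t_{e_i}$. The trouble is that $D_i$ can be as large as $t_{e_i}\le\eta$ when the revealed partner edges happen to be sampled. We therefore use Freedman's inequality: with probability $1-n^{-5}$, the deviation is at most $O(\sqrt{V\log n}+M\log n)$, where $V=\sum_i p(1-p)D_i^2$ is the predictable variance and $M$ bounds each increment.

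\textbf{Step 2 (control $M$ and $V$ on a good event).} Define the good event $\mathcal E$: for every edge $e$, the number of sampled wedges completing $e$ is at most $O(p\,\eta+\log n)$ when one partner is sampled, and $O(p^2\eta+\log n)$ when both are. These counts are sums of degree-one or degree-two indicators across disjoint wedges, so Chernoff applies, plus a union bound over the $\le n^2$ edges. On $\mathcal E$ this gives $M=O(p\eta+\log n)$. It also gives $V\lesssim p\sum_e D_e\cdot M\lesssim p\cdot M\cdot(\text{sampled-wedge mass})$, and this mass concentrates around $p^2\cdot 3T$, again by Chernoff on $\mathcal E$. Hence $V=O\bigl(p^3 T(p\eta+\log n)\bigr)$ up to the $\log n$ terms.

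\textbf{Step 3 (solve for $\tau$).} We need $\sqrt{p^3 T(p\eta+\log n)\log n}+(p\eta+\log n)\log n\le \varepsilon p^3\max(\tau,T)$. Set $\tau'=\max(\tau,T)$. The first term requires $\tau'\gtrsim\varepsilon^{-2}p^{-3}(p\eta+\log n)\log n$, which is covered by $\varepsilon^{-2}p^{-1}\eta$ together with a lower-order term. The second requires $\tau'\gtrsim\varepsilon^{-1}p^{-3}(p\eta+\log n)\log n$, which is dominated by the same bounds. We will give ourselves slack through the $\varepsilon^{-2}p^{-5}$ term, which absorbs the crude conditioning losses and the $\log n$ additive terms, including the case of small $p$ where sampled-wedge counts are dominated by $\log n$ rather than $p\eta$.

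\textbf{Main obstacle.} The hardest part is the conditioning in Step 2. Freedman's inequality needs bounded increments almost surely, but our bounds hold only on $\mathcal E$. We plan to handle this by truncating the martingale at the first time $\mathcal E$ is violated (a stopped martingale) and charging $\Pr[\neg\mathcal E]\le n^{-5}$ to the failure probability. If that bookkeeping gets messy, the fallback is a high-moment bound: compute $\mathbb E[(T_p-p^3T)^{k}]$ for $k=\Theta(\log n)$ by counting connected configurations of triangles. The $p^{-5}$ term matches configurations of two triangles sharing a vertex, with 5 edges.
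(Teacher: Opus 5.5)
\textbf{Gaps in Steps 2--3.} Your architecture matches the paper's: a good event on which every edge has only $O(p^2t_e+\log n)$ sampled wedges, plus a bounded-differences inequality that needs the Lipschitz bound only on that event. The paper applies Warnke's typical bounded differences inequality directly to $f(X_1,\dots,X_{m'})$, with constants $c_i=2p^2t_i+30\ln n$ on the event and $t_i$ off it. Your Doob martingale with Freedman and stopping would be a legitimate elementary substitute, since $D_i$ is predictable. But Steps 2--3 as written do not give the stated $\tau$, for two reasons.

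First, $M=O(p\eta+\log n)$ is too weak. A triangle on $e_i$ with one partner revealed and sampled and the other unrevealed contributes $pX_f$ to $D_i$, not $X_f$. So $D_i\le Y_{e_i}+pZ_{e_i}+p^2t_{e_i}=O(p^2t_{e_i}+\log n)$, where $Y_e$ and $Z_e$ count triangles on $e$ with both, respectively at least one, partner sampled. With your $M$, Step 3 actually needs $\tau'\gtrsim\varepsilon^{-2}p^{-3}\cdot p\eta\log n=\varepsilon^{-2}p^{-2}\eta\log n$. The claim that this is ``covered by $\varepsilon^{-2}p^{-1}\eta$'' is an arithmetic slip, and $p^{-2}\eta$ is not absorbed by $\varepsilon^{-2}p^{-5}$ when $\eta\gg p^{-3}$. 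Likewise, the $M\log n$ term forces $\tau'\gtrsim\varepsilon^{-1}p^{-2}\eta\log n$, which exceeds $\varepsilon^{-2}p^{-1}\eta$ whenever $p<\varepsilon$.

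Second, the variance step $V\lesssim pM\sum_iD_i$, with ``$\sum_iD_i$ concentrates around $3p^2T$ by Chernoff'', is unjustified. Write $a,b$ for the first two exposed edges of $\Delta$; then $\sum_iD_i=\sum_\Delta(p^2+pX_a+X_aX_b)$ is a degree-two polynomial, so Chernoff does not apply. It also does not concentrate around $3p^2T$: for a single book of $T$ triangles whose spine is exposed first, with probability $p$ it is about $2pT$.

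\textbf{The fix.} Avoid the mass altogether and use deterministic per-edge caps $c_i=O(p^2t_{e_i}+\log n)$. Only the $m'\le 3T$ edges lying in triangles have nonzero increments, and $\sum_it_{e_i}^2\le\eta\sum_it_{e_i}=3\eta T$. Hence
\[
V\le p\sum_ic_i^2=O\bigl(p^5\eta T+pT\log^2 n\bigr).
\]
Plugging this and $M=\max_ic_i$ into Freedman gives exactly $\tau\asymp\varepsilon^{-2}p^{-1}\eta\log n+\varepsilon^{-2}p^{-5}\log^3 n$, which is the paper's computation. In particular, the $p^{-5}$ term comes from the additive $\log n$ slack in $c_i$ summed over up to $3T$ edges. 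It does not come from two-triangle configurations, as your fallback suggests: two triangles sharing an edge span five edges and produce the $p^{-1}\eta$ term in the variance, and two sharing only a vertex span six edges and are independent.
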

That is, as long as we have an upper bound on $\Delta_e$, the number of triangles in each segment is sufficiently concentrated around its mean. There is a natural upper bound of $n$ on $\Delta_e$ which leads us to a preliminary result: A one pass algorithm that either outputs an approximation of $T$, or declares that $T$ is upper bounded by $\wt{O}(n)$.

\input{algorithms/onepass}

Note that $t$ is the smallest integer satisfying $2^{-t}m^{3/2} \le \tau^*$, so $2^{-(t-1)}m^{3/2} > \wt{\Omega}(\varepsilon^{-2}n) \ge 1$, therefore $2^{t-1} \le m^{3/2}$ which implies $t = O(\log(n))$. 

To fix notation, for each $j\in [t]$, we call $S_j$ a \emph{segment} of $S$, and let $T_j$ be the number of triangles in the $j$th segment. We say that a segment is \emph{representative} if $T_j\in (1\pm\varepsilon)t^{-3}\max(T,\tau^*)$. In particular, when $T \ge \tau^*$, a representative segment satisfies $T_j\in(1\pm\varepsilon)t^{-3}T$.

\begin{claim} \label{claim:warmup algorithm}
    If $\warmup$ terminates and outputs a numerical value, then with probability at least $1-\delta - 1/\poly(n)$ it is a $(1\pm\varepsilon)^2$ approximation of $T$. 
\end{claim}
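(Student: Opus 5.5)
We have not seen the algorithm $\warmup$ itself, since it sits in the input file algorithms/onepass. We assume its natural form. The stream is cut into $t$ segments $S_1,\dots,S_t$ with multinomial lengths, each distributed as an independent $p=1/t$ edge sample. On segment $j$, $\Aa$ is run with guess $\tau_j$ proportional to $t^{-3}2^{-j}m^{3/2}$, i.e.\ scaled to the segment. The first segment on which $\Aa$ returns a number $X_j$ causes output $t^3 X_j$. Otherwise the algorithm declares $T\le \tau^*=\wt{O}(\varepsilon^{-2}n)$.

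The plan is to intersect two good events and show that, on both, any numerical output is a $(1\pm\varepsilon)^2$ approximation. The first event $\mathcal{E}_1$ is that every segment is representative, i.e.\ $T_j\in(1\pm\varepsilon)t^{-3}\max(T,\tau^*)$ for all $j\in[t]$. By the segmentation lemma, each $S_j$ is an independent $p=1/t$ edge sample. The concentration lemma (Lemma~\ref{lem:conc}) applies with $\eta=n\ge\Delta_e$, since $\tau^*=\wt{\Omega}(\varepsilon^{-2}n)$ was chosen to dominate $\varepsilon^{-2}p^{-1}\eta+\varepsilon^{-2}p^{-5}$ with $p=1/t=1/O(\log n)$. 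Each segment then fails with probability at most $n^{-4}$, and a union bound over $t=O(\log n)$ segments gives $\prob{\mathcal{E}_1}\ge 1-1/\poly(n)$. The second event $\mathcal{E}_2$ is that every invocation of $\Aa$ is successful in the sense of Theorem~\ref{thm:bb}. Each run fails with probability at most $\delta+1/\poly(n)$, so a union bound over the $O(\log n)$ runs, after rescaling $\delta$ as in the $\logpass$ claim, gives $\prob{\mathcal{E}_2}\ge 1-\delta-1/\poly(n)$.

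Now condition on $\mathcal{E}_1\cap\mathcal{E}_2$ and suppose output $t^3X_j$ comes from segment $j$. By $\mathcal{E}_2$, $X_j\in(1\pm\varepsilon)T_j$ because a number, not \fail, was returned. If $T\ge\tau^*$, $\mathcal{E}_1$ gives $T_j\in(1\pm\varepsilon)t^{-3}T$, so $t^3X_j\in(1\pm\varepsilon)^2T$.

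The main obstacle is the case $T<\tau^*$. There representativeness only gives $T_j\in(1\pm\varepsilon)t^{-3}\tau^*$, which is not a relative guarantee on $T$. Here I would use property~\ref{property: algorithm test}: a numerical output means $X_j\ge(1-\varepsilon)\tau_j$. Because $\tau_j\ge t^{-3}\tau^*$ for every guess actually run, the scaled guesses never go below $\tau^*$. Combining this with property~\ref{property: algorithm upper bound}, $X_j\le(1+\varepsilon)T_j+\varepsilon\tau_j$, and with $T_j\le(1+\varepsilon)t^{-3}\tau^*$, a tiny $T$ would force $X_j$ below the test threshold. So we should argue that this case yields \fail, or else that $T_j$ is close to $t^{-3}T$ anyway. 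If the exact constants do not line up, I would fall back on the direct bound $T_j\in p^3T\pm\varepsilon p^3\max(\tau^*,T)$. Passing the test forces $T_j\gtrsim(1-O(\varepsilon))t^{-3}\tau^*$, hence $T\ge(1-O(\varepsilon))\tau^*$ and $\max(T,\tau^*)\le(1+O(\varepsilon))T$. This again gives a $(1\pm O(\varepsilon))^2$ approximation, and rescaling $\varepsilon$ finishes the proof.
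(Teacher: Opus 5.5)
Your argument is correct, but it is organized differently from the paper's. The paper splits on whether the terminating unscaled guess $\tau_j=2^{-j}m^{3/2}$ is at most $T$. If so, representativeness makes $t^{-3}\tau_j$ essentially a lower bound on $T_j$, so $\Aa$'s output is accurate. If not, it combines the test (i), the upper bound (ii) and Lemma~\ref{lem:conc} to show by contradiction that $\tau_j/2\le T\le\tau_j$, and then sandwiches $X$ directly.

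You instead split on $T$ versus $\tau^*$ and use the ``estimate or \fail'' meaning of a successful run. On $\mathcal{E}_2$ every numerical output already lies in $(1\pm\varepsilon)T_j$, so all that remains is to turn the additive error of Lemma~\ref{lem:conc} into a relative one. That is immediate when $T\ge\tau^*$. When $T<\tau^*$, it follows because passing the test forces $T=\Omega(\tau^*)$. Property (ii) is not even needed, since on $\mathcal{E}_2$ it holds automatically for numerical outputs.

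Your route buys two things. First, it never needs the passed parameter to be a genuine lower bound on $T_j$. The paper's first case only certifies $T_j\ge(1-2\varepsilon)t^{-3}\tau_j$, slightly below the parameter $(1-\varepsilon)t^{-3}\tau_j$ actually passed, and really goes through via the same guarantee you invoke. Second, your unconditional union bound over $\mathcal{E}_1,\mathcal{E}_2$ is cleaner than the paper's step of bounding $\Aa$'s success probability conditioned on $\Rr$. In exchange, the paper's second case explicitly yields $\tau_j/2\le T\le\tau_j$, which is reused for the space bound (cf.\ Corollary~\ref{cor:TL theta tauj}). As written, yours records only $T=\Omega(\tau^*)$.

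Two small repairs are needed.

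First, commit to your fallback. The ``this case yields \fail'' branch is false in general: for $T$ slightly below $\tau^*$ a numerical output is legitimate. Also, $T_j\le(1+\varepsilon)t^{-3}\tau^*$ carries no information about $T$. What you need is the additive form $|T_j-t^{-3}T|\le\varepsilon t^{-3}\max(T,\tau^*)$ of Lemma~\ref{lem:conc}. The form ``$T_j\in(1\pm\varepsilon)t^{-3}\tau^*$'' fails, e.g.\ for $T=0$.

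Second, in \warmup{} the parameter passed on segment $j$ is $(1-\varepsilon)t^{-3}2^{-j}m^{3/2}$, and for $j=t$ one only has $2^{-t}m^{3/2}\gtrsim\tau^*/2$. So your assumption that no guess falls below $t^{-3}\tau^*$ is off by about a factor $2$. This only costs constants. The test then gives $T\ge(1/2-O(\varepsilon))\tau^*$, hence $\max(T,\tau^*)\le(2+O(\varepsilon))T$, and still $t^3X_j\in(1\pm O(\varepsilon))^2T$. Rescaling $\varepsilon$ then finishes exactly as in the paper.
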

\begin{remark}
    Note that when $\varepsilon$ is small enough, $(1\pm\varepsilon)^2 \approx (1\pm 2\varepsilon)$, so rescaling $\varepsilon$ results in the desired approximation.
\end{remark}

\begin{proof}
    Since we assume $\warmup$ terminated with a numerical value it must terminate when $\Aa(S_j,(1-\varepsilon)\tau_j t^{-3})$ outputs some estimate of $T_j$. Let that estimate be $X$. We need to show that $X \in (1\pm\varepsilon)^2t^{-3}T$. There are two cases to check: The termination happened when $\tau_j$ was a lower bound on $T$, or $\tau_j$ was not a lower bound on $T$, yet $\Aa$ returned an estimate. 


    Suppose $T \ge \tau_j$ when $\warmup$ terminates. By minimality of $t$, $\tau_j \ge \tau_t \ge \tau^*/2$, so $T\ge \tau^*/2$ and $\max(T,\tau^*)\le 2T$. Hence, if $S_j$ is representative, $T_j \ge (1-2\varepsilon)t^{-3}T \ge (1-2\varepsilon)t^{-3}\tau_j$. That is, $(1-2\varepsilon)t^{-3}\tau_j$ is a lower bound on $T_j$. Hence the output $X$ satisfies $$X\in (1\pm\varepsilon)T_j \subseteq (1\pm\varepsilon)(1\pm2\varepsilon) t^{-3}T$$ as claimed.

    Alternatively, suppose $T < \tau_j$ and $\warmup$ still terminated. Then $\Aa$ outputs some $X$, even though $(1-\varepsilon)\tau_j t^{-3}$ is not necessarily a lower bound on $T_j$. We claim that $X$ is still an approximation of $T_j$. We show that by demonstrating that $\tau_j$ can't be far from a lower bound on $T$, and therefore not far from a lower bound on $T_j$: Suppose by way of contradiction that $T < \tau_j/2$. Since $\Aa(S_j)$ produced an output $X$, by property \ref{property: algorithm test} it must satisfy
    \[
    (1-\varepsilon)(1-\varepsilon)t^{-3}\tau_j \le X.
    \]
    However, 
    \begin{align*}
        X &\le (1+\varepsilon)T_j + \varepsilon(1-\varepsilon)\tau_jt^{-3} &\text{By property \ref{property: algorithm upper bound}}\\
        & \le (1+\varepsilon)t^{-3}\sbk{T + \varepsilon\max(T,\tau^*)} + \varepsilon t^{-3}\tau_j &\text{By Lemma \ref{lem:conc}}\\
        & \le (1+\varepsilon)t^{-3}\sbk{\frac{\tau_j}{2} + \varepsilon \tau_j + \varepsilon\tau_j}& \text{By $T\le \tau_j/2$ and $\tau^*\le \tau_j$}\\
        & \le (1+\varepsilon)\rbk{\frac{1}{2} + 2\varepsilon} t{^-3}\tau_j \\
        & \le \frac{3}{4}\rbk{1+\frac{1}{2}\varepsilon}t^{-3}\tau_j
    \end{align*}
    where the last inequality holds when $\varepsilon < 1/10$.
    Therefore
    \[
    (1-\varepsilon)(1-\varepsilon)t^{-3}\tau_j \le X \le \frac{3}{4}\rbk{1+\frac{1}{2}\varepsilon}t^{-3}\tau_j
    \]
    Contradiction when $\varepsilon < 1/10$. Therefore, $\tau_j/2 \le T \le \tau_j$, and the output $X$ satisfies
    \[
     (1-\varepsilon)^2t^{-3}T \le (1-\varepsilon)^2t^{-3}\tau_j \le X \le (1+\varepsilon)t^{-3}[T + \varepsilon\max(T,\tau^*)] + 2\varepsilon t^{-3}T \le (1+\varepsilon)(1+4\varepsilon)t^{-3}T.
    \]
    Rescaling $\varepsilon$ gets the desired result.

    We now analyze the probability of success. Let $\Rr$ be the event that all segments are representative. By the union bound over all segments and \ref{lem:conc}, $\prob{\Rr}  \ge 1-t/n^4 \ge 1-O(\log(n)/n^4)\ge 1-1/n^3$. Let $\Mm$ be the event that the output of $\warmup$ is a $(1\pm\varepsilon)$ approximation of $T$. For each $j$, let $\Ee_{\Aa,j}$ be the event that $\Aa$ succeeds on segment $S_j$. Conditioned on $\Rr$, if $\Ee_{\Aa,i}$ hold for all $i$, then $\warmup$ succeeds. That is,
    \[
    \prob{\Mm\mid \Rr} \ge \prob{\bigcap_{i\le t}\Aa(S_i)\text{ succeeds}\mid \Rr} \ge (1-t\delta - t/\poly(n))\ge(1-O(\log(n)\delta)) - 1/\poly(n)).
    \]
    We conclude
    \[
    \prob{\Mm}\ge \prob{\Mm\mid \Rr}\cdot\prob{\Rr} \ge(1-O(\log(n)\delta)) - 1/\poly(n))(1-n^{-3}).
    \]
    Rescaling $\delta$ provides the desired result.

\end{proof}
\begin{corollary}\label{cor:warmup did not terminate}
    If \warmup~returns \fail, then with probability $1-\delta - 1/\poly(n)$, $T = \wt{O}(n)$. 
\end{corollary}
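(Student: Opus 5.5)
We argue by contraposition: assume $T \ge \tau^*$, where $\tau^* = \wt{\Theta}(\varepsilon^{-2} n)$ is the threshold used by \warmup, and show that outside an event of probability $\delta + 1/\poly(n)$ the algorithm terminates with a numerical value. Then \fail{} forces $T < \tau^* = \wt{O}(n)$ with the claimed probability. This value of $\tau^*$ comes from the concentration lemma (Lemma~\ref{lem:conc}) with $p = 1/t$ and $\eta = n \ge \Delta_e$. Since $t = O(\log n)$, the term $\varepsilon^{-2}p^{-5}$ is only polylogarithmic, so $\tau^* = \wt{O}(n)$ for fixed $\varepsilon$.

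We work on the intersection of two events. The first is $\Rr$, that every segment is representative, which has probability at least $1 - 1/n^3$ by a union bound over the $t = O(\log n)$ segments. The second is that every invocation $\Aa(S_j,(1-\varepsilon)\tau_j t^{-3})$ is successful in the sense of Theorem~\ref{thm:bb}. A union bound over $t$ invocations gives failure probability $O(\log(n)\,\delta) + 1/\poly(n)$, and we then rescale $\delta$ exactly as in the proof of Claim~\ref{claim:warmup algorithm}.

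On this good event, the guesses $\tau_j = 2^{-j}m^{3/2}$ decrease down to $\tau_t \le \tau^*$. Because $T \ge \tau^* \ge \tau_t$, there is a first index $j^* \le t$ with $\tau_{j^*} \le T$. Suppose the algorithm has not already stopped before $j^*$; an earlier stop produces a numerical output and we are done. Since $S_{j^*}$ is representative and $T \ge \tau^*$, we get $T_{j^*} \ge (1-\varepsilon)t^{-3}T \ge (1-\varepsilon)t^{-3}\tau_{j^*}$. So the parameter passed to $\Aa$ is a genuine lower bound on the triangle count of that segment. By Theorem~\ref{thm:bb}, a successful run then outputs a $(1\pm\varepsilon)$-approximation $X$ of $T_{j^*}$ rather than \fail. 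This forces termination with a number, contradicting the \fail{} output.

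The main obstacle is bookkeeping rather than an idea. We must check that the lower-bound test $X \ge (1-\varepsilon)\cdot(1-\varepsilon)\tau_{j^*}t^{-3}$ in property~\ref{property: algorithm test} is actually passed, which follows from $X \ge (1-\varepsilon)T_{j^*}$ together with the bound on $T_{j^*}$ above. We must also confirm that the parameters of Lemma~\ref{lem:conc} (with $p = 1/t$, using the multinomial segmentation lemma so that segments behave like independent $p$-samples) indeed give $\tau^* = \wt{O}(n)$. The constants in $\varepsilon$ are absorbed by rescaling.
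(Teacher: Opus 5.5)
Your proposal is correct and follows the paper's argument: by contraposition, $T \ge \tau^*$ guarantees some guess $\tau_j \le T$. On the good event (all segments representative and all runs of $\Aa$ successful), the run at the first such index receives a valid lower bound on $T_{j}$ and so cannot return \fail. You spell out the step the paper compresses into a citation of Claim~\ref{claim:warmup algorithm}, which on its own only certifies correctness of a numerical output, not that one is produced. Your version matches what the paper does explicitly later in Corollary~\ref{cor: loglog L is small}.
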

\begin{proof}
    Suppose $T \ge \tau^*$. Then there exists some $\tau_j$ such that $T \ge \tau_j$. By \ref{claim:warmup algorithm}, $\warmup$ terminates and outputs an approximation of $T$. Equivalently, if $\warmup$ did not output an estimate, then $T < \tau^* = \wt{O}(n)$.
\end{proof}

%% file: algorithms/logpass.tex
\begin{figure}[!htb]
\begin{center}
\fbox{
\begin{minipage}{0.95\textwidth}
{
\noindent {\logpass}\\
\textbf{Input:} A random order stream $S$ with $|S| = m$, a verified algorithm $\Aa$.
\begin{itemize}
    \item Let $t = \lceil\log(m^{3/2})\rceil = O(\log(n))$.
    \item At pass $i\in [t]$, set $\tau_i = 2^{-i}m^{3/2}$ and run $\Aa(S,\tau_i)$. If it returns $\fail$, continue to the next pass. Otherwise, return its output as the estimate.
    \item If all passes return $\fail$, output $0$.
\end{itemize}
}
\end{minipage}
}
\end{center}
\caption{An $O(\log(n))$-pass parameter free triangle counting algorithm}
\end{figure}

%% file: algorithms/onepass.tex
\begin{figure}[!htb]
\begin{center}
\fbox{
\begin{minipage}{0.95\textwidth}
{
\noindent {\warmup}\\
\textbf{Input:} A random order stream $S$ with $|S| = m$.
\begin{enumerate}[start=0]
    \item {\em Preprocessing: }
\begin{itemize}
\item Let $t$ be the smallest integer such that $2^{-t}m^{3/2} \le c\epsilon^{-2} (n \ln(n)t +\ln^3(n)t^5)$ and set $\tau^* = c\epsilon^{-2} (n \ln(n)t +\ln^3(n)t^5)$.
\item Sample $(m_1, \ldots, m_t)\sim \Mult(m;1/t, \ldots, 1/t)$ .
\item Segment the length-$m$ stream into $\langle S_1 | S_2 | \ldots | S_{t}  \rangle$ where $|S_j|=m_j$ for $j\in [t]$. 
\end{itemize}
\item {\em One Pass:} For each $j\in [t]$, run $\Aa(S_j,(1-\varepsilon)t^{-3}\tau_j)$ where $\tau_j = 2^{-j}m^{1.5}$. Output its estimate scaled by $t^3$ if the output is not \fail. Otherwise, continue to the next segment. 
\end{enumerate}
If none of the outputs were an estimate, return $\fail$.
}
\end{minipage}
}
\end{center}
\caption{A one-pass parameter free triangle counting algorithm that either produces a $(1\pm\varepsilon)$ approximation of $T$, or guarantees that $T = \wt{O}(n)$.}  \label{fig:warmuponepass}
\end{figure}

%% file: mixed-bound-formulation.tex
\section{$p$ Pass Algorithm}

In this section, we prove the main theorem below.

\begin{theorem}[$p$-pass Mixed Approximation Algorithm]\label{them:loglogn algorithm}
    Let $S$ be a stream of edges representing a simple graph $G$ with $n$ vertices, $m$ edges and $T$ triangles. Then there exists an $O(p)$ pass algorithm that uses expected $\wt{O}\rbk{\delta^{-1}\varepsilon^{-3.5}\rbk{\frac{m + T}{\sqrt{T}}}}$ space, and with probability $1-\delta - 1/\poly(n)$ outputs a mixed $(1+\varepsilon, \widetilde{O}(n^{1/2^p}))$ approximation of $T$. The algorithm does not require any prior knowledge of $T$.
\end{theorem}
\begin{remark}
    If $T < m$, the space bound simplifies to $\wt{O}\rbk{\delta^{-1}\varepsilon^{-3.5}\rbk{m/\sqrt{T}}}$.
\end{remark}
\begin{remark}
    In fact, our algorithm may terminate in fewer than $O(p)$ passes, in which case we are guaranteed a $1+\varepsilon$ multiplicative approximation.
\end{remark}

Before formalizing the algorithm, we briefly outline the intuition for the algorithm and the analysis. At a high level, our algorithm relies on dynamically changing our space budget based on estimation results for smaller space used earlier in the stream: the failure of earlier estimation subroutines using smaller space directly leads to a larger space budget for subsequent passes. Recall that to achieve the target space bound of $\widetilde{O}(m/\sqrt{T})$, an algorithm is permitted to use more space precisely when $T$ is small. When the initial one-pass algorithm fails to produce an estimate, we gain crucial structural information: we establish that $T = \widetilde{O}(n)$. This upper bound guarantees a minimum space budget of $\widetilde{O}(m/\sqrt{n})$ for the subsequent pass. We spend this budget to isolate heavy edges, which are edges that are involved in at least $\gamma_1 = \widetilde{\Theta}(\sqrt{n})$ triangles, and count their incident triangles directly.

This leaves the set of light edges in the stream where the maximum per-edge triangle count (i.e. number of triangles incident to one edge) $\Delta_e$ is strictly bounded by $\gamma_1$. We then execute the one-pass subroutine \emph{exclusively} on these light edges, while maintaining an estimate of the triangles involving at least one heavy edge. If this round also fails to produce an estimate, we deduce that the remaining light triangle count is at most $\widetilde{O}(\sqrt{n})$, and that these triangles are a relatively large fraction of the total number of triangles. Consequently, we deduce a new upper bound on $T$ of $\widetilde{O}(\sqrt{n})$. This tighter upper bound recursively allows for an even larger space budget of $\widetilde{O}(m/n^{1/4})$, allowing us to shrink the heavy-edge threshold for the next round to $\gamma_2 = \widetilde{\Theta}(n^{1/4})$, and so on. In general, after $i$ failed rounds, the upper bound on the remaining uncounted triangles is reduced from $\widetilde{O}(n^{1/2^{i-1}})$ to $\widetilde{O}(n^{1/2^i})$. Because this upper bound shrinks by a square root in each iteration, the exponent halves, requiring only $O(\log \log n)$ rounds for the threshold to collapse to a polylogarithmic quantity. Once this occurs, keeping all remaining edges of the stream falls entirely within our permitted space budget, allowing us to compute the remaining triangles exactly and output a $(1+\epsilon)$ multiplicative approximation.



\input{algorithms/loglogpass_mixed}

\subsection{Correctness}
The proof of correctness is organized the following way: We first provide two probabilistic arguments that we use repeatedly: (i) concentration guarantees for the number of triangles in a given segment of the stream, and (ii) a success probability for distinguishing heavy and light edges. We then proceed to prove that when \ppass~outputs a numerical value during round $i\le p$, it is a multiplicative approximation of $T$. Finally, we show that the final pass outputs the correct mixed approximation of $T$, and conclude that the output of \ppass~is correct. 

\begin{remark}\label{remark: multiplicative to mixed one-sided}
    Throughout our analysis, we assume $\varepsilon < 1/10$. Note that in that case, $(1+\varepsilon)(1-\varepsilon)^{-1} \le(1+\varepsilon)( 1+2\varepsilon) = 1+4\varepsilon$. Therefore, if $X$ satisfies $(1-\varepsilon)T\le X \le (1+\varepsilon)T$, it also satisfies
$T \le (1-\varepsilon)^{-1}X\le (1+4\varepsilon)T$. Thus, proving that $X$ is a two-sided multiplicative approximation of $T$ implies that $(1-\varepsilon)^{-1}X$ is a $(1+4\varepsilon,0)$ mixed one-sided approximation of $T$. We omit these calculations in most lemmas, and remember that rescaling $\varepsilon$ results in the desired approximation.
\end{remark}

\begin{restatable}[Concentration of Triangles]{lemma}{lemconc}\label{lem:conc} Given a simple graph on $n$ nodes with $T$ triangles, let $T_p$ be the number of triangles in the graph formed by sampling each edge in $G$ independently with probability $p$.
Then $\EE[T_p]=p^3 T$. 
Furthermore, 
\[ \Pr[|T_p-p^3 T|\geq \epsilon p^3\max(\tau,T) ]\leq 1/n^4 \  \] 
where $\tau=c(\epsilon^{-2} p^{-1}\eta \ln n+\epsilon ^{-2} p^{-5} \ln^3 n)$, $\eta \ge \Delta_e $, and $c>0$ is some sufficiently large constant.
\end{restatable}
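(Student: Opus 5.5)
The expectation is immediate. Write $T_p=\sum_{\Delta} X_\Delta$, where $X_\Delta$ indicates that all three edges of triangle $\Delta$ are sampled. Each such event has probability $p^3$, so linearity of expectation gives $\EE[T_p]=p^3T$.

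For concentration, $T_p$ is a degree-3 polynomial in the independent edge indicators $\{Y_e\}_{e\in E}$. I would apply a Kim--Vu type polynomial concentration inequality, or equivalently a moment/martingale bound in the style of Janson's inequality. The parameter $\tau$ has two terms, which suggests controlling the relevant \emph{partial-derivative expectations}. Let $E_j$ be the maximum, over sets $A$ of $j$ edges, of the expected number of sampled triangles that contain $A$, given that the edges of $A$ are sampled.
\begin{itemize}
\item $E_0=p^3T$.
\item For $j=1$: fixing one edge $e$ leaves $t_e\le\eta$ triangles, each needing two more sampled edges, so $E_1\le p^2\eta$.
\item For $j=2$: two edges determine at most one triangle, so $E_2\le p$.
\item $E_3\le 1$.
\end{itemize}
Kim--Vu then gives deviations of order $\sqrt{E_0E_1'}\,\lambda^{3}$ with $\lambda=\Theta(\ln n)$, where $E_1'=\max(E_1,E_2,E_3)\le \max(p^2\eta,1)$.

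I would rather follow a cleaner route through the variance and a Bernstein/Freedman martingale argument. Expose the edges one at a time; the Doob martingale then has increments controlled by the sampled triangle count through a single edge. Its predictable variance is at most roughly $\sum_e p\,(p^2t_e)^2+\ldots$. This is bounded by $p^5T\eta$ for the dominant term, with lower-order terms $p^4T$, $p^3T$ coming from pairs of triangles sharing two edges or coinciding. The target deviation is $\epsilon p^3 M$ with $M=\max(\tau,T)$. Squaring it and comparing with the variance term $p^5\eta T\ln n$ requires $\epsilon^2p^6M^2\gtrsim p^5\eta M\ln n$, i.e.\ $M\gtrsim \epsilon^{-2}p^{-1}\eta\ln n$. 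This is exactly the first term of $\tau$, and using $M\ge T$ allows replacing $T$ by $M$ in the variance.

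The main obstacle is that the martingale increments are not uniformly bounded by a small quantity. A single sampled edge can carry up to $\eta$ triangles only when its neighbours are also sampled, so the worst-case increment is large. I would handle this with a typical-event truncation. With probability $1-n^{-5}$, every edge $e$ has sampled-triangle degree (the number of triangles through $e$ whose other two edges are sampled) at most $O(p^2\eta+\ln n)$. Likewise, every vertex pair has at most $O(p\cdot\text{codegree}+\ln n)$ sampled wedges. Both bounds follow from Chernoff bounds, applied at the second level for wedges, which is where the extra logarithmic factors enter. On this event the increments are bounded by $b=O(p^2\eta+\ln^2 n)$. Freedman's inequality then gives a deviation of order $\sqrt{\mathrm{Var}\cdot\ln n}+b\ln n$. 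The term $b\ln n\le \epsilon p^3M$ requires $M\gtrsim \epsilon^{-1}p^{-3}\ln^3 n$ plus the $\eta$ term. The residual lower-order variance contributions, such as $p^3T\ln n$, require $M\gtrsim\epsilon^{-2}p^{-3}\ln n$. Both requirements are absorbed by the second term $\epsilon^{-2}p^{-5}\ln^3 n$ of $\tau$, which is deliberately generous. Choosing $c$ large makes the total failure probability at most $1/n^4$.

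If the truncation bookkeeping becomes messy, the fallback is to cite the Kim--Vu inequality directly. With $\lambda=\Theta(\ln n)$ and the $E_j$ bounds above, its conclusion reduces to the same two conditions on $M$, up to constants.
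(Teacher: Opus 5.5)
Your main route is essentially the paper's. The paper applies Warnke's typical bounded differences inequality (Bernstein form) to $T_p$ with the typical event $Y_e\le c_e:=2p^2t_e+30\ln n$ for every edge $e$. Here $Y_e\sim\mathrm{Bin}(t_e,p^2)$ counts the sampled wedges closing $e$, and the variance proxy is bounded by $p\sum_e c_e^2$. Your Doob martingale with Freedman and truncation is a hand-rolled version of that theorem, and for this multilinear $f$ it is genuinely elementary. The increment at edge $k$ is $(X_k-p)W_k$ with $W_k=\mathbb{E}[\partial_k T_p\mid\mathcal{F}_{k-1}]$ predictable. So you can cap $W_k$ at $C(p^2t_k+\ln n)$ by a stopping argument and pay only $\Pr[\exists k:\ W_k>C(p^2t_k+\ln n)]$. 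That probability is controlled by one Chernoff bound on fully sampled wedges through $e_k$, a $\mathrm{Bin}(t_k,p^2)$ count, and one on half-sampled wedges, $p$ times a $\mathrm{Bin}(t_k,p)$ count. No second-level argument and no $\ln^2 n$ are needed.

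The accounting you sketch, however, does not land on the stated $\tau$, which has essentially no slack.

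(i) Freedman needs a bound on the predictable quadratic variation $p(1-p)\sum_k W_k^2$, not on $\mathrm{Var}(T_p)$. On the typical event this is only $\lesssim p\sum_e (p^2t_e+\ln n)^2\lesssim p^5T\eta+pT\ln^2 n$. The $pT\ln^2 n$ term forces exactly $M\gtrsim\epsilon^{-2}p^{-5}\ln^3 n$. That is the origin of the second term of $\tau$, not $b\ln n$ or a $p^3T\ln n$ term, so that term is not ``generous''. If your per-edge caps carried the $\ln^2 n$ slack of your $b$, you would need $\epsilon^{-2}p^{-5}\ln^5 n$ and miss $\tau$.

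(ii) The caps must be per edge, in terms of $t_e$. Your typical event is stated with the uniform $\eta$, and a uniform cap $b=\Theta(p^2\eta)$ gives quadratic variation $p^5T\eta^2$, which is off by a factor $\eta$. Also, two distinct triangles never share two edges, so there is no $p^4T$ term.

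(iii) The Kim--Vu fallback does not reduce to the same conditions. With degree $3$ and $\lambda=\Theta(\ln n)$, the deviation $\sqrt{EE'}\,\lambda^3$ requires $M\gtrsim\epsilon^{-2}(p^{-1}\eta+p^{-3})\ln^6 n$. That loses $\ln^5 n$ on the $\eta$ term, so this route would not prove the lemma with the stated $\tau$.
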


\begin{proof}
We may restrict our attention to the $m'\leq 3T$ edges that are involved in triangles. Label these edges $e_1,\ldots, e_{m'}$ and let $t_i$ be the number of triangles that include $e_i$. Let $X_i=1$ if the $e_i$  is sampled and $X_i=0$ otherwise. Let $Y_i$ be the number of length two paths between the endpoints of $e$ that are sampled. Let $f(X_1,\ldots,X_{m'})$ be the number of triangles sampled. Let $\Gamma$ be the event that for  $Y_i\leq c_i:=2 p^2 t_{i}+30 \ln n$ for all $i\in [m']$. Note that for all $x,x'\in \{0,1\}^m$ that only differ in position $i$,
\[
|f(x)-f(\tilde{x})|\leq \begin{cases}
c_i & x\in \Gamma \\
t_i & \mbox{otherwise} 
\end{cases}
\] 
Then setting $\gamma_i= \min(1,c_i/(t_i-c_i)$) in \cite[Theorem 4]{Warnke2016TypicalBoundedDifferences} establishes that 
\begin{eqnarray*}
& & \Pr[f(X_1,\ldots, X_{m'})\geq p^3 T+\epsilon p^3 \max( T, \tau )] \\
& \leq &   
\exp\left ( -
\frac{\epsilon^2 p^6 \max( T, \tau )^2}{2p(1-p)\sum_{i\in [m']}(2c_i)^2+2/3 \cdot \epsilon  p^3 \max( T, \tau )\max_i (2c_i)}
\right) + (1-\Pr[\Gamma])\sum_i \frac{1}{\gamma_i} 
\end{eqnarray*}
To bound the second term, first note that by an application of the Chernoff bound
\[
\Pr[Y_i>c_i]=\Pr[Y_i>\EE[Y_i]+(1+\frac{30\ln n}{p^2 t_i})\EE[Y_i]]
\leq \begin{cases}
 \exp(-p^2 t_i/3) & \mbox{ if $p^2 t_i\geq 30\ln n$}\\ 
 \exp(-(30 \ln n)/3) & \mbox{ if $p^2 t_i\leq 30\ln n$}
\end{cases}
\]
and is at most $n^{-10}$ in either case. Hence, we can bound the second term using the  union bound as follows:
\[
(1-\Pr[\Gamma])\sum_{i\in [m']} \frac{1}{\gamma_i}
\leq 3T \sum_{i\in [m']} n^{-10} \leq n^{-6}
\]

To bound the first term, note that by using the inequality that $(x+y)^2 \leq 2x^2+2y^2$,
\begin{eqnarray*}
 & & 2p(1-p)\sum_{i\in [m']}(2c_i)^2+2/3 \cdot \epsilon  p^3 T \max_i (2c_i)\\ 
& \leq & 8 p\sum_{i\in [m']}(2 p^4 t_{i}^2+1800 \ln^2 n)+4/3 \cdot \epsilon  p^3 \max( T, \tau ) \max_i c_i\\ 
&= & 
16p\sum_{i\in [m']}p^4 t_{i}^2+14400 p\sum_{i\in [m']}  \ln^2 n+4/3 \cdot \epsilon p^3 \max( T, \tau)\max c_i\\
&\leq & 
48 p^5 T \eta  + 43200 pT \ln^2 n+4/3 \cdot  \epsilon  p^3 \max( T, \tau ) \max c_i   
\\
&\leq & 
\max( T, \tau ) \cdot (48 p^5 \eta  + 43280 p \ln^2 n+8/3 \cdot  \epsilon p^5 \eta)   \ 
\end{eqnarray*}
Hence, 
\begin{eqnarray*}
\Pr[f(X_1,\ldots, X_{m'})\geq (1+\epsilon)p^3 T] &\leq & \exp \left (
- \frac{\epsilon^2 p^6 \max( T, \tau^* )}{\Theta(p^5 \eta  + p\ln^2 n+ \epsilon  p^5 \eta )}
\right )+n^{-6}
\end{eqnarray*}
The lower tail bound follows identically.
\end{proof}
\begin{remark}
    In particular, when $T \ge \tau$, with probability $\ge 1-n^{-4}$, $T_p\in (1\pm\varepsilon)p^3T$.
\end{remark}
We proceed to show the guarantees of the oracle used in pass $3i$.
\begin{restatable}{lemma}{orcguarantees}\label{lem:oracleguarantees}
    Let $1\le \gamma \le n$ be some predetermined threshold. Let $V_p$ be a collection of vertices sampled independently with probability $p = \min(1,50\varepsilon^{-2}(\log n)/\gamma)$. Let $O$ be the set of edges incident to at least one vertex in $V_p$. Define a function
    \[
    \oracle(e)=
    \begin{cases}
    \light & \mbox{ if } t_e^{O}< p \gamma\\
    \heavy & \mbox{ otherwise}
    \end{cases}
    \]
    
    With probability $(1-1/n^3)$, if $\oracle(e) = \heavy$ then $t_e^O = (1\pm\varepsilon)t_e p$, and if $\oracle(e)=\light$ then $t_e\le 2\gamma$.
\end{restatable}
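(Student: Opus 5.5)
Fix an edge $e=uv$ and let $W_e$ be the set of $t_e$ vertices $w$ with $uw,vw\in E$. The plan is to show that $t_e^O$ is, up to a negligible correction, a sum of independent indicators with mean $p\,t_e$, apply Chernoff for each edge, and union bound over the at most $n^2$ edges. If $p=1$ then $O=E$ and $t_e^O=t_e$, so both claims are immediate: $\heavy$ gives equality, and $\light$ gives $t_e<\gamma\le 2\gamma$. Assume from now on that $p=50\varepsilon^{-2}(\log n)/\gamma<1$.

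\textbf{Reducing to independent indicators.} A triangle $\{u,v,w\}$ is counted in $t_e^O$ exactly when both $uw$ and $vw$ lie in $O$ (the edge $e$ itself is given, as $t_e^O$ counts triangles in $\{e\}\cup O$). This holds whenever $w\in V_p$. If instead $u\in V_p$ or $v\in V_p$, the count jumps, possibly to all of $t_e$, and this case must be handled separately. Let $Z_e=\sum_{w\in W_e}\mathbf{1}[w\in V_p]$. Then $Z_e\le t_e^O$, and on the event $u,v\notin V_p$ we have $t_e^O=Z_e$. Moreover $Z_e$ is a sum of $t_e$ independent Bernoulli$(p)$ variables with mean $p\,t_e$. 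The definition of $\oracle$ compares $t_e^O$ with $p\gamma$, so I will do case analysis on $t_e$ relative to $\gamma$.

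\textbf{Case analysis via Chernoff.} For a heavy classification, split on $t_e$.
\begin{itemize}
\item If $t_e\ge\gamma/2$, then $p\,t_e\ge 25\varepsilon^{-2}\log n$. Chernoff gives $Z_e\in(1\pm\varepsilon)p\,t_e$ with failure probability $2\exp(-\varepsilon^2 p t_e/3)\le n^{-8}$, which is the desired multiplicative accuracy.
\item If $t_e<\gamma/2$ and $t_e^O\ge p\gamma$, then $Z_e\ge p\gamma\ge 2p\,t_e$ must have exceeded its mean by a large amount. The multiplicative Chernoff bound (or the bound $\Pr[Z\ge a]\le 2^{-a}$ for $a\ge 6\EE[Z]$) makes this event probability at most $n^{-8}$, since $p\gamma=50\varepsilon^{-2}\log n$.
\end{itemize}
For a light classification: if $t_e>2\gamma$, then $Z_e\ge(1-\varepsilon)p\,t_e>p\gamma$ with high probability, forcing $\heavy$. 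So on the good event, $\light$ implies $t_e\le 2\gamma$. A union bound over at most $n^2$ edges yields failure probability at most $1/n^3$.

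\textbf{Main obstacle: endpoints in $V_p$.} The difficulty is the case where $u$ or $v$ is sampled, because then $t_e^O$ can equal $t_e$ rather than concentrate around $p\,t_e$. This only inflates $t_e^O$, so the light claim is unaffected. The heavy claim is threatened, however, and $\Pr[u\in V_p\cup v\in V_p]\approx 2p$ is not polynomially small.

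My first attempt is to reinterpret $t_e^O$ as the count of $w\in V_p\setminus\{u,v\}$ (i.e.\ sampled apex vertices), which is how the oracle is plausibly implemented and makes the argument above go through verbatim. Failing that, I would recount using only triangles whose apex is sampled, or union bound only over edges whose endpoints are unsampled and treat the others separately (their $t_e^O$ is exactly known, e.g.\ rescale by $1$ instead of $p$). I expect fixing this definitional point to be the only real subtlety; the Chernoff calculations are routine with the constant $50$.
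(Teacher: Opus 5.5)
Your Chernoff skeleton is the same as the paper's: split at $t_e=\gamma/2$, use the upper tail to exclude a $\heavy$ label when $t_e<\gamma/2$, a two-sided bound when $t_e\ge\gamma/2$, the lower tail for the light claim, and a union bound over at most $n^2$ edges. Your light half is complete for the statement as written, because $t_e^O\ge Z_e$. The obstacle you raise for the heavy half is real, and it is a defect of the lemma itself, not of your route. The paper's proof opens with ``$t_e^O\sim\mathrm{Bin}(t_e,p)$'', which is false under the stated definitions of $O$ and $t_e^O$.

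Your diagnosis needs sharpening, though. If exactly one endpoint, say $u$, lies in $V_p$, then $uw\in O$ for every $w$ while $vw\in O$ iff $w\in V_p$, so still $t_e^O=Z_e$. The count jumps to $t_e$ only on the event $\{u,v\}\subseteq V_p$, which has probability $p^2$, not about $2p$. That is still far too large. Since $\gamma\le n$, we have $p^2\ge\min\bigl(1,(50\varepsilon^{-2}\log n/n)^2\bigr)\gg n^{-3}$. On this event an edge with $t_e\ge p\gamma$ is labelled $\heavy$ while $t_e^O=t_e$, which is not $(1\pm\varepsilon)p\,t_e$ once $p<1/(1+\varepsilon)$. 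Concretely, take $G=K_n$ and $\gamma=n$. With high probability at least two vertices are sampled, and the edge joining them violates the heavy claim.

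So the heavy half cannot be proved as stated, and your first repair is the right one. Let $t_e^O$ count only triangles on $e$ whose apex lies in $V_p$; this is computable from $V_p$ and $O$, and gives $t_e^O=Z_e\sim\mathrm{Bin}(t_e,p)$ exactly. With that definition your argument is a complete proof, identical in substance to the paper's. The same redefinition is needed for the counts $t^O_{e,i}$ that feed $\hTh$ and the appendix estimator, since they rely on this lemma.

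One small correction. When $t_e<\gamma/2$ you only know $p\gamma\ge 2\,\mathbb{E}[Z_e]$, so the $2^{-a}$ bound (which needs $a\ge 6\,\mathbb{E}[Z]$) is not available. Instead use $\Pr[Z\ge(1+\kappa)\mu]\le e^{-\kappa\mu/3}$ for $\kappa\ge 1$. With $(1+\kappa)\mu=p\gamma$ this gives $e^{-(p\gamma-\mu)/3}\le e^{-p\gamma/6}$, as the paper does.
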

\begin{proof}
    For every edge, $t_e^O\sim \Bin{t_e,p}$. The result follows from standard applications of the Chernoff bound and taking the union bound over all edges. We provide a detailed proof in \ref{appendix:verifiedalgorithm} for completeness.
\end{proof}

We use this lemma repeatedly, with different thresholds $\gamma$ and different sampling probabilities. We mention that the correctness of the first pass is proved in Claim~\ref{claim:warmup algorithm} and proceed to the main part of the algorithm.
\subsubsection{Correctness of termination at round $i$.}
For a given $i\le p$, the passes $3i-1,3i,3i+1$ are collectively called a \emph{round}. The successful execution of round $i$ depends on three random events: 
\begin{enumerate}
    \item The probability that the oracle constructed correctly distinguished \heavy \, and \light \, edges.
    \item \emph{Condition on the oracle's success}, the probability that the segments of $S^\light$ are representative of $S^\light$, the light edges in the stream. 
    \item Conditioned on the above two events, the probability that the execution of $\Aa$ on a segment $S^\light_j$ succeeds. 
\end{enumerate}

To make the second event precise, note that the classification of edges is a function of the sampled vertex set and the edge set alone, and in particular is independent of the ordering of the stream. Hence, conditioned on the oracle, $S^\light$ is a uniformly random-order stream of the $m_\light$ light edges. Let $T^\light_j$ denote the number of triangles in $S^\light_j$. We say that $S^\light_j$ is \emph{representative} if
\[
|T^\light_j - t^{-3}T_\light| \le \varepsilon t^{-3}\max(T_\light, U_i).
\]
On the success event of Lemma~\ref{lem:oracleguarantees}, every light edge satisfies $t_e\le 2n^{1/2^i}$, so Lemma~\ref{lem:conc} applies to the light subgraph with $\eta = 2n^{1/2^i}$ and threshold $U_i$, and each segment is representative with probability at least $1-1/n^4$.
The probabilistic analyses of the next lemmas all follow from the same line of reasoning, captured by the following:
\begin{lemma}[Successful executions of $\Aa$ in round $i$]\label{lem: loglog round i probability of blackbox success}
    Let $\Mm_i$ be the event that in round $i$, the execution of $\Aa$ on all segments of $S^\light$ was successful. Then 
    $\prob{\Mm_i} \ge 1-\delta - 1/\poly(n)$.
\end{lemma}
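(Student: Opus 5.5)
The plan is to decompose the event $\Mm_i$ along the three random events listed just before the statement, namely oracle success, representativeness of the light segments, and success of $\Aa$ on each segment, and to lower bound it by the chain rule:
\[
\prob{\Mm_i} \ge \prob{\Oo_i}\cdot\prob{\Rr_i \mid \Oo_i}\cdot \prob{\Mm_i \mid \Oo_i\cap \Rr_i}.
\]
Here $\Oo_i$ is the success event of Lemma~\ref{lem:oracleguarantees} for the oracle built in round $i$, with threshold $\gamma = n^{1/2^i}$ up to polylogarithmic factors. $\Rr_i$ is the event that every segment $S^\light_j$, $j\in[t]$, is representative in the sense defined above. Lemma~\ref{lem:oracleguarantees} directly gives $\prob{\Oo_i}\ge 1-1/n^3$.

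For the second factor, I condition on the sampled vertex set $V_p$, and hence on the classification. As observed above, the classification depends only on $V_p$ and $E$, not on the stream order, so $S^\light$ is a uniformly random-order stream of the light edges. By the multinomial segmentation lemma, each $S^\light_j$ is distributed as an independent $1/t$-sample of the light edges. On $\Oo_i$ every light edge has $t_e\le 2n^{1/2^i}$, so Lemma~\ref{lem:conc} applies with $\eta=2n^{1/2^i}$, $p=1/t$ and threshold $U_i$. This requires $U_i$ to be chosen in the algorithm as at least $c\varepsilon^{-2}(t\eta\ln n + t^5\ln^3 n)$, which I take as given by the algorithm's definition. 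Each segment is then representative with probability at least $1-1/n^4$. A union bound over the $t=O(\log n)$ segments gives $\prob{\Rr_i\mid\Oo_i}\ge 1-O(\log n)/n^4\ge 1-1/n^3$.

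For the third factor, Theorem~\ref{thm:bb} applied to each segment, conditioned on the segment's contents, gives success probability at least $1-\delta-1/\poly(n)$. This holds for any input stream and parameter, since "successful" means outputting a correct estimate or \fail, and that guarantee has no precondition on $\tau$. The conditioning on $\Oo_i\cap\Rr_i$ only fixes the segment contents, which are random-order by the argument above, so $\Aa$'s guarantee still applies. A union bound over the $t$ segments gives $1-t\delta-t/\poly(n)$.

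Multiplying the three factors yields $\prob{\Mm_i}\ge 1-O(\log n)\delta - 1/\poly(n)$, and rescaling $\delta$ by $\Theta(\log n)$ gives the statement. The resulting $\log n$ factor is absorbed into the $\wt{O}$ in the space bound. The main obstacle is the conditioning step. One must check that conditioning on the oracle, which is built in an earlier pass from the same stream, does not bias the order of $S^\light$, and that $\Aa$'s internal coins are independent of everything else. I would handle this by noting that the oracle uses only vertex sampling, independent of the permutation, and that the random segment lengths are drawn independently.
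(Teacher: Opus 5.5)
Your proposal is correct and follows the paper's proof essentially step for step. The paper also lower bounds $\Pr[\mathcal{M}_i]$ by the product of three factors: the oracle-success probability, the probability that all light segments are representative given the oracle (via Lemma~\ref{lem:conc} and a union bound over the $t$ segments), and the probability that every execution of $\mathcal{A}$ succeeds given both (a union bound giving $1-t\delta-1/\mathrm{poly}(n)$), after which it rescales $\delta$; your argument that vertex sampling makes the oracle independent of stream order, so each light segment stays a uniformly ordered $1/t$-sample, is the paper's remark just before the lemma, stated more carefully.
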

\begin{proof}
    
For a given $i$, let $\Ee_{i,oracle}$ be the event that \ref{lem:oracleguarantees} holds at round $i$, and let $\Ee_{i,j,rep}$ be the event that the segment $S_j^\light$ is representative of $S^\light$ in round $i$. Let $\Ee_{i,rep} = \cap_{j\in [t]}\Ee_{i,j,rep}$. By the union bound and \ref{lem:conc},
\[
\prob{\Ee_{i,rep}\mid \Ee_{i,\oracle}} \ge 1-1/n^3
\]

Let $\Mm_{i,j}$ be the event that in round $i$ the execution of $\Aa$ on the segment $S^\light_j$ was successful, and let $\Mm_{i} = \cap_{i\in [t]}\Mm_{i,j}$. By the union bound, 
\[
\prob{\Mm_i \mid \Ee_{i,rep},\Ee_{i,\oracle}}\prob{\bigcap_{j\in [t]}\Mm_{i,j}\mid \Ee_{i,rep},\Ee_{\oracle}}\ge 1-t\delta -1/\poly(n)  = 1-O(\log(n)\delta)-1/\poly(n).
\]
Consequently, 
\begin{align*}
    \prob{\Mm_{i}}&\ge \prob{\Mm_{i}\mid \Ee_{i,rep},\Ee_{i,\oracle}}\cdot \prob{\Ee_{i,rep},\Ee_{i,\oracle}} \\
    &=\prob{\Mm_{i}\mid \Ee_{i,rep},\Ee_{i,\oracle}}\cdot \prob{\Ee_{i,rep}\mid \Ee_{i,\oracle}}\cdot\prob{\Ee_{i,\oracle}} \\
    &\ge (1-O(\log (n)\delta) - 1/\poly(n))(1-1/n^3)(1-1/n^2) \\
    &= 1-O(\log(n)\delta) - 1/\poly(n).
\end{align*}
Rescaling $\delta$ shows the desired result.
\end{proof}

\begin{lemma}[Heavy Triangles Estimation]
\label{lem:heavytrianglesest}
With probability $1-1/n^3$, at round $i$, $\hTh \in (1\pm\varepsilon)T_\heavy $.
\end{lemma}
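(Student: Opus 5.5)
We have not seen the algorithm's definition of $\hTh$, so I will assume the natural estimator in the spirit of the round structure. In pass $3i$, the oracle of Lemma~\ref{lem:oracleguarantees} is built with threshold $\gamma_i=\wt{\Theta}(n^{1/2^i})$, giving a set $O$ of sampled edges. $T_\heavy$ counts the triangles containing at least one heavy edge. $\hTh$ is obtained in the next pass by summing, over each edge $e$ classified \heavy, the rescaled count $t_e^O/p$. Each triangle is weighted by $1/h(\Delta)$, where $h(\Delta)\in\{1,2,3\}$ is the number of its heavy edges; this is computable because the oracle can be queried on every edge of a triangle. Equivalently, $\hTh=\sum_{e\ \heavy} \frac{1}{p}\sum_{\Delta\ni e,\ \Delta\text{ witnessed in }O} 1/h(\Delta)$. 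The plan is to show that this weighted sum concentrates edge by edge.

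\textbf{Step 1: condition on the oracle.} We work on the success event of Lemma~\ref{lem:oracleguarantees}, which holds with probability $1-1/n^3$. On this event every heavy edge has $t_e^O=(1\pm\varepsilon)p\,t_e$. The classification is a deterministic function of $V_p$, so it is fixed once we condition on this event.

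\textbf{Step 2: extend the Chernoff argument to weighted counts.} The guarantee of Lemma~\ref{lem:oracleguarantees} is for the unweighted count $t_e^O$. We need the same concentration for each class of triangles through a heavy edge $e$ with $h=1,2,3$. For each class we apply the same Chernoff bound used in the proof of that lemma. Each triangle through $e=(u,v)$ with apex $w$ is witnessed exactly when $w\in V_p$, so the witnessed triangles are independent across apexes. A heavy edge has $t_e\ge\gamma/2$ triangles in total. A class containing at least $\varepsilon t_e$ of them concentrates to within $(1\pm\varepsilon)$ by Chernoff, since $p\,\varepsilon t_e=\Omega(\varepsilon^{-1}\log n)$. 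A smaller class contributes at most an additive $O(\varepsilon)p\,t_e$ with high probability. Summing the three classes gives $\frac1p\sum_{\Delta\ni e}1/h(\Delta)$ within a $(1\pm O(\varepsilon))$ factor. We then take a union bound over all at most $m\le n^2$ edges. The constant $50$ in $p$ may need to be enlarged to obtain probability $1-1/n^{3}$ after the union bound; this is absorbed into the rescaling of $\varepsilon$.

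\textbf{Step 3: sum over heavy edges.} Summing over heavy edges, each triangle with $h\ge1$ heavy edges is counted with total weight exactly $h\cdot(1/h)=1$. Hence $\sum_{e\ \heavy}\sum_{\Delta\ni e}1/h(\Delta)=T_\heavy$. Since every term is nonnegative, the per-edge multiplicative bounds from Step 2 add up to $\hTh\in(1\pm O(\varepsilon))T_\heavy$. Rescaling $\varepsilon$ and combining the failure probabilities gives the stated $1-1/n^3$.

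\textbf{Main obstacle.} The hardest part is Step 2. The triangles through a heavy edge split into classes by $h$, and a small class does not concentrate multiplicatively on its own, so it must be handled by the additive bound relative to $t_e$. If the algorithm instead uses a different unbiased estimator, for example sampling with the $1/h$ weighting applied only after detection, the same edge-by-edge Chernoff argument still applies. In that case I would adapt the computation to the exact estimator once its definition in the algorithm is in view.
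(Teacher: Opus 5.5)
Your outline matches the paper's: per-heavy-edge concentration, a union bound over edges, and the exact identity $\sum_{e\ \heavy}\sum_{\Delta\ni e}1/h(\Delta)=T_\heavy$. You also correctly see that the unweighted guarantee $t_e^O=(1\pm\varepsilon)pt_e$ of Lemma~\ref{lem:oracleguarantees} does not by itself control the weighted counts; the paper's two-line proof just asserts that it does. However, your Step 2 does not close this gap, because it rests on the claim in Step 1 that the classification is ``fixed once we condition on this event.'' It is not. For $\Delta=(u,v,w)$ through $e=(u,v)$, $h(\Delta)$ is decided by $t^O_{(u,w)}$ and $t^O_{(v,w)}$. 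These are functions of the same sample $V_p$ that decides whether $\Delta$ is witnessed. Through common neighbours of $u,v,w$ they depend on the very indicators you are summing. Conditioning on a high-probability event does not freeze the classes. Conditioning on the classification itself destroys the independence of the apex indicators. So your classes are sample-dependent index sets, and Chernoff does not apply to them.

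This is not a formality. Let $G$ be a clique $K$ on $\gamma+2$ vertices, $\gamma=n^{1/2^i}$, with $p=p_i$ small. Use your apex-based witnessing, so that $t^O_f=s-|f\cap V_p|$ for every edge $f$ of $K$, where $s=|K\cap V_p|$. Consider the event $s=\lceil p\gamma\rceil$, which has probability $\Theta(1/\sqrt{p\gamma})=\Theta(\varepsilon/\sqrt{\log n})\gg n^{-3}$. On it, the oracle's success event holds, and the heavy edges are exactly those with both endpoints unsampled. For such an $e$, a triangle through $e$ is witnessed iff its apex is sampled, iff its other two edges are light. So your class $h=1$ is witnessed in full and the class $h=3$ not at all. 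With $U=K\setminus V_p$ one gets $\hTh=\frac1p\binom{|U|}{2}s$ and $T_\heavy=\binom{|U|}{2}\bigl(\frac{|U|-2}{3}+s\bigr)$, a ratio of about $3/(1+2p)$. Hence, for the algorithm as written, the stated bound actually fails on this graph.

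The missing ingredient is independence between the sample that defines heavy/light and the sample used for counting, e.g.\ a fresh vertex sample in pass $3i$. With fixed classes your argument works after one repair. A class mean of $\Omega(\varepsilon^{-1}\log n)$ gives only $(1\pm O(\sqrt{\varepsilon}))$ concentration, not $(1\pm\varepsilon)$. Drop the case split and bound every class additively instead. Each class has at most $t_e$ members, so its witnessed count is within $O(\sqrt{pt_e\log n}+\log n)=O(\varepsilon pt_e)$ of its mean, since $pt_e\ge 25\varepsilon^{-2}\log n$. The weighted target is at least $pt_e/3$. Separately, ``witnessed iff $w\in V_p$'' is not literally what $O$ gives when both endpoints of $e$ are sampled. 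That imprecision is inherited from the paper's claim $t_e^O\sim\mathrm{Bin}(t_e,p)$, and it is repaired by counting only sampled apexes.
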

\begin{proof}
    By \ref{lem:oracleguarantees} with the threshold $\gamma = n^{1/2^i}$, with probability $1-1/n^3$, $t_e^O = (1\pm\varepsilon)t_ep_i$. We then note that 
    \[\hTh = p_i^{-1}\left(\sum_{e : \oracle(e) = \heavy} t_{e,1}^O + t_{e,2}^O/2 + t_{e,3}^O/3\right)
    \]and the result follows. The coefficients of $t^O_{e,i}$ compensate for overcounting - a triangle that has $i$ heavy edges attached to it is counted $i$ times.
\end{proof}

\begin{lemma}[Termination with $t^{3}\hTl + \hTh$]\label{lem: loglog terminiation output H + L} Suppose that during round $i$ $\ppass$ terminates and outputs $t^{3}\hTh + \hTl$. Then with probability at least $1-\delta - 1/\poly(n)$ it is a  $(1\pm \varepsilon)$ approximation of $T$.
\end{lemma}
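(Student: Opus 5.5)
The plan is to decompose $T = T_\heavy + T_\light$, where $T_\heavy$ counts triangles containing at least one edge labelled \heavy\ by the oracle and $T_\light$ counts triangles all of whose edges are labelled \light. I will then show each piece is estimated to within a $(1\pm\varepsilon)$ factor, up to rescaling. A sum of two nonnegative quantities, each approximated multiplicatively, is approximated to the same multiplicative factor. So it suffices to control $\hTh$ and the scaled light estimate $t^{3}\hTl$ separately, and then union bound the failure probabilities. (I read the output as $\hTh + t^3\hTl$, i.e.\ the light estimate obtained on a segment is the one rescaled by $t^3$.)

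The heavy part is immediate from Lemma~\ref{lem:heavytrianglesest}. With probability $1-1/n^3$, on the oracle success event of Lemma~\ref{lem:oracleguarantees}, we have $\hTh\in(1\pm\varepsilon)T_\heavy$.

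The light part repeats the argument of Claim~\ref{claim:warmup algorithm}, applied to the light substream $S^\light$ in place of $S$ and with threshold $U_i$ in place of $\tau^*$. I condition on three events:
\begin{itemize}
\item the oracle event $\Ee_{i,\oracle}$;
\item the representativeness event $\Ee_{i,rep}$, which holds because every light edge has $t_e\le 2n^{1/2^i}$, so Lemma~\ref{lem:conc} applies with $\eta=2n^{1/2^i}$;
\item the event $\Mm_i$ of Lemma~\ref{lem: loglog round i probability of blackbox success}.
\end{itemize}
Termination occurs at some segment $j$, where $\Aa(S^\light_j,(1-\varepsilon)t^{-3}\tau_j)$ returned $X=\hTl$. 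There are two cases.

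\emph{Case $T_\light\ge\tau_j$.} By the minimality of the stopping index, $\tau_j\ge U_i/2$, so $\max(T_\light,U_i)\le 2T_\light$. Representativeness then gives $T^\light_j\ge(1-2\varepsilon)t^{-3}\tau_j$. Hence the parameter passed to $\Aa$ is essentially a lower bound on $T^\light_j$, and success of $\Aa$ yields $X\in(1\pm\varepsilon)(1\pm2\varepsilon)t^{-3}T_\light$.

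\emph{Case $T_\light<\tau_j$.} Suppose for contradiction that $T_\light<\tau_j/2$. Property~\ref{property: algorithm test} gives the lower bound $X\ge(1-\varepsilon)^2t^{-3}\tau_j$. Property~\ref{property: algorithm upper bound} together with representativeness (using $U_i\le\tau_j$) gives the upper bound $X\le(1+\varepsilon)(\tfrac12+2\varepsilon)t^{-3}\tau_j$. These two bounds are incompatible for $\varepsilon<1/10$. So $\tau_j/2\le T_\light\le\tau_j$, and the same sandwich as in Claim~\ref{claim:warmup algorithm} gives $t^3X\in(1\pm O(\varepsilon))T_\light$.

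Combining the two parts, $\hTh+t^3\hTl\in(1\pm O(\varepsilon))(T_\heavy+T_\light)=(1\pm O(\varepsilon))T$, and rescaling $\varepsilon$ finishes the approximation claim. For the probability, I union bound the three conditioned events with Lemma~\ref{lem:heavytrianglesest}. The heavy estimate and the light segments live on the same oracle event, so the bound is $1-\delta-1/\poly(n)$ after rescaling $\delta$.

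The main obstacle is the light-part case analysis, specifically making sure that the threshold $U_i$ plays the role of $\tau^*$ correctly. That requires two things: that the guesses $\tau_j$ in round $i$ stop at the first value with $\tau_j\le U_i$, so that $\tau_j\ge U_i/2$ holds in the first case; and that $U_i$ dominates the quantity $\tau$ required by Lemma~\ref{lem:conc} for the light subgraph with $\eta=2n^{1/2^i}$ and $p=1/t$. I would first verify that $U_i$ is defined in \ppass\ as $c\varepsilon^{-2}(n^{1/2^i}t\ln n+t^5\ln^3 n)$, which gives exactly this domination. Once that holds, the rest is the warmup computation verbatim.
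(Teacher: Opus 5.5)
Your proposal is correct and follows the paper's proof essentially step for step. You reduce to showing $t^3\hTl\in(1\pm\varepsilon)T_\light$ (with $\hTh$ handled by Lemma~\ref{lem:heavytrianglesest}), split on whether $T_\light\ge\tau_j$, and in the second case reproduce the warmup's contradiction argument via properties~\ref{property: algorithm test} and~\ref{property: algorithm upper bound}. The one difference is in the first case: the paper uses that every guess in \ppass\ is clipped, $\tau_j=\max(2^{-j}m_\light^{3/2},\tau^*)\ge\tau^*\ge U_i$, so $\max(T_\light,U_i)=T_\light$ immediately and $T^\light_j\ge(1-\varepsilon)t^{-3}\tau_j$ matches the parameter passed to $\Aa$ exactly; your appeal to minimality of the stopping index is unnecessary and costs only a harmless factor of $2$.
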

\begin{proof}
    Since $T = T_\heavy + T_\light$, we only need to show that $t^3 \hTl \in (1\pm\varepsilon) T_\light$. The proof is similar in structure to the proof of \ref{claim:warmup algorithm}, however since we look over the stream $S^\light$ and not $S$, the specifics change slightly.

    Let $\tau_j$ be the guess when $\Aa(S_j^\light,(1-\varepsilon)\tau_j t^{-3})$ outputs an estimate, and recall that that estimate is denoted $\hTl$. There are two cases to check: The termination happened when $\tau_j$ was a lower bound on $T_\light$, or $\tau_j$ was not a lower bound on $T_\light$, yet $\Aa$ returned an estimate.

Suppose $T_\light \ge \tau_j$ when round $i$ terminates. Since every guess satisfies $\tau_j \ge \tau^* \ge U_i$ by definition, we have $T_\light \ge \tau^*$, so $\max(T_\light,U_i) = T_\light$ and representativeness gives $T^\light_j \ge (1-\varepsilon)t^{-3}T_\light \ge (1-\varepsilon)t^{-3}\tau_j$. That is, $(1-\varepsilon)t^{-3}\tau_j$ is a lower bound on $T^\light_j$. By the guarantees of $\Aa$, 
$$\hTl\in (1\pm\varepsilon)T^\light_j \subseteq (1\pm\varepsilon)^2 t^{-3}T_\light,
\qquad\text{hence}\qquad t^{3}\hTl\in (1\pm\varepsilon)^2 T_\light$$
as claimed.

    Alternatively, suppose $T_\light < \tau_j$ yet $\Aa(S_j^\light, (1-\varepsilon)\tau_j t^{-3})$ returned an estimate $X$. We claim that $X$ is still an approximation of $T_j$. We show that by demonstrating that $\tau_j$ can't be far from a lower bound on $T_\light$, and therefore not far from a lower bound on $T^\light_j$: Suppose by way of contradiction that $T_\light < \tau_j/2$. Since $\Aa(S_j)$ produced an output $X$, by property \ref{property: algorithm test} it must satisfy
    \[
    (1-\varepsilon)(1-\varepsilon)t^{-3}\tau_j \le X.
    \]
    However, 
    \begin{align*}
        X &\le (1+\varepsilon)T^\light_j + \varepsilon(1-\varepsilon)\tau_jt^{-3} &\text{By property \ref{property: algorithm upper bound}}\\
        & \le (1+\varepsilon)t^{-3}\sbk{T_\light + \varepsilon\max(T_\light,U_i)} + \varepsilon t^{-3}\tau_j &\text{By Lemma \ref{lem:conc} over the graph induced by light edges}\\
        & \le (1+\varepsilon)t^{-3}\sbk{\frac{\tau_j}{2} + \varepsilon \tau_j + \varepsilon\tau_j}& \text{By $T_\light\le \tau_j/2$ and $U_i\le\tau^*\le \tau_j$}\\
        & \le (1+\varepsilon)\rbk{\frac{1}{2} + 2\varepsilon} t{^-3}\tau_j \\
        & \le \frac{3}{4}\rbk{1+\frac{1}{2}\varepsilon}t^{-3}\tau_j
    \end{align*}
    where the last inequality holds when $\varepsilon < 1/10$.
    Therefore
    \[
    (1-\varepsilon)(1-\varepsilon)t^{-3}\tau_j \le X \le \frac{3}{4}\rbk{1+\frac{1}{2}\varepsilon}t^{-3}\tau_j
    \]
    Contradiction when $\varepsilon < 1/10$. Therefore, $\tau_j/2 \le T_\light \le \tau_j$, and the output $X$ satisfies
    \[
     (1-\varepsilon)^2t^{-3}T_\light \le (1-\varepsilon)^2t^{-3}\tau_j \le X \le (1+\varepsilon)t^{-3}[T_\light + \varepsilon\max(T_\light,U_i)] + 2\varepsilon t^{-3}T_\light \le (1+\varepsilon)(1+4\varepsilon)t^{-3}T_\light.
    \]
    Rescaling $\varepsilon$ gets the desired result.
    The probability analysis is shown in \ref{lem: loglog round i probability of blackbox success}.
\end{proof}

\begin{corollary}\label{cor:TL theta tauj}
Suppose that during round $i$, $\ppass$ terminates and outputs $t^{3}\hTl + \hTh$, and let $\tau_j$ be the guess at which $\Aa$ returned an estimate. Then, on the success event of Lemma~\ref{lem: loglog round i probability of blackbox success}, $\tau_j/2 \le T_\light \le 2\tau_j$.
\end{corollary}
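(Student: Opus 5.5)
The plan is to read both inequalities off the proof of Lemma~\ref{lem: loglog terminiation output H + L}, working throughout on the success event of Lemma~\ref{lem: loglog round i probability of blackbox success}. On that event the oracle classification is correct, every segment $S^\light_j$ is representative, and every execution of $\Aa$ on a segment is successful. Following the proof of that lemma, I will split into two cases according to whether $\tau_j \le T_\light$ or $\tau_j > T_\light$.

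\textbf{Upper bound $T_\light \le 2\tau_j$.} If $j=1$, I will use $\tau_1 = m^{3/2}/2$ together with the trivial bound $T_\light \le T \le m^{3/2}$. If $j>1$, I will argue by contradiction: suppose $T_\light > 2\tau_j = \tau_{j-1}$. Then $\tau_{j-1} \le T_\light$ is a genuine lower bound, and $\tau_{j-1} \ge \tau^* \ge U_i$. Representativeness of $S^\light_{j-1}$ then gives
\[
T^\light_{j-1} \ge (1-\varepsilon)t^{-3}T_\light \ge (1-\varepsilon)t^{-3}\tau_{j-1}.
\]
So $\Aa(S^\light_{j-1},(1-\varepsilon)t^{-3}\tau_{j-1})$ was run with a valid lower bound, and by Theorem~\ref{thm:bb} its successful execution returns an estimate rather than \fail. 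The round would therefore have stopped at segment $j-1$, which contradicts $j$ being the segment at which $\Aa$ first returned an estimate.

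\textbf{Lower bound $T_\light \ge \tau_j/2$.} If $T_\light \ge \tau_j$ there is nothing to prove. Otherwise $T_\light < \tau_j$, which is exactly the second case of Lemma~\ref{lem: loglog terminiation output H + L}. There, assuming $T_\light < \tau_j/2$ and combining property~\ref{property: algorithm test}, property~\ref{property: algorithm upper bound}, and the representativeness bound with $U_i \le \tau_j$ yields
\[
(1-\varepsilon)^2 t^{-3}\tau_j \le X \le \tfrac34\bigl(1+\tfrac{\varepsilon}{2}\bigr)t^{-3}\tau_j,
\]
which is a contradiction for $\varepsilon<1/10$. I will simply cite that computation.

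\textbf{Main obstacle.} The only delicate step is the upper bound. It needs the fact that the segments are processed in order and the algorithm stops at the first estimate, so that failure on segment $j-1$ is informative. It also needs $\tau_{j-1}\ge U_i$, so that $\max(T_\light,U_i)=T_\light$ in the representativeness condition; this follows because all guesses used are at least $\tau^*\ge U_i$.
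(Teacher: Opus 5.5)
Your proof follows the paper's. The lower bound is read off the second case of Lemma~\ref{lem: loglog terminiation output H + L}. The upper bound comes from the minimality of $j$: segment $j-1$ was run and returned \fail, which cannot happen on the success event if $\tau_{j-1}\le T_\light$.

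\textbf{A misstated step.} The guesses are $\tau_j=\max(2^{-j}m_\light^{3/2},\tau^*)$, so the equality $\tau_{j-1}=2\tau_j$ is false once the guesses are clamped at $\tau^*$ (e.g.\ $\tau_{j-1}=\tau_j=\tau^*$). Your contradiction only needs $\tau_{j-1}\le 2\tau_j$. The paper checks this in two cases:
\begin{itemize}
\item if $\tau_j=2^{-j}m_\light^{3/2}$, then $\tau_{j-1}=2\tau_j$;
\item if $\tau_j=\tau^*$, then $2^{-(j-1)}m_\light^{3/2}\le 2\tau^*$, so $\tau_{j-1}\le 2\tau^*=2\tau_j$.
\end{itemize}

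\textbf{The $j=1$ case.} Here $\tau_1$ is not $m^{3/2}/2$. Instead use $\tau_1\ge m_\light^{3/2}/2$ together with $T_\light\le m_\light^{3/2}$, since the light triangles lie on only $m_\light$ edges.

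With these two fixes your proof is complete. Your explicit handling of $j=1$ also covers a case the paper's write-up leaves implicit, since the paper refers to $\tau_{j-1}$ without defining $\tau_0$.
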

\begin{proof}
If $T_\light < \tau_j$, the analysis of the second case in Lemma~\ref{lem: loglog terminiation output H + L} shows $\tau_j/2\le T_\light \le \tau_j$. Otherwise $T_\light\ge\tau_j$, and $j$ is the first such index: at the first index $j_0$ with $\tau_{j_0}\le T_\light$, the input $(1-\varepsilon)t^{-3}\tau_{j_0}$ is a lower bound on $T^\light_{j_0}$, so $\Aa$ outputs an estimate there. We claim $\tau_{j-1}\le 2\tau_j$: if $\tau_j = 2^{-j}m_\light^{3/2}$ then $\tau_{j-1} = 2\tau_j$, and if $\tau_j = \tau^*$ then $2^{-j}m_\light^{3/2}\le \tau^*$, so $\tau_{j-1} = \max(2^{-(j-1)}m_\light^{3/2},\tau^*)\le 2\tau^* = 2\tau_j$. Hence $\tau_j \le T_\light < \tau_{j-1} \le 2\tau_j$.
\end{proof}
\begin{corollary}\label{cor: loglog L is small}
    Suppose that in round $i$, $\ppass$ did not output $t^{3}\hTl + \hTh$. Then with probability at least $1-\delta - 1/\poly(n)$, $T_\light < \tau^*$. Moreover, if \ppass~ did not terminate, $T_\light = \widetilde{O}(\varepsilon^{-2}n^{1/2^i})$.
\end{corollary}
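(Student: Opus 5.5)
We prove the contrapositive: on the success event of Lemma~\ref{lem: loglog round i probability of blackbox success}, if $T_\light \ge \tau^*$ then round $i$ outputs $t^{3}\hTl + \hTh$. The argument mirrors Corollary~\ref{cor:warmup did not terminate}, now applied to the light substream $S^\light$ instead of $S$.

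First, condition on the intersection of three events: the oracle event $\Ee_{i,\oracle}$ of Lemma~\ref{lem:oracleguarantees}, the representativeness event $\Ee_{i,rep}$, and the event $\Mm_i$ that every execution of $\Aa$ on the segments $S^\light_j$ is successful. By Lemma~\ref{lem: loglog round i probability of blackbox success}, this intersection has probability at least $1-\delta-1/\poly(n)$.

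Next, suppose $T_\light \ge \tau^*$. The guesses $\tau_j = \max(2^{-j}m_\light^{3/2},\tau^*)$ decrease to $\tau^*$ by the last segment. Hence there is a first index $j_0$ with $\tau_{j_0}\le T_\light$. Since $\tau^*\ge U_i$, we have $\max(T_\light,U_i)=T_\light$, so representativeness of $S^\light_{j_0}$ gives $T^\light_{j_0}\ge(1-\varepsilon)t^{-3}T_\light\ge(1-\varepsilon)t^{-3}\tau_{j_0}$. The parameter passed to $\Aa$ is therefore a true lower bound on $T^\light_{j_0}$, and by Theorem~\ref{thm:bb} the successful execution returns an estimate rather than \fail. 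The round thus terminates at index $j_0$ or earlier and outputs $t^{3}\hTl+\hTh$, contradicting the hypothesis. We conclude $T_\light<\tau^*$.

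For the ``moreover'' part, substitute the definition of $\tau^*$ in round $i$. By analogy with \warmup, $\tau^* = \max\bigl(U_i,\; c\varepsilon^{-2}(\eta t\ln n + t^5\ln^3 n)\bigr)$ with $\eta = 2n^{1/2^i}$, the bound on $\Delta_e$ for light edges given by Lemma~\ref{lem:oracleguarantees} with $\gamma = n^{1/2^i}$. Since $t=O(\log n)$, this gives $\tau^* = \wt{O}(\varepsilon^{-2}n^{1/2^i})$, and so $T_\light = \wt{O}(\varepsilon^{-2}n^{1/2^i})$.

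The main obstacle is bookkeeping rather than ideas. The guesses must actually reach $\tau^*$ within the $t$ available segments, which follows from choosing $t$ minimal with $2^{-t}m_\light^{3/2}\le\tau^*$ as in \warmup. We also need $U_i\le\tau^*$, which holds by the algorithm's definition, and that $t$ is polylogarithmic, so the factors of $t$ in $\tau^*$ are absorbed into $\wt{O}$.
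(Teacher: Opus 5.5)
Your first part (the contrapositive: on the success event of Lemma~\ref{lem: loglog round i probability of blackbox success}, $T_\light\ge\tau^*$ forces $\Aa$ to return an estimate at the first index $j_0$ with $\tau_{j_0}\le T_\light$) matches the paper's argument. One small point there: the algorithm actually picks $t$ minimal with $2^{-t}m^{3/2}\le U_t$. Since $m_\light\le m$ and $U_t\le\tau^*$, this still gives $\tau_t=\tau^*$, so nothing breaks.

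The gap is in the ``moreover'' part. In round $i$ the algorithm sets $\tau^*=\max(U_t,\varepsilon\hTh)$. Your expression $\max\bigl(U_i, c\varepsilon^{-2}(\eta t\ln n+t^5\ln^3 n)\bigr)$ is essentially $U_i$ again, so it drops the $\varepsilon\hTh$ term entirely. That term is of order $\varepsilon T_\heavy$, which is not bounded by $\widetilde{O}(\varepsilon^{-2}n^{1/2^i})$ in general. Hence $T_\light<\tau^*$ alone does not yield the claimed bound. A symptom of this is that your argument never uses the hypothesis that \ppass~did not terminate, which the statement attaches specifically to the second claim.

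The missing step uses the first part, which shows that round $i$ did not output $t^{3}\hTl+\hTh$. If in addition $\tau^*=\varepsilon\hTh$, the round reaches the end of the stream and outputs $\hTh$, so the algorithm terminates. Non-termination therefore forces $\tau^*\neq\varepsilon\hTh$, i.e.\ $\tau^*=U_i$. Then $T_\light<\tau^*=U_i=\widetilde{O}(\varepsilon^{-2}n^{1/2^i})$, using $t=O(\log n)$. This is exactly how the paper concludes.
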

\begin{proof}

Suppose $T_\light \ge \tau^*$. By definition $\tau_t = \tau^*\le T_\light$, so some executed guess satisfies $\tau_j \le T_\light$. On the success event of Lemma~\ref{lem: loglog round i probability of blackbox success}, at the first such $j$ the input $(1-\varepsilon)t^{-3}\tau_j$ is a lower bound on $T^\light_j$, so $\Aa$ outputs an estimate and the algorithm outputs $t^{3}\hTl + \hTh$. The contrapositive implies the first part of the corollary. If \ppass~ did not terminate, then in particular it did not output $\hTh$ at the end of the stream, so $T_\light < \tau^* = U_i = \widetilde{O}(\varepsilon^{-2}n^{1/2^i})$.
\end{proof}
\begin{lemma}[Termination with $\hTh$]\label{lem: loglog termination output H}
    Suppose $\ppass$ terminates at round $i$ and outputs $\hTh$. Then with probability $1-\delta - 1/\poly(n)$ it is a $(1\pm \varepsilon)$ approximation of $T$.
\end{lemma}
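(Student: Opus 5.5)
We have not seen the algorithm box, so the reasoning below assumes the following structure for round $i$. The algorithm outputs $\hTh$ alone only when the light-edge subroutine fails on every segment and some test certifies that heavy triangles dominate. The natural such test compares $\hTh$ against $\tau^*$, or against $U_i$ scaled by $\varepsilon^{-1}$. So the plan is to show two things on the good events: $T_\light$ is a negligible fraction of $T$, and $\hTh$ approximates $T_\heavy$. Together these make $\hTh$ a $(1\pm\varepsilon)$ approximation of $T = T_\heavy + T_\light$.

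\textbf{Step 1 (light triangles are few).} Since the round did not output $t^3\hTl + \hTh$, Corollary~\ref{cor: loglog L is small} gives $T_\light < \tau^* = U_i = \widetilde{O}(\varepsilon^{-2}n^{1/2^i})$. This holds with probability $1-\delta-1/\poly(n)$, on the success event of Lemma~\ref{lem: loglog round i probability of blackbox success} intersected with the oracle event.

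\textbf{Step 2 (heavy triangles are accurate).} Lemma~\ref{lem:heavytrianglesest} gives $\hTh \in (1\pm\varepsilon)T_\heavy$ with probability $1-1/n^3$.

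\textbf{Step 3 (the termination test).} This is the main step. I expect the condition under which \ppass\ outputs $\hTh$ to be of the form $\hTh \ge \varepsilon^{-1}U_i$, possibly up to constants. Given such a test, Step 2 yields $T_\heavy \ge (1+\varepsilon)^{-1}\varepsilon^{-1}U_i$. Combining with Step 1,
\[
T_\light < U_i \le \varepsilon(1+\varepsilon)T_\heavy \le 2\varepsilon T .
\]
Hence
\[
\hTh \le (1+\varepsilon)T_\heavy \le (1+\varepsilon)T,
\qquad
\hTh \ge (1-\varepsilon)T_\heavy = (1-\varepsilon)(T - T_\light) \ge (1-\varepsilon)(1-2\varepsilon)T .
\]
Rescaling $\varepsilon$ then gives the claim.

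If the test instead compares $\hTh$ only to $U_i$ without the $\varepsilon^{-1}$ factor, the argument above breaks. In that case the lemma would need the threshold $\tau^*$ to already carry the $\varepsilon^{-1}$ slack. The statement's $\varepsilon^{-3.5}$ space exponent, compared with $\varepsilon^{-2}$ for $\Aa$, suggests exactly this extra $\varepsilon$-factor in $U_i$. This is the step I expect to be the main obstacle: matching the precise test in the algorithm box so that $T_\light \le O(\varepsilon)T_\heavy$ follows.

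\textbf{Step 4 (probability).} Intersect three events: the oracle success event (Lemma~\ref{lem:oracleguarantees}), the representativeness and $\Aa$-success event $\Mm_i$ (Lemma~\ref{lem: loglog round i probability of blackbox success}), and the heavy-estimation event (Lemma~\ref{lem:heavytrianglesest}). A union bound gives probability at least $1-\delta-1/\poly(n)$, after rescaling $\delta$.
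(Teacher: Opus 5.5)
Your overall plan matches the paper's proof: show $T_\light = O(\varepsilon)T_\heavy$, then use $\hTh\in(1\pm\varepsilon)T_\heavy$. Your guess at the termination test is also right. In round $i$ the algorithm sets $\tau^*=\max(U_i,\varepsilon\hTh)$ and outputs $\hTh$ only when $\tau^*=\varepsilon\hTh$.

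\textbf{The gap is in Step 1.} There you misread Corollary~\ref{cor: loglog L is small} in exactly the branch that matters. The corollary gives $T_\light<\tau^*$. The identification $\tau^*=U_i$, and hence $T_\light<U_i$, comes from its ``did not terminate'' clause, which does not apply when $\hTh$ is output. In this branch $\tau^*=\varepsilon\hTh$, and $T_\light<U_i$ can genuinely fail:
\begin{itemize}
\item every guess is $\tau_j=\max(2^{-j}m_\light^{3/2},\tau^*)\ge\varepsilon\hTh$;
\item so if $U_i\le T_\light<\varepsilon\hTh$, every call to $\Aa$ is made with a guess above $T_\light$;
\item each such call may legitimately return \fail, and the round then ends by outputting $\hTh$.
\end{itemize}
So the first link of your chain $T_\light<U_i\le\varepsilon(1+\varepsilon)T_\heavy$ is unjustified.

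\textbf{The repair.} It is immediate and is exactly the paper's argument. Bound $T_\light$ directly by $\tau^*$ and then apply Step 2:
$T_\light<\tau^*=\varepsilon\hTh\le\varepsilon(1+\varepsilon)T_\heavy\le 2\varepsilon T_\heavy$.
The test $\varepsilon\hTh\ge U_i$ is then needed only to identify $\tau^*$ with $\varepsilon\hTh$, not to lower-bound $T_\heavy$ against $U_i$. With this substitution, your Step 3 arithmetic and your probability accounting in Step 4 go through unchanged.
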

\begin{proof}
    Suppose $\ppass$ terminates at round $i$ and outputs $\hTh$. By the algorithm's design, $\tau^* = \varepsilon\hTh$, and by \ref{cor: loglog L is small}, $T_\light < \varepsilon\hTh$. Note that by \ref{lem:heavytrianglesest},
    $\varepsilon \hTh \le \varepsilon(1+\varepsilon)T_\heavy \le 2\varepsilon T_\heavy$. Therefore
    \[
    T = T_\heavy + T_\light \le T_\heavy + 2\varepsilon T_\heavy \le T_\heavy +2\varepsilon T,
    \]
    hence
    \[
    (1-2\varepsilon)T \le T_\heavy \le (1+2\varepsilon)T.
    \]
    That is, the output $\hTh$ is an approximation of $T_\heavy$, which in turn is an approximation of $T$, proving the claim. The probabilistic analysis is shown in \ref{lem: loglog round i probability of blackbox success}.
\end{proof}

We are ready to prove the correctness of $\ppass$. 
\begin{theorem}[Correctness of $\ppass$.]\label{thm:loglog correctness}
    With probability $1-\delta - 1/\poly(n)$, $\ppass$ outputs a $(1+4\varepsilon, \widetilde{O}\rbk{\varepsilon^{-2}n^{1/2^p}})$ one-sided approximation of $T$. 
\end{theorem}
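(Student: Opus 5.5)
We prove the theorem by a case analysis on where \ppass\ stops, and then take a union bound over the at most $O(p)$ rounds. There are three ways the algorithm can produce its output.

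\textbf{Case 1: it stops in the first pass.} The output is the \warmup\ estimate. By Claim~\ref{claim:warmup algorithm}, with probability $1-\delta-1/\poly(n)$ this is a $(1\pm\varepsilon)^2$ approximation of $T$. After rescaling $\varepsilon$ and applying Remark~\ref{remark: multiplicative to mixed one-sided}, we get a $(1+4\varepsilon,0)$ one-sided approximation.

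\textbf{Case 2: it stops in some round $i\le p$.} The output is either $t^3\hTl+\hTh$ or $\hTh$. In the first subcase we use Lemma~\ref{lem: loglog terminiation output H + L} together with Lemma~\ref{lem:heavytrianglesest}. In the second we use Lemma~\ref{lem: loglog termination output H}. Either way the output is a two-sided $(1\pm\varepsilon)$ approximation. Remark~\ref{remark: multiplicative to mixed one-sided} again turns this into a $(1+4\varepsilon,0)$ one-sided bound, which is in particular of the required mixed form.

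\textbf{Case 3: it never stops early and runs through all $p$ rounds.} The final output is (after the one-sided correction) $\hTh+U_p$, or some equivalent combination. Here $\hTh$ is the round-$p$ heavy estimate and $U_p=\widetilde{O}(\varepsilon^{-2}n^{1/2^p})$ is the threshold for that round.
\begin{itemize}
\item By Lemma~\ref{lem:heavytrianglesest}, $\hTh\in(1\pm\varepsilon)T_\heavy$.
\item By Corollary~\ref{cor: loglog L is small} applied in round $p$, $T_\light<\tau^*=U_p$. This requires checking that when the algorithm does not terminate, $\tau^*=\max(U_p,\varepsilon\hTh)$ equals $U_p$.
\item Lower bound: $T=T_\heavy+T_\light\le(1-\varepsilon)^{-1}\hTh+U_p$.
\item Upper bound: $(1-\varepsilon)^{-1}\hTh+U_p\le(1+4\varepsilon)T_\heavy+U_p\le(1+4\varepsilon)T+\widetilde{O}(\varepsilon^{-2}n^{1/2^p})$.
\end{itemize}
If the algorithm's final output is literally $(1-\varepsilon)^{-1}\hTh+U_p$, this chain is exactly the claimed guarantee. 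If instead it is $\hTh+U_p$, the lower bound needs $\varepsilon T_\heavy\le U_p$ or a rescaling of the additive term. I would check the algorithm's exact output and absorb any slack into the $\widetilde{O}$ term.

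\textbf{Probability.} The bad events are:
\begin{itemize}
\item failure of \warmup;
\item for each round $i\le p$: failure of the oracle (Lemma~\ref{lem:oracleguarantees}), a non-representative segment, or failure of some execution of $\Aa$ (Lemma~\ref{lem: loglog round i probability of blackbox success});
\item failure of the heavy estimate (Lemma~\ref{lem:heavytrianglesest}).
\end{itemize}
Each round fails with probability at most $\delta+1/\poly(n)$. A union bound over $p+1\le O(\log\log n)$ rounds gives total failure probability $O(p\delta)+1/\poly(n)$. Rescaling $\delta$ by $p$ costs only a polylogarithmic factor, which the $\wt{O}$ space bound absorbs.

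\textbf{Main obstacle.} The hardest step is the conditioning in round $i$. The guarantee $T_\light<U_{i-1}$ from round $i-1$ is needed to justify both the threshold $\gamma=n^{1/2^i}$ and the claim $\Delta_e\le 2\gamma$ on light edges. The round-$i$ analysis therefore has to be carried out on the intersection of all earlier success events. I would handle this by induction on $i$. The inductive hypothesis is that, on the good event, either a correct output has already been produced or $T_\light^{(i-1)}=\widetilde{O}(n^{1/2^{i-1}})$. Lemma~\ref{lem: loglog round i probability of blackbox success} bounds the conditional failure probability of each round, and the conditional bounds then chain together.
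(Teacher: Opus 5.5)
Your proposal is correct and follows the paper's proof. You split on whether \ppass{} terminates in some round, invoking Lemmas~\ref{lem: loglog terminiation output H + L} and~\ref{lem: loglog termination output H} with Remark~\ref{remark: multiplicative to mixed one-sided}, or runs through round $p$. In the latter case Corollary~\ref{cor: loglog L is small} gives $T_\light < U_p$, and the algorithm's actual output $(1-\varepsilon)^{-1}(\hTh+U_p)$ is at least your $(1-\varepsilon)^{-1}\hTh+U_p$ and exceeds it by only $\frac{\varepsilon}{1-\varepsilon}U_p$, so your chain goes through as written.

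One correction to your ``main obstacle'': round $i$'s correctness does not depend on earlier rounds. The bound $t_e\le 2n^{1/2^i}$ on light edges comes from the round-$i$ oracle alone, and the guarantee from round $i-1$ is used only in the space analysis. So the plain union bound you already wrote suffices. Avoid chaining conditional bounds: every pass reads the same random permutation, so conditioning on earlier success would invalidate the random-order segmentation argument.
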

\begin{proof}
    If the algorithm terminated during some round $i$, then by \ref{lem: loglog termination output H} and \ref{lem: loglog terminiation output H + L}, the output is a $(1\pm\varepsilon)$ approximation of $T$, and the result follows from \ref{remark: multiplicative to mixed one-sided}. Otherwise, the output is $(1-\varepsilon)^{-1}(\hTh + U_p)$. Since \ppass~ did not terminate during round $p$, by \ref{cor: loglog L is small} $T_\light \le U_p$. Therefore
    \[
    \hTh + U_p \ge (1-\varepsilon)T_\heavy + (1-\varepsilon)T_\light \ge (1-\varepsilon)T
    \]
    showing the lower bound. As for the upper bound, $\hTh \le (1+\varepsilon)T_\heavy$, therefore
    \begin{align*}
        (1-\varepsilon)^{-1}(\hTh + U_p) \le \frac{1+\varepsilon}{1-\varepsilon}T_\heavy + O(U_p) \le (1+4\varepsilon) T + O(U_p).
    \end{align*}
    The result follows by recalling that $U_p = \widetilde{O}(\varepsilon^{-2}n^{1/2^p})$.
\end{proof}

\subsection{Space complexity}
The space complexity of $\ppass$ is shown in two steps. First, we show that if $\ppass$ finished round $i$ with no output, then $T$ must be bounded from above, and these bounds are decreasing with $i$. We then use this bound to show that if the algorithm terminated at round $i$, then space used during the round does not exceed $\widetilde{O}(\varepsilon^{-3.5}m/\sqrt{T})$. Our analysis implies that the space complexity must increase with each round. Since we execute passes sequentially, the maximum space complexity is bounded by the maximum space complexity of all iterations.

\begin{lemma}[Non-termination]\label{lem: loglog nontermination}
    At the end of round $i$, if \ppass~did not terminate, then $T = \widetilde{O}(\varepsilon^{-3}n^{1/2^i})$ with probability $1-\delta - 1/\poly(n)$.
\end{lemma}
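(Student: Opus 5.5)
We split $T = T_\heavy + T_\light$ as in the round-$i$ analysis and bound the two parts separately. The bound on $T_\light$ is already available. By Corollary~\ref{cor: loglog L is small}, if \ppass~does not terminate in round $i$, then on the success event of Lemma~\ref{lem: loglog round i probability of blackbox success} we have $T_\light < \tau^*$. This gives $T_\light = \widetilde{O}(\varepsilon^{-2}n^{1/2^i})$, up to the relation between $\tau^*$ and $U_i$ in the current round.

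The heavy part requires bounding how large $\tau^*$ can be when no output is produced. By the algorithm's design, round $i$ outputs $\hTh$ whenever the heavy estimate dominates the threshold, i.e.\ whenever $\tau^* = \varepsilon\hTh$ and the light run fails. So I would argue by cases on how $\tau^*$ is set in round $i$, namely $\tau^* = \max(U_i, \varepsilon \hTh)$.
\begin{itemize}
\item If $\tau^* = \varepsilon\hTh > U_i$ and no estimate was returned on the light segments, then Lemma~\ref{lem: loglog termination output H} shows the algorithm outputs $\hTh$ and terminates. This contradicts non-termination.
\item Hence non-termination forces $\varepsilon \hTh \le U_i$. By Lemma~\ref{lem:heavytrianglesest}, with probability $1-1/n^3$ we have $\hTh \ge (1-\varepsilon)T_\heavy$. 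Therefore $T_\heavy \le (1-\varepsilon)^{-1}\varepsilon^{-1}U_i = \widetilde{O}(\varepsilon^{-3} n^{1/2^i})$.
\end{itemize}
Adding the two parts gives $T \le \widetilde{O}(\varepsilon^{-3}n^{1/2^i}) + \widetilde{O}(\varepsilon^{-2}n^{1/2^i}) = \widetilde{O}(\varepsilon^{-3}n^{1/2^i})$. This explains the extra factor of $\varepsilon^{-1}$ relative to Corollary~\ref{cor: loglog L is small}.

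For the probability, intersect three events: the success event of Lemma~\ref{lem: loglog round i probability of blackbox success}, the oracle event of Lemma~\ref{lem:oracleguarantees} (used inside Lemma~\ref{lem:heavytrianglesest}), and the representativeness event. A union bound gives $1-\delta-1/\poly(n)$. We may also need to take a union over the $i \le p = O(\log\log n)$ rounds and rescale $\delta$.

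The main obstacle is pinning down the exact termination rule for outputting $\hTh$ versus continuing, since the algorithm box is not shown. In particular, I need to confirm that non-termination really implies $\varepsilon\hTh \le U_i$, i.e.\ that $\tau^*$ is the maximum of $U_i$ and $\varepsilon\hTh$. If instead the rule compares $\hTh$ against the previous round's bound, I would first use induction on $i$ and carry the upper bound $T = \widetilde{O}(\varepsilon^{-3}n^{1/2^{i-1}})$ from round $i-1$. A second point to check is that the light-edge heaviness guarantee $t_e \le 2n^{1/2^i}$ holds for every light edge. This is what makes Lemma~\ref{lem:conc} applicable with $\eta = 2n^{1/2^i}$ and thus makes $U_i$ the correct representativeness threshold.
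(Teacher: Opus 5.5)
Your proposal is correct and is essentially the paper's proof. Non-termination forces $\tau^*\neq\varepsilon\hTh$, hence $\varepsilon\hTh\le\tau^*=U_i$ and $T_\light<U_i$ by Corollary~\ref{cor: loglog L is small}, and then $T_\heavy\le(1-\varepsilon)^{-1}\hTh\le(1-\varepsilon)^{-1}\varepsilon^{-1}U_i$ supplies the extra $\varepsilon^{-1}$. One small correction: the rule is exactly $\tau^*=\max(U_i,\varepsilon\hTh)$, as you guessed, and the fact that $\tau^*=\varepsilon\hTh$ triggers the output of $\hTh$ comes directly from that rule in pass $3i+1$ rather than from Lemma~\ref{lem: loglog termination output H}, which only certifies the accuracy of that output; also, no union bound over rounds is needed.
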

\begin{proof}
    Suppose $\ppass$ did not terminate at round $i$, then by \ref{cor: loglog L is small}, $T_\light,\varepsilon\hTh < \tau^* = U_i$. Since $T_\heavy/2 \le (1-\varepsilon)T_\heavy \le \hTh$, we conclude
    \[
    T = T_\light + T_\heavy \le \widetilde{O}(\varepsilon^{-2}n^{1/2^i}) + \widetilde{O}(\varepsilon^{-3}n^{1/2^i}) = \widetilde{O}(\varepsilon^{-3}n^{1/2^i}).
    \]
    The probability guarantees follow from \ref{cor: loglog L is small}
\end{proof}

\begin{lemma}[Space Complexity of Round $i$]\label{lem:loglog space round i}
    Suppose the algorithm terminates during round $i$, then the expected space used is $\widetilde{O}(\delta^{-1}\varepsilon^{-3.5}(m+T)/\sqrt{T})$.
\end{lemma}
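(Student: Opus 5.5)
We bound separately the space of each of the three passes in round $i$ (oracle construction, heavy-triangle counting, and the segmented run of $\Aa$ on $S^\light$). The upper bound on $T$ available at the start of the round comes from Lemma~\ref{lem: loglog nontermination} applied to round $i-1$. For $i=1$ we use Corollary~\ref{cor:warmup did not terminate} instead. Either way, since round $i$ was reached, with probability $1-\delta-1/\poly(n)$ we have $T=\widetilde{O}(\varepsilon^{-3}n^{1/2^{i-1}})$, and hence, with $\gamma_i=n^{1/2^i}$, $\sqrt{T}=\widetilde{O}(\varepsilon^{-1.5}\gamma_i)$. This single inequality converts every cost phrased in terms of $\gamma_i$ into one phrased in terms of $T$.

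\textbf{Oracle pass.} The oracle samples each vertex with probability $p_i=\min(1,50\varepsilon^{-2}\log n/\gamma_i)$ and stores all incident edges. Each edge is stored with probability at most $2p_i$, so the expected number stored is $O(m p_i)=\widetilde{O}(\varepsilon^{-2}m/\gamma_i)$. Substituting $1/\gamma_i=\widetilde{O}(\varepsilon^{-1.5}/\sqrt{T})$ gives $\widetilde{O}(\varepsilon^{-3.5}m/\sqrt{T})$. This is exactly where the exponent $3.5$ arises.

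\textbf{Heavy-triangle pass.} Computing $t^O_e$ for each streamed edge requires only the stored set $O$ plus $O(1)$ counters per edge classified $\heavy$. The number of heavy edges is at most $3T/(p_i\gamma_i)$ up to constants, since each has $t_e=\Omega(\gamma_i)$ and $\sum_e t_e=3T$. Using $\gamma_i=\widetilde{\Omega}(\varepsilon^{1.5}\sqrt{T})$, this is $\widetilde{O}(\sqrt{T})$, which is at most $\widetilde{O}(T/\sqrt{T})$. So this pass is dominated by the $(m+T)/\sqrt{T}$ term.

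\textbf{Segmented $\Aa$ pass (main obstacle).} Here $\Aa$ runs on segments $S^\light_j$ with thresholds $(1-\varepsilon)t^{-3}\tau_j$, where each segment has expected length $m_\light/t$. By Theorem~\ref{thm:bb}, the cost on segment $j$ is $\widetilde{O}(\varepsilon^{-2}\delta^{-1}(m/t + T^\light_j)/\sqrt{t^{-3}\tau_j})$, which equals $\widetilde{O}(\varepsilon^{-2}\delta^{-1}(m+T)/\sqrt{\tau_j})$ after absorbing $\mathrm{poly}(t)=\mathrm{polylog}(n)$ factors. Since the segments run sequentially and space is reused, the peak cost is attained at the last executed guess. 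Every guess satisfies $\tau_j\ge\tau^*\ge U_i=\widetilde{\Omega}(\varepsilon^{-2}\gamma_i)$, so the cost is at most $\widetilde{O}(\delta^{-1}\varepsilon^{-2}(m+T)/\sqrt{\varepsilon^{-2}\gamma_i})$. Via $\gamma_i=\widetilde{\Omega}(\varepsilon^{1.5}\sqrt{T})$ this becomes $\widetilde{O}(\delta^{-1}\varepsilon^{-3.5}(m+T)/\sqrt{T})$.

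The subtle point is whether $\gamma_i$ really scales like $\sqrt{T}$ and not like something smaller. If $T$ is much smaller than the round-$(i-1)$ upper bound, then $\gamma_i$ is too small relative to $\sqrt{T}$. In that case we instead use that termination at guess $\tau_j$ forces $\tau_j\le 2T_\light\le 2T$ (Corollary~\ref{cor:TL theta tauj}), or, when the output is $\hTh$, that $\tau^*=\varepsilon\hTh=\Theta(\varepsilon T)$. This keeps $1/\sqrt{\tau_j}\le\widetilde{O}(\varepsilon^{-1/2}/\sqrt{T})$ for the $\Aa$ pass.

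The oracle cost $m/\gamma_i$ is paid before termination, so it must be controlled by an upper bound $T\le\widetilde{O}(\varepsilon^{-3}\gamma_{i-1})=\widetilde{O}(\varepsilon^{-3}\gamma_i^2)$. That bound gives $m/\gamma_i\le m/\sqrt{\varepsilon^{3}T}$, which is the required direction. Finally, we add the three passes, take expectations (the segment lengths and the oracle sample are random, but their expectations behave as above), and absorb the $1/\poly(n)$ failure event into the polylogarithmic factors.
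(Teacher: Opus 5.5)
Your oracle-pass bound is correct and is the paper's argument. Reaching round $i$ gives $T=\widetilde{O}(\varepsilon^{-3}\gamma_i^2)$, via Corollary~\ref{cor:warmup did not terminate} when $i=1$, and that is exactly what the $p_i m=\widetilde{O}(\varepsilon^{-2}m/\gamma_i)$ cost of storing $O$ needs.

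The gap is in the segmented $\mathcal{A}$ pass. That cost scales like $1/\sqrt{\tau_j}$. So $\tau_j\ge U_i\ge\varepsilon^{-2}\gamma_i$ combined with $\gamma_i=\widetilde{\Omega}(\varepsilon^{1.5}\sqrt{T})$ only gives $\widetilde{O}(\delta^{-1}\varepsilon^{-1.75}(m+T)/T^{1/4})$. Your computation effectively treated $\sqrt{\varepsilon^{-2}\gamma_i}$ as $\gamma_i$. The round-$(i-1)$ bound $T\lesssim\gamma_i^2$ matches the oracle's $1/\gamma_i$ cost, not $\mathcal{A}$'s $1/\sqrt{\gamma_i}$ cost.

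The $U_i$ route reaches $(m+T)/\sqrt{T}$ only when $T\lesssim\gamma_i$, up to $\varepsilon$ and polylogarithmic factors. Your diagnosis of the delicate regime is therefore inverted: the route fails when $T$ is close to the round-$(i-1)$ bound, i.e.\ when $\gamma_i\approx\sqrt{T}$. For instance, take $i=1$ and $T\approx n^{3/4}$. The one-pass algorithm fails and round $1$ terminates, yet $1/\sqrt{U_1}\approx n^{-1/4}\gg n^{-3/8}\approx 1/\sqrt{T}$.

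Your fallback does not close this gap. The inequality $\tau_j\le 2T_\light\le 2T$ is an \emph{upper} bound on $\tau_j$, hence a lower bound on the space. What you need is $\tau_j=\Omega(\varepsilon T)$ at the terminating guess.

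Even the other half of Corollary~\ref{cor:TL theta tauj}, $T_\light\le 2\tau_j$, is not enough alone, because $T_\light$ could a priori be a negligible fraction of $T$. The paper combines it with $\tau_j\ge\tau^*\ge\varepsilon\hTh\ge\varepsilon(1-\varepsilon)T_\heavy$. This gives $\tau_j\ge\max(T_\light/2,\ \varepsilon(1-\varepsilon)T_\heavy)=\Omega(\varepsilon T)$, and hence $\widetilde{O}(\delta^{-1}\varepsilon^{-2.5}(m+T)/\sqrt{T})$ for the $\mathcal{A}$ pass. This is exactly why $\tau^*$ contains the term $\varepsilon\hTh$.

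This termination-based bound must be the main argument for the $\mathcal{A}$ pass in every case, not a fallback. It goes together with your correct observation that $\tau_j\ge\tau^*=\varepsilon\hTh=\Theta(\varepsilon T)$ when the output is $\hTh$. The bound inherited from round $i-1$ is used only for the oracle.

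One minor slip: heavy edges satisfy $t_e\ge\gamma_i/2$, so there are $O(T/\gamma_i)$ of them, not $3T/(p_i\gamma_i)$. This does not matter, since pass $3i$ needs only $O$ and a running sum.
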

\begin{proof}
    Let $\Ss_{\oracle}(i)$ be the space used by the oracle in round $i$, and $\Ss_{\Aa}(i,j)$ be the space complexity used by $\Aa$ with guess $\tau_j$ in round $i$. The expected space complexity of round $i$ is therefore $\Ss_{\oracle}(i) + \Ss_\Aa(i,j^*)$ where $j^*$ is the point where $\Aa$ terminates.
    Recall that  $s(m,\tau) = \widetilde{O}(\delta^{-1}\varepsilon^{-2}(m + T)/\sqrt{\tau})$, which is a decreasing function of $\tau$. Since the guesses $\tau_j$ are decreasing with $j$, $s(m,\tau)$ is an increasing function of $j$, so the space complexity of round $i$ is dominated by the point in which the round terminates. 
    
    Suppose first that $\Aa$ outputs an estimate, and that $\tau_j$ is the guess when the algorithm terminated. By Corollary~\ref{cor:TL theta tauj}, $T_\light = \Theta(\tau_j)$. Moreover, this implies $T_\light = \Omega(\tau_j) = \Omega(\tau^*) = \Omega(\varepsilon\hTh) = \Omega(\varepsilon T_\heavy)$ and therefore
    $\varepsilon T = \varepsilon T_\light + \varepsilon T_\heavy = O(T_\light)$, and as always $T = \Omega(T_\light)$. Consequently, 
    \begin{align*}
        \Ss_{\Aa}(i,j) &= \widetilde{O}\rbk{\delta^{-1}\varepsilon^{-2}\rbk{\frac{m_\light + T_\light}{\sqrt{(1-\varepsilon)\tau_j t^{-3}}}}} = \widetilde{O}\rbk{\delta^{-1}\varepsilon^{-2}\rbk{\frac{m + T}{\sqrt{T_\light}}}} \\
        &= \widetilde{O}\rbk{\delta^{-1}\varepsilon^{-2}\rbk{\frac{m + T}{\sqrt{\varepsilon T}}}} = \wt{O}\rbk{\delta^{-1}\varepsilon^{-2.5}\frac{m+T}{\sqrt{T}}}.
    \end{align*}

    Suppose now that the algorithm terminated and outputs $\hTh$. By the analysis in Lemma~\ref{lem: loglog termination output H}, $T_\light \le 2\varepsilon T_\heavy$, and consequently $T = \Theta(T_\heavy)$. Every execution of $\Aa$ in this round uses a guess $\tau_j\ge\tau^* = \varepsilon\hTh$, so the space of each is at most
    \[
    \Ss_{\Aa}(i,t) = \widetilde{O}\rbk{\delta^{-1}\varepsilon^{-2}\rbk{\frac{m_\light + T_\light}{\sqrt{(1-\varepsilon)\tau^* t^{-3}}}}} = \wt{O}\rbk{\delta^{-1}\varepsilon^{-2}\rbk{\frac{m+T}{\sqrt{\varepsilon \hTh}}}} = \wt{O}\rbk{\delta^{-1}\varepsilon^{-2.5}\rbk{\frac{m + T}{\sqrt{T}}}}.
    \]
    That is, $\Ss_\Aa(i,j^*) = \wt{O}\rbk{\varepsilon^{-2.5}\delta^{-1}\rbk{\frac{m+T}{\sqrt{T}}}}$.
    For $\Ss_{\oracle}(i)$, we only need to maintain $|O|$.
    Since the algorithm arrived at round $i$, it did not terminate at round $i-1$. By Lemma~\ref{lem: loglog nontermination} $\varepsilon T = \wt{O}(\varepsilon^{-2}n^{1/2^{i-1}})$, that is $\varepsilon^{3.5}\sqrt{T} = \wt{O}(\varepsilon^{-2}n^{1/2^i})$. We compute
    \[
    \Exp{|O|} = \sum_{v\in V} p_i \deg(v) = \widetilde{O}\rbk{\frac{m}{\varepsilon^2 n^{1/2^i}}} = \widetilde{O}\rbk{\frac{\varepsilon^{-3.5}m}{\sqrt{T}}}.
    \]

    We conclude 
    \[
    \Ss_{\oracle}(i) + \Ss_\Aa(i,j^*) = \wt{O}\rbk{\delta^{-1}\varepsilon^{-2.5}\rbk{\frac{m+T}{\sqrt{T}}}} +\wt{O}\rbk{\varepsilon^{-3.5}\rbk{\frac{m}{\sqrt{T}}}} = \wt{O}\rbk{\delta^{-1}\varepsilon^{-3.5}\rbk{\frac{m+T}{\sqrt{T}}}}.
    \]
\end{proof}

\begin{lemma}[Space Complexity] \label{lem:loglog space complexity}
    The expected space usage of \ppass~is $\widetilde{O}\rbk{\delta^{-1}\varepsilon^{-3.5}\rbk{\frac{m+T}{\sqrt{T}}}}$. 
\end{lemma}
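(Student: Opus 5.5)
The plan is to bound the expected space of \ppass{} by the maximum over rounds of the per-round space, and then to use Lemma~\ref{lem:loglog space round i} for the round in which the algorithm terminates. Passes are executed sequentially and memory is released between rounds, so the peak space is the maximum over executed rounds, not the sum. The cases split according to the round at which \ppass{} stops: the first pass (\warmup), some round $i\le p$, or completion of all $p$ rounds followed by output of $(1-\varepsilon)^{-1}(\hTh+U_p)$.

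First, I would handle the rounds executed before termination. In any non-terminating round $i' < i$, every call of $\Aa$ uses a guess $\tau_j \ge \tau^* \ge U_{i'}$. The oracle samples vertices with probability $p_{i'}$, so it costs $\widetilde{O}(\varepsilon^{-2}m/n^{1/2^{i'}})$. By Lemma~\ref{lem: loglog nontermination} applied to round $i'-1$, together with monotonicity of $s(m,\tau)$ in $\tau$ (as in the proof of Lemma~\ref{lem:loglog space round i}), these costs are at most the corresponding costs of the terminating round. Put differently, the space used increases with the round index, because the thresholds $U_{i'}$ and $\gamma_{i'}$ shrink. It therefore suffices to bound the last executed round.

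For the first pass, if \warmup{} terminates at segment $j$ with guess $\tau_j$, then the argument of Claim~\ref{claim:warmup algorithm} gives $\tau_j = \Theta(T)$. Hence $s(m_j,(1-\varepsilon)t^{-3}\tau_j) = \widetilde{O}(\delta^{-1}\varepsilon^{-2}(m+T)/\sqrt{T})$, where the $t^{3}$ factor is polylogarithmic. If \warmup{} does not terminate, the largest call uses $\tau^*$, which gives space $\widetilde{O}(\delta^{-1}\varepsilon^{-2}(m+T)/\sqrt{\tau^*})$. Corollary~\ref{cor:warmup did not terminate} shows $T=\widetilde{O}(\tau^*)$, so this is again within the bound. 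For a round $i\le p$ in which \ppass{} terminates, the bound is exactly Lemma~\ref{lem:loglog space round i}.

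The main obstacle is the case where no round terminates. Here round $p$ runs to completion, and we must still charge its oracle and $\Aa$ calls against $T$. I would argue as in the last part of the proof of Lemma~\ref{lem:loglog space round i}. Non-termination at round $p-1$ gives $T=\widetilde{O}(\varepsilon^{-3}n^{1/2^{p-1}})$, so the oracle's $\widetilde{O}(\varepsilon^{-2}m/n^{1/2^p})$ is $\widetilde{O}(\varepsilon^{-3.5}m/\sqrt{T})$. The $\Aa$ calls use guesses at least $U_p$, and $\sqrt{U_p}\gtrsim \varepsilon^{1.5}\sqrt{T}$ up to polylogarithmic factors, which gives the stated $(m+T)/\sqrt{T}$ bound. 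This step is delicate because $T$ could be far below $U_p$. In that regime the bound should be read as $m/\sqrt{\min(T,U_p)}$ dominated by $m/\sqrt{T}$, which holds because $U_p \ge$ polylog and $T\ge 1$ whenever $T>0$. The expectation is then taken over the random termination round: we bound the space conditioned on each termination outcome by the same quantity, which bounds the unconditional expectation.
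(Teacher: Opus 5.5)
Your proposal follows the paper's proof essentially step for step: the same split into termination in the first pass, in some round $i\le p$ (Lemma~\ref{lem:loglog space round i}), or after all $p$ rounds (Lemma~\ref{lem: loglog nontermination}), with the same reduction to the last executed round via monotonicity of space across rounds. The one step to tighten, equally loose in the paper, is that monotonicity: shrinking thresholds alone do not show that a non-terminating round $i'$, whose calls to $\Aa$ descend to the guess $U_{i'}$, is cheaper than a later round that stops at a large guess. Instead, bound each such round directly by applying Lemma~\ref{lem: loglog nontermination} to round $i'$ itself; this gives $T\lesssim\varepsilon^{-1}U_{i'}$ and hence $\Aa$-space $\widetilde{O}(\delta^{-1}\varepsilon^{-2.5}(m+T)/\sqrt{T})$. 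The same bound at round $p$ (not $p-1$) is also what justifies your $\sqrt{U_p}\gtrsim\varepsilon^{1.5}\sqrt{T}$, and the regime that needs it is $T\gg U_p$, not $T\ll U_p$.
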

\begin{proof}
    If the algorithm terminates at the first pass, the result follows from \ref{claim:warmup algorithm}.
    
    Otherwise, note that $s(m,\tau) = \wt{O}(\delta^{-1}\varepsilon^{-2}(m+T)/\sqrt{\tau})$ is a decreasing function of $\tau$. Consequently, for each $i$ $\Ss_\Aa(i,j)$ is an increasing function of $j$ and $\Ss_\Aa(i+1,\cdot) \le \Ss_\Aa(i,\cdot)$. That is, with each round we increase the space complexity used by the algorithm, so the total space complexity is dominated by the point in which $\ppass$ terminated.\\
    If $\ppass$ terminates at round $i$, then the claim follows from \ref{lem:loglog space round i}.
    If $\ppass$ did not terminate in all rounds $i\in [p]$, then by \ref{lem: loglog nontermination} $T = \wt{O}(\varepsilon^{-3}n^{1/2^p})$. The peak space usage therefore was at round $p$, where, by the same analysis as in \ref{lem:loglog space round i}, the space complexity is \[\wt{O}\rbk{\delta^{-1}\varepsilon^{-2}\frac{m+T}{\sqrt{(1-\varepsilon)\varepsilon^{-2}n^{1/2^p}}}} = \wt{O}\rbk{\delta^{-1}\varepsilon^{-3.5}\frac{m+T}{\sqrt{T}}}.\]
\end{proof}

\begin{proof}
    [Proof of Theorem \ref{them:loglogn algorithm}]
    Correctness follows from \ref{thm:loglog correctness}. The space complexity follows from \ref{lem:loglog space complexity}
\end{proof}

\subsection{A multiplicative approximation algorithm}
We designed a family of algorithms that allow for a tradeoff between the accuracy of approximation and the number of passes over the stream: $p$ passes guarantee a $(1+\varepsilon, \widetilde{O}(\varepsilon^{-2}n^{1/2^p}))$ approximation of $T$. Setting $p = \log\log(n)$, we can modify \ppass~ slightly and strengthen the guarantees.  Let \loglogpass~ be the following algorithm: Run \ppass~ with $p= \lceil\log\log(n)\rceil$ where the last pass simply keeps all edges of $S$ and counts triangles exactly.

\begin{theorem}[$\log\log(n)$ Pass Multiplicative Error]
\loglogpass~ returns a $(1+\varepsilon,0)$ mixed approximation of $T$ with probability at least $1-\delta - 1/\poly(n)$, using $\wt{O}(\delta^{-1}\varepsilon^{-3.5}(m+T)/\sqrt{T})$ expected space.
\end{theorem}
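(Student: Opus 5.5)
The plan is to reduce everything to the analysis of \ppass\ already carried out, and treat only the modified last round separately. I split into two cases according to whether \loglogpass\ terminates during some round $i < p = \lceil\log\log(n)\rceil$, or reaches the final pass in which all edges are kept.

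\emph{Case 1: termination in an earlier round.} The output is then produced exactly as in \ppass. By Lemma~\ref{lem: loglog terminiation output H + L} and Lemma~\ref{lem: loglog termination output H}, it is a $(1\pm\varepsilon)$ approximation of $T$ with probability $1-\delta-1/\poly(n)$. Remark~\ref{remark: multiplicative to mixed one-sided} converts this, after rescaling, into a $(1+\varepsilon,0)$ mixed approximation. In this case the space is bounded verbatim by Lemma~\ref{lem:loglog space round i}.

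\emph{Case 2: the final pass is reached.} Here the algorithm stores all of $S$ and counts $T$ exactly, so correctness holds deterministically. Equally, the output could be $\hTh$ plus an exact count of light triangles, which is still a $(1\pm\varepsilon)$ approximation. The substance is the space bound. Since no round $i\le p-1$ terminated, Lemma~\ref{lem: loglog nontermination} applied at round $p-1$ gives, with probability $1-\delta-1/\poly(n)$,
\[
T = \wt{O}\rbk{\varepsilon^{-3}n^{1/2^{p-1}}} = \wt{O}(\varepsilon^{-3}),
\]
because $n^{1/2^{p-1}}\le n^{2/\log n}=O(1)$. Storing everything costs $m$, so it suffices to check $m = \wt{O}(\varepsilon^{-3.5}m/\sqrt{T})$. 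This holds since $\sqrt{T}=\wt{O}(\varepsilon^{-1.5})$. The space of the earlier rounds is controlled by Lemma~\ref{lem:loglog space complexity}, and the maximum over rounds dominates.

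The main obstacle is that the space bound is an expectation, while the non-termination bound on $T$ only holds with high probability. On the failure event the algorithm may reach the last pass even though $T$ is large, and then it uses $m$ space. I would handle this as the paper implicitly does for \ppass: condition on the success event, which holds with probability $1-\delta-1/\poly(n)$. If an unconditional expectation is required, a safeguard is to bound the contribution of the failure event directly. For example, abort (output \fail) if the stored edge count exceeds the budget implied by the round-$(p-1)$ certificate $\tau^*$, which is computable by the algorithm. This caps the worst-case space, and the probability of aborting is absorbed into $\delta$. Finally, rescale $\delta$ and $\varepsilon$ across the $O(\log\log n)$ rounds to conclude.
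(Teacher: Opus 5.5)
Your proposal is correct and follows the paper's proof: termination in an earlier round is handled by the two termination lemmas, the final pass counts $T$ exactly, and the non-termination lemma gives $T=\widetilde{O}(\varepsilon^{-3})$, so storing all $m$ edges fits within $\widetilde{O}(\varepsilon^{-3.5}m/\sqrt{T})$. You invoke that lemma at round $p-1$ where the paper uses round $p$, which is harmless since $n^{1/2^{p-1}}=O(1)$ as well.

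One caveat concerns your optional abort safeguard, which would not work as described. The final pass stores exactly $m$ edges whatever $T$ is, so any cap loose enough to let success-event runs finish also lets failure-event runs through. Like the paper's, your expected-space bound therefore rests on conditioning on the success event.
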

\begin{proof}
    By \ref{lem: loglog termination output H} and \ref{lem: loglog terminiation output H + L}, if \loglogpass~ terminates before the last pass, the output is a $(1+\varepsilon,0)$ approximation of $T$. In the last pass, we count exactly the number of triangles, so it's clearly the desired approximation. We only need to justify the space usage of the last pass: Since round $p = \log\log(n)$ did not end in termination, by \ref{lem: loglog nontermination}, $T = O(\varepsilon^{-1}U_p) = \widetilde{O}(\varepsilon^{-3})$. Therefore
    \[
    m = \wt{O}\rbk{\frac{m}{\sqrt{\varepsilon^{-3}\polylog(n)}}} = \wt{O}\rbk{\varepsilon^{-3.5}\frac{m}{\sqrt{T}}}.
    \]
\end{proof}

%% file: algorithms/loglogpass_mixed.tex
\begin{figure}[!htb]
\begin{center}
\fbox{
\begin{minipage}{0.95\textwidth}
{
\noindent {\ppass}\\
\textbf{Input:} A random order stream $S$ with $|S| = m$, desired number of passes $p$.
\begin{enumerate}[start=0]
    \item {\em Preprocessing: }
\begin{itemize}
\item Let $t$ be the smallest integer such that $2^{-t}m^{3/2} \le c\epsilon^{-2} (n \ln(n)t +\ln^3(n)t^5)$ and set $\tau^* = c\epsilon^{-2} (n \ln(n)t +\ln^3(n)t^5)$.
\item Sample $(m_1, \ldots, m_t)\sim \Mult(m;1/t, \ldots, 1/t)$.
\item Segment the length-$m$ stream into $\langle S_1 | S_2 | \ldots | S_{t}  \rangle$ where $|S_i|=m_i$ for $i\in [t]$. 
\end{itemize}
\item {\em First Pass:} For each $j\in [t]$, run $\Aa(S_j,(1-\varepsilon)t^{-3}\tau_j)$ where $\tau_j = 2^{-j}m^{1.5}$. Output its estimate scaled by $t^3$ if the output is not \fail. Otherwise, continue to the next segment.
\item For $i = 1,2,\ldots ,p$
\begin{itemize}
    \item {\em Pass $3i-1$:} Sample a subset of nodes where each node is sampled with probability $p_i=\min(1,50\epsilon^{-2} ( \log n) /n^{1/2^i})$. Collect the set of incident edges to sampled nodes $O$ and define a function \[
    \oracle(e)=
    \begin{cases}
    \light & \mbox{ if } t_e^{O}< p_i n^{1/2^i}\\
    \heavy & \mbox{ otherwise}
    \end{cases}
    \]
    \item {\em Pass $3i$:} Set $\hTh = 0, m_\light = 0$ For each edge $e$, if $\oracle(e) = \light$, increase $m_\light$ by $1$. Otherwise, $\hTh\gets \hTh + p_i^{-1}(t_{e,1}^O + t_{e,2}^O/2 + t_{e,3}^O/3)$. 

    \item {\em Pass $3i+1$:} Let $S^\light$ be the light edges in the stream.
    \begin{itemize}
        \item Let $t$ be the smallest integer such that $2^{-t}m^{3/2} \le c\epsilon^{-2} (n^{1/2^i} \ln(n)t +\ln^3(n)t^5)$. 
        \item Let $U_t = c\epsilon^{-2} (n^{1/2^i} \ln(n)t +\ln^3(n)t^5).$
        \item Set $\tau^* = \max(U_t,\varepsilon \hTh)$.
        \item Sample $(m_1, \ldots, m_t)\sim \Mult(m_\light;1/t, \ldots, 1/t)$.
        \item  Segment $S^\light$ into $\langle S^\light_1 | S^\light_2 | \ldots | S^\light_t\rangle$ where $|S^\light_j| = m_j$ for $j\in [t]$.
        \item For each $j\in [t]$ run $\Aa(S_j^\light,\tau_j(1-\varepsilon)t^{-3})$ where $\tau_j = \max\left(2^{-j}m_\light^{3/2},\tau^*\right)$. If $\Aa$ returns an estimate $\hTl$, output $t^3\hTl + \hTh$.
        \item If we reach the end of the stream and $\tau^* = \varepsilon\hTh$, output $\hTh$. Otherwise, mark this round as failed and continue to the next round.
    \end{itemize} 
\end{itemize}
\item \textit{Last Pass:} If we finish round $p$ with \fail, output $(1-\varepsilon)^{-1}(\hTh + U_p)$.
\end{enumerate}
}
\end{minipage}
}
\end{center}
\caption{An $O(p)$ pass parameter-free triangle counting algorithm. } \label{fig:p-pass}
\end{figure}

%% file: lower-bound.tex
\section{Lower Bound on Parameter Free Triangle Estimation}

In this section, we prove a space lower bound for our problem using the following formulation of the Random Partition Model. To prove our lower bound, we first consider a special case of the \emph{robust communication model} introduced by Chakrabarti, Cormode, and McGregor~\cite{ChakrabartiC016}.

\begin{definition}[Random Partition Communication Model]
Consider the following  two-player communication game for evaluating a function on an input graph $G=(V,E)$: 
\begin{enumerate}
\item  Let $\sigma: E\rightarrow \{A,B\}$ be a random mapping where $\Pr[\sigma(e)=A]=p$ for each $e\in E$ and  all edges are mapped independently. Let $\sigma_A(E)=\{e\in E:\sigma(e)=A\}$ and $\sigma_B(E)=\{e\in E:\sigma(e)=B\}$
\item Alice receives input $S_A= \sigma_A(E)$.
\item Bob receives input $S_B=\sigma_B(E)$, and 
\item Both have access to a string of public random bits $R$. 
\end{enumerate}
A one-way protocol $P$ to evaluate a function $f$ on $G$ has two components:
\begin{enumerate}
\item A message $M(S_A,R)\in \{0,1\}^*$ that Alice sends to Bob.
\item An output computed by Bob, $\ccout(M(S_A,R),S_B,R)$.
\end{enumerate}
We define the error of $P$ to be \[\err(P,f) ~:=~ \max_G \{\Pr[\ccout(M(S_A,R),S_B,R) \ne f(G)]\}\]
 where the probability is taken over the random choices of $\sigma$ and $R$.   Let 
 \[\cost(P) := \max_{G,\sigma, R} |M(S_A,R)|\]
 i.e., the longest message sent under any input, partition, and value of the public random string.
 Let 
 \[
 \cost_{\mathcal K} (P):= \max_{G\in \mathcal K,\sigma, R} |M(S_A,R)|\]
  be the cost of the protocol amongst graphs $G$ in some set $\mathcal K$. 
Define,
 \[
  \ccRR_{\delta,p}(f) ~:=~ \min_P \{\cost(P):\, \err(P,f) \le \delta\}
 \]
and let 
 \[
  \ccRR^{\mathcal K}_{\delta,p}(f) ~:=~ \min_P\{\cost_{\mathcal K}(P):\, \err(P,f) \le \delta\}
 \]
We emphasize that $ \ccRR^{\mathcal K}_{\delta,p}(f)$ is the maximum length of a message sent when $G\in \mathcal K$ by a protocol that has error at most $\delta$ on \emph{any} $G$.
 \end{definition}

We first define two communication problems.
\begin{definition}[Triangle Testing and Tolerant Triangle Testing]
For $0<\tau$, define the function $\tritest_{\tau,m}(\cdot)$ on graphs with at most $m$ edges and either $0$ or $\tau$ triangles where
\[
\tritest_{\tau,m}(G)= 
\begin{cases}
0 & \mbox{ if $G$ is triangle-free} \\
1 & \mbox{ if $G$ contains exactly $\tau$ triangles} \end{cases}
\]
For $0<\tau_1<\tau_2$, define the function $\toltritest_{\tau_1,\tau_2}(\cdot)$ on graphs at most $m$ edges with either $0$, $\tau_1$, or at least $\tau_2$ triangles where
\[
\toltritest_{\tau_1,\tau_2,m}(G)= 
\begin{cases}
0 & \mbox{ if $G$ is triangle-free} \\
1 & \mbox{ if $G$ contains exactly $\tau_1$ triangles} \\
2 & \mbox{ if $G$ contains at least $\tau_2$ triangles} 
\end{cases}
\]
\end{definition}
 
 We will need the following prior result about the communication complexity of $\tritest_{\tau,m}$.
 
\begin{theorem}[McGregor and Vorotnikova \cite{mcgregor2020triangle}]\label{thm:mv}
Let $m \in \mathbb N$ be sufficiently large and suppose:
\[0<\tau< \sqrt{m} \qquad  \qquad p=\tau^{-0.5}/10 \qquad  \qquad  \delta=1/m^4 \ .\]
Then $
 \ccRR_{\delta,p}(\tritest_{\tau,m})=\Omega(m/\sqrt{\tau})$. 
\end{theorem}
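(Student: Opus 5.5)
The plan is a reduction from the one-way \textsc{Index} problem, embedded into the random partition model. Fix $N=\Theta(m/\sqrt{\tau})$. For each $i\in[N]$ I would build a gadget on two vertex groups $A_i,B_i$ of size $k=\Theta(\sqrt{\tau})$. If $x_i=1$, the gadget contains a perfect matching $M_i$ between $A_i$ and $B_i$ ($k$ edges); if $x_i=0$, it contains a structurally indistinguishable ``shifted'' matching that cannot close triangles. This uses $Nk=\Theta(m)$ edges in total. Bob's completion for index $j$ adds $k$ new vertices $w_1,\dots,w_k$, each joined to all of $A_j\cup B_j$, which is $\Theta(\tau)\le m$ extra edges since $\tau<\sqrt m$. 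Every matching edge of $M_j$ together with each $w_\ell$ gives a triangle, so $G$ has exactly $k^2=\tau$ triangles when $x_j=1$ and none when $x_j=0$.

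The complication is that in the random partition model Alice does not choose her edges: each edge goes to Alice only with probability $p=\tau^{-1/2}/10$. Bob therefore sees most of gadget $j$ himself, and would learn $x_j$ directly from the naive encoding. So I would hide the bit in a way that only edges landing on Alice's side carry information. Concretely, within each gadget the matching would be a uniformly random matching drawn from a large public family (using the public randomness $R$). The ``$x_i=1$'' versus ``$x_i=0$'' distinction would then depend on a small random subset of the edges, and with $p=\tau^{-1/2}/10$ and $k=\Theta(\sqrt{\tau})$ we have $pk=\Theta(1)$. This makes each gadget contribute, with constant probability, a decisive edge to Alice and nothing decisive to Bob.

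Conditioning on $\sigma$, I would then argue by a direct-sum / information-complexity argument. Let $\mathcal{G}$ be the good event that $\Omega(N)$ gadgets have their decisive edge on Alice's side and none on Bob's. By Chernoff bounds, $\mathcal{G}$ fails with probability $e^{-\Omega(N)}$, which is well within the error budget $\delta=1/m^4$ after a union bound. On $\mathcal{G}$, Bob (who holds $j$ uniformly among the good gadgets) must recover $x_j$ with error $\le\delta+o(1)$. Hence Alice's message must have mutual information $\Omega(1)$ with $x_j$ for most good $j$. Since the $x_i$ are independent, the chain rule gives $|M|\ge I(M;x)=\Omega(N)=\Omega(m/\sqrt{\tau})$, because a protocol with small error on every graph in particular succeeds on this distribution.

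The main obstacle is the second step: designing the gadget so that, under a \emph{random} rather than adversarial partition, Bob's share is genuinely independent of $x_j$ while the triangle count is still exactly $0$ or exactly $\tau$. This is where the specific choice $p=\tau^{-1/2}/10$ is used: the factor $1/10$ makes the expected number of Alice edges per gadget a small constant, and the gadget size is tuned to that. My first attempt would be to symmetrize each gadget as a random disjoint union of ``real'' and ``decoy'' matchings. The resulting leakage to Bob, namely which decoy edges he sees, would be controlled by showing that, conditioned on Bob's view, the posterior on $x_j$ stays within a constant of uniform. Once this is in place, the remaining steps are routine concentration bounds and standard \textsc{Index} bookkeeping.
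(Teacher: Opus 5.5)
The paper does not prove this statement. It imports it from McGregor--Vorotnikova, remarking only that the hard instances have $O(\sqrt{m}\,\tau)$ nodes. Taken on its own, your proposal has a genuine gap, located exactly at the step you defer (``once this is in place, the remaining steps are routine''). The gadget you describe cannot be rescued by symmetrizing the no-instances, because its \emph{yes}-instances are already easy for Bob alone. In the random partition model the whole graph, completion edges included, is split at random. If a graph contains $s$ pairwise edge-disjoint triangles, the events ``this triangle has an edge at Alice'' are independent, so $S_B$ is triangle-free with probability at most $(1-(1-p)^3)^s\le (3p)^s$. In your yes-instance the triangles $\{a_i,b_i,w_i\}$, $i\in[k]$, are pairwise edge-disjoint, so $s\ge k=\sqrt{\tau}$. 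Now consider the zero-communication protocol ``Bob outputs $1$ iff $S_B$ contains a triangle''. It never errs on triangle-free graphs. On your yes-instances it errs with probability at most $(10\sqrt{\tau}/3)^{-\sqrt{\tau}}$, which is below $\delta=m^{-4}$ as soon as $\sqrt{\tau}\,\ln(10\sqrt{\tau}/3)\ge 4\ln m$ (e.g.\ for every $\tau\ge \ln^2 m$ once $m$ is large).

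There are two further problems. First, Alice receives about $2p\tau=\sqrt{\tau}/5$ of the $2\tau$ edges at $w_1,\dots,w_k$. These reveal $j$ to her except with probability about $e^{-\sqrt{\tau}/5}$, and she can then just forward her at most $k$ edges of gadget $j$, i.e.\ $O(\sqrt{\tau}\log m)\ll m/\sqrt{\tau}$ bits. Second, the public string $R$ belongs to the protocol and is seen by Bob, so it cannot hide anything from him; any hiding has to come from the input distribution. For large $\tau$, all the events you would need to condition on (Bob seeing no triangle, Alice seeing no completion edge) have probability far below $\delta$. So the standard robust-lower-bound move (condition on a partition event of probability $\gg\delta$, paying $\delta/\Pr[\cdot]$ in error) cannot save the construction.

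What is missing is a yes-instance whose $\tau$ triangles can all be destroyed by $O(1)$ edges, for example a book of $\tau$ wedges $u$--$w_\ell$--$v$ closed by a single spine $\{u,v\}$. Then $S_B$ is triangle-free exactly when the spine lands with Alice, an event of probability $p\ge m^{-1/4}/10\gg\delta$ on which one can condition. Bob must then learn from Alice whether her roughly $pm$ background edges contain a candidate spine, an \textsc{Index}-type task of size $\Theta(pm)=\Theta(m/\sqrt{\tau})$; no direct sum over $N$ gadgets is needed. The real work is to stop Alice from locating the spine via the roughly $2p\tau$ wedge edges she receives. One way is to hide the book among $\Theta(\sqrt{m})$ identical decoy books over a triangle-free bipartite background. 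That costs $O(\tau\sqrt{m})=O(m)$ edges precisely because $\tau<\sqrt{m}$, and is consistent with the $O(\sqrt{m}\,\tau)$ node count the paper mentions. With $p=\tau^{-1/2}/10$, the expected number of complete wedges of a fixed book that Alice sees is $\tau p^2=1/100$, so with constant probability she cannot link its two endpoints.
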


Note that in the hard instances used to establish the above bound, the number of nodes is $O(\sqrt{m}\tau)$ and therefore  the graph is sparse when $\tau$ is large.

Our main communication complexity result is proven in the next lemma. Essentially it establishes that even if there are at least $\tau_2$ triangles in the input graph, any protocol for $\toltritest_{\tau_1,\tau_2,m}$ must use sufficient communication to distinguish between $0$ and $\tau_1$ triangles because the first player may have no information where there were $\tau_2$ triangles.

\begin{lemma}[Tolerant Triangle Testing]\label{tttthm}
Let $m \in \mathbb N$ be sufficiently large and suppose:
\[0<\tau_1 < \sqrt{m/2} \qquad \qquad \tau_1< \tau_2 < m/4 \qquad \qquad p=\tau_1^{-0.5}/10 \qquad \qquad  \delta=1/m^4\]
Then $
 \ccRR^{\mathcal K}_{\delta,p}(\toltritest_{\tau_1,\tau_2,m})=\Omega(m/\sqrt{\tau_1})$ where $\mathcal K$  is the set of graphs with at most $m$ edges and at least $\tau_2$ triangles.
\end{lemma}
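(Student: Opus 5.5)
We reduce from $\tritest_{\tau_1,m/2}$, invoking Theorem~\ref{thm:mv} with $m/2$ in place of $m$; this is legitimate since $\tau_1<\sqrt{m/2}$.

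Suppose $P$ is a protocol for $\toltritest_{\tau_1,\tau_2,m}$ with error at most $\delta$ on every graph and cost $C$ on graphs in $\mathcal K$. Given an instance $H$ of $\tritest_{\tau_1,m/2}$, the players build $G = H \cup D$, where $D$ is a vertex-disjoint gadget with at most $m/2$ edges and at least $\tau_2$ triangles.

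A natural choice for $D$ is a disjoint union of triangles, or a small clique, with $O(\tau_2)\le m/2$ edges; this is possible since $\tau_2<m/4$. Its vertex labels are fixed by public randomness. The gadget is split by $\sigma$ using public coins: each gadget edge goes to Alice with probability $p$, simulated jointly from $R$. This way the partition of $G$ has exactly the law required by the random partition model. Then $G\in\mathcal K$ always, so $|M|\le C$.

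The key idea is that Alice's portion of the gadget depends only on $R$, not on the instance. Alice therefore runs $P$ on $S_A(H)\cup D_A$ exactly as if she held $G$, and sends the message $M$.

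The remaining task is for Bob to decide whether $H$ has $0$ or $\tau_1$ triangles, even though $P$ on $G$ will always just answer ``$2$''. Here is where the ``Alice has no information'' principle is used. Bob simulates $P$'s output on two different graphs while reusing Alice's single message:
\begin{itemize}
\item the actual graph $H\cup D$;
\item the graph $H$ alone, i.e.\ with Bob's gadget edges removed.
\end{itemize}
The second simulation requires that Alice's message for $H\cup D_A$ be a valid message for some graph whose $B$-part Bob knows. We make this work by letting the gadget be such that every triangle of $D$ requires at least one $B$-edge; equivalently, Alice's edges $D_A$ alone contribute no triangles. Then Bob can run $\ccout(M,S_B(H),R)$ on the input graph $H\cup D_A$, whose triangle count equals that of $H$ provided $D_A$ creates no triangles with $H$ (true by disjointness) and none within itself.

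We choose $D$ so that, with high probability, $D_A$ is triangle-free. Since $p=\tau_1^{-1/2}/10<1$, a union of disjoint triangles has this property unless all three edges of some triangle land with Alice. To eliminate that event deterministically, we instead condition via public coins: we choose the gadget as a random graph $D$ whose $A$-part is a fixed triangle-free set, for instance a matching, coupled with the partition.

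The graph $H\cup D_A$ has at most $m$ edges and triangle count in $\{0,\tau_1\}$, so $P$ errs on it with probability at most $\delta$. Hence Bob recovers $\tritest(H)$ with error $\delta+o(1)$ from a message of length at most $C$. We must still check that on this second graph the message length is bounded by $C$. It is, because the message is the same string Alice sent for $G\in\mathcal K$.

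Therefore $C\ge \ccRR_{\delta',p}(\tritest_{\tau_1,m/2})=\Omega(m/\sqrt{\tau_1})$. Absorbing the slight change in error probability uses the standard amplification and robustness of the bound in Theorem~\ref{thm:mv}.

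The main obstacle is the gadget. It must satisfy two requirements at once. First, its Alice-side part is independent of the instance and triangle-free, so that the same message serves both $G$ and $H\cup D_A$. Second, the partition law stays exactly random with probability $p$, independently per edge. The first thing I would try is the disjoint-triangles gadget with one designated ``Bob-likely'' edge per triangle. I would then handle the rare event that Alice receives an entire triangle by a union bound, using $\tau_2 p^3$ small, or by slightly reducing $\tau_2$.
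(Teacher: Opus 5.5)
Your proposal has a genuine gap at the step ``$H\cup D_A$ has at most $m$ edges and $0$ or $\tau_1$ triangles, so $P$ errs on it with probability at most $\delta$.'' The guarantee $\err(P,\toltritest_{\tau_1,\tau_2,m})\le\delta$ is an average over a partition $\sigma$ of a fixed graph, in which every edge independently goes to Alice with probability $p$. The input you hand to $\ccout$ (Alice holds $S_A(H)\cup D_A$, Bob holds $S_B(H)$) is the graph $H\cup D_A$ with \emph{every} edge of $D_A$ forced to Alice's side. For a given $D'=D_A$, that configuration has probability $p^{|D'|}$ in the model. All you can infer from $\err\le\delta$ is therefore a conditional error of at most $\min(1,\delta p^{-|D'|})$, which is vacuous once $|D'|\ge \log(1/\delta)/\log(1/p)=O(\log m)$. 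A valid protocol may answer wrongly on exactly such atypical partitions; with enough communication, Bob can recognize, say, a large connected piece of the graph lying entirely on Alice's side. Making $D_A$ triangle-free, by public-coin conditioning or by a union bound, does not address this. Moreover, that union bound costs about $\tau_2p^3=\tau_2\tau_1^{-3/2}/1000$, which can be $\Theta(m)$. Finally, ``standard amplification'' from error $\delta+o(1)$ back to the $1/m^4$ required by Theorem~\ref{thm:mv} is not available in this model, since repetitions with fresh public coins still share the same $\sigma$.

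The missing idea is to apply ``Alice has no information'' on the cost side, which is a worst case over $\sigma$ and $R$, rather than on the correctness side, which is an average. Your closing remark, that Alice's message depends only on her input, is exactly the right tool, but it should be used in the opposite direction. The paper runs $P$ honestly, under a genuinely random partition, on $G_1\sqcup J$, where $J=K_{2,\tau_2}$ on $u_1,\dots,u_{\tau_2},t_1,t_2$ is triangle-free. Hence $\toltritest_{\tau_1,\tau_2,m}(G_1\sqcup J)=\tritest_{\tau_1,m/2}(G_1)$, and the derived protocol has error exactly $\delta$. Theorem~\ref{thm:mv} then provides a specific $(G^*,\sigma^*,r^*)$ on which Alice's message has length $\Omega(m/\sqrt{\tau_1})$. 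Adding the single edge $\{t_1,t_2\}$ and assigning it to Bob yields a graph in $\mathcal K$: it has $\tau_2$ triangles and at most $m$ edges. Alice's input, and hence her message, is unchanged. Because $\cost_{\mathcal K}(P)$ maximizes over all partitions, this gives $\cost_{\mathcal K}(P)=\Omega(m/\sqrt{\tau_1})$, with no analysis of gadget triangles on Alice's side and no amplification.
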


\begin{proof}
Let $P$ be a protocol for $\toltritest_{\tau_1,\tau_2,m}$ with error probability at most $\delta$. 
Suppose Alice and Bob want to evaluate  $\tritest_{\tau_1,m/2}$ on a graph $G_1$  with at most $m/2$ edges, where each edge is given to Alice independently with probability $p$ and to Bob otherwise.  They can use $P$ to solve $\tritest_{\tau_1,m/2}$ in the following way:

\begin{enumerate}
\item Define a gadget graph $J$ where $V(J)=\{u_1,\ldots,u_{\tau_2},t_1,t_2\}$ 
and let $E(J)$ contain all edges $\{u_i,t_j\}$ for $i\in[\tau_2]$ and $j\in\{1,2\}$. Let $H=G_1 \sqcup J$ be the disjoint union of $G_1$ and $J$.
\item Use public randomness to distribute the edges of $J$ between themselves such that each edge in $J$ is received by Alice with probability $p$.
\item Run $P$ to compute $\toltritest_{\tau_1,\tau_2,m}(H)$ with probability at least $1-\delta$. Note that since $J$ is triangle free, $\toltritest_{\tau_1,\tau_2,m}(H)=\tritest_{\tau_1,m/2}(G_1)$.
\end{enumerate}

The above reduction defines a protocol $P'$ for $\tritest_{\tau_1,m/2}$ with error at most $\delta$. Hence,
\[
\cost(P') \ge  \ccRR_{\delta,p}(\tritest_{\tau_1,m/2})
= \Omega(m/\sqrt{\tau_1})
\]
by Theorem~\ref{thm:mv}. Therefore, there exists a graph $G^*$  with at most $m/2$ edges, a partition $\sigma^*$ of the edges of $H^*=G^*\sqcup J$, and a setting $r^*$ of the public randomness such that Alice's message in protocol $P'$ has length 
\begin{equation} \label{eq:1}
 |M(S_A^*,r^*)|=\Omega(m/\sqrt{\tau_1}) 
\qquad \mbox{  where } \qquad S_A^*=\{e\in E(H^*): \sigma^*(E(H^*))=A\} \ .
 \end{equation} 

Now consider the graph  $H^+=G^*\sqcup J^+$, where $J^+$ is obtained from $J$ by adding the edge $e^*=\{t_1,t_2\}$. Note that $H^+$ has at least $\tau_2$ triangles. To bound $\cost_{\mathcal K} (P) $ we proceed as follows:
\begin{eqnarray*}
\cost_{\mathcal K} (P) 
 =   \max_{G\in \mathcal K,\sigma:E(G)\rightarrow \{A,B\}, R} |M(\sigma_A(E(G)),R)|  & \geq &  \max_{G=H^+,\sigma=\sigma^{+}, R=r^*} |M(\sigma_A(E(G)),R)|  \\
&  \geq &  |M(\sigma^+_A(E(H^+)),r^*)|  \\ 
& = & |M(S_A^*,r^*)|
\end{eqnarray*}
where $\sigma^{+}$ is the partition that extends $\sigma^*$ to a partition of $E(H^+)$ by assigning $e^*$ to Bob and assigning every other edge exactly as in $\sigma$. Note that with this definition $\sigma_A^+(E(H^+))=S_A^*$ as required in the last step. Hence, combined Equation \ref{eq:1}, we conclude  $\cost_{\mathcal K}(P)\ge \Omega(m/\sqrt{\tau_1})$ as required.
\end{proof}

We conclude with the following lower bound on parameter-free triangle estimation in the random order data stream model.

\begin{theorem}[Data Stream Lower Bound]
Let $\alpha>1$ and $\beta>0$ satisfy $\beta<\sqrt{m/2}$ and $2(\alpha+1)\beta < m/4$. Let $\mathcal A$ be a one-pass, parameter-free algorithm that returns an $(\alpha,\beta)$-approximation of $T$ with failure probability $1/m^4$ when the stream is in random order. Then $\mathcal A$ may require $\Omega(m/\sqrt{ \beta})$ space even when run on an input for any $T<m/4$.
\end{theorem}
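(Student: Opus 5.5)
The plan is to reduce from the tolerant triangle testing problem of Lemma~\ref{tttthm}, with parameters $\tau_1 = \lfloor \beta\rfloor+1$ (or any integer slightly above $\beta$ with $\tau_1<\sqrt{m/2}$) and $\tau_2 = 2(\alpha+1)\beta$ (rounded to an integer), so that $\tau_1<\tau_2<m/4$. I would first argue that an $(\alpha,\beta)$-approximation $X$ separates the three cases of $\toltritest_{\tau_1,\tau_2,m}$. If $T=0$ then $X\le \beta<\tau_1$. If $T=\tau_1$ then $\tau_1\le X\le \alpha\tau_1+\beta\le (\alpha+1)\tau_1$. If $T\ge\tau_2$ then $X\ge \tau_2$. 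The thresholds should be chosen so that $(\alpha+1)\tau_1<\tau_2$, possibly adjusting constants, e.g. taking $\tau_1\approx\beta$ and using the slack factor $2$ in the hypothesis. Bob outputs $0$, $1$, or $2$ according to which interval $X$ lands in.

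The second step is the standard simulation of a random-order stream in the random partition model. Alice and Bob use public randomness to draw the number of Alice's edges, and Alice's edges form the prefix. Alternatively, they generate a uniformly random permutation consistent with the partition: each edge goes to Alice independently with probability $p$, so Alice's edges are a $\mathrm{Bin}(m,p)$-sized uniformly random subset, and a random ordering of Alice's set followed by a random ordering of Bob's set is a uniformly random ordering of $E$ (as in the multinomial segmentation lemma). Alice runs $\mathcal A$ on her prefix and sends the memory state, and Bob finishes the stream. The message length is at most the space of $\mathcal A$ on that input, and the error is at most $1/m^4=\delta$. Since $\mathcal A$ is parameter-free, the same algorithm is correct on every graph, which is exactly the requirement $\err(P,f)\le\delta$ over all $G$.

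Then Lemma~\ref{tttthm} gives a graph in $\mathcal K$, one with at least $\tau_2$ triangles, on which the message, and hence the memory of $\mathcal A$ at the split point, is $\Omega(m/\sqrt{\tau_1})=\Omega(m/\sqrt\beta)$. To get the ``any $T<m/4$'' clause, I would note that the construction in the proof of Lemma~\ref{tttthm} lets $\tau_2$ be anything in $(\tau_1,m/4)$: the gadget $J^+$ contributes exactly $\tau_2$ triangles and $G^*$ contributes at most $\tau_1$. So one can instead make $J^+$ contribute $T-\tau_1$ or $T$ triangles, giving hard instances with any prescribed triangle count $T\ge\tau_2$ below $m/4$. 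For smaller $T$ the $0$-versus-$\tau_1$ instances already suffice.

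The main obstacle is the bookkeeping. One issue is matching the sampling probability $p=\tau_1^{-1/2}/10$ to a prefix of the random-order stream, which is handled by the binomial-length prefix argument. The other is verifying that the thresholds line up with the stated hypotheses $\beta<\sqrt{m/2}$ and $2(\alpha+1)\beta<m/4$, including integrality and the edge budget $m/2+2\tau_2+1\le m$ when the gadget is added.
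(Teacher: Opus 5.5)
Your proposal matches the paper's proof. Alice runs $\mathcal A$ on a random ordering of her edges and sends the memory state to Bob; the combined order is uniform because each edge is Alice's independently with probability $p$; the $(\alpha,\beta)$-guarantee separates $0$, $\tau_1$, and at least $\tau_2$ triangles; and Lemma~\ref{tttthm} then gives $\Omega(m/\sqrt{\tau_1})=\Omega(m/\sqrt{\beta})$. The only difference is bookkeeping. The paper takes $\tau_1=2\beta$, which makes $\alpha\tau_1+\beta=(2\alpha+1)\beta<\tau_2$ immediate but needs $2\beta<\sqrt{m/2}$ rather than the stated $\beta<\sqrt{m/2}$. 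Your $\tau_1=\lfloor\beta\rfloor+1$ fits the $\sqrt{m/2}$ hypothesis better, but the separation $\alpha\tau_1+\beta<\tau_2$ then needs $\beta\ge 1$, which the statement implicitly assumes anyway.
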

\begin{proof}
We will use $\mathcal A$ to establish a communication complexity protocol for $\toltritest_{\tau_1,\tau_2}$ when $\tau_1=2 \beta$ and $\tau_2>2(\alpha+1) \beta$. Specifically, Alice runs $\mathcal A$ on a random permutation of her edges and then sends the memory state to Bob who instantiates $\mathcal A$ with the communicated memory state and continues running the algorithm on a random permutation of his edges. Note that the overall random permutation of the stream is random because Alice received each edge independently with probability $p$.  We next argue that even though $\mathcal A$ only returns an $(\alpha,\beta)$ approximation, this is sufficient to determine the value of $\toltritest_{\tau_1, \tau_2}$ on the input graph. Hence, by appealing to Theorem~\ref{tttthm}, the algorithm must use $\Omega(m/\sqrt{\tau_1})$ bits of memory even when there are $\tau_2$ triangles.

 If $T=0$, any $(\alpha,\beta)$ approximation is at most $\beta$. If $T_1=\tau_1$, any $(\alpha,\beta)$ approximation is at least $\tau_1$ and at most $\tau_1 \alpha+\beta$. If $T_2\geq \tau_2$ then, any $(\alpha,\beta)$ approximation is at least $\tau_2$.
Hence,
if 
\[ \beta < \tau_1 \quad \mbox{ and } \quad
\alpha \tau_1+\beta < \tau_2\]
then an $(\alpha,\beta)$ approximation is sufficient to distinguish between graphs with $0$ triangles, $\tau_1$ triangles, and at least $\tau_2$ triangles. This is the case since $\tau_1=2 \beta$ and $\tau_2>2(\alpha+1) \beta$. 
\end{proof}

Essentially, this means that any parameter-free algorithm that provides a multiplicative approximation for all values of $T$ is forced to use $\Theta(m)$ space, even on streams where the actual triangle count is moderately large. More generally, if the algorithm is only required to work when $T\geq \beta$, then $\Omega(m/\sqrt{\beta})$ space is still necessary. At a high level, before the algorithm can infer that $T$ is large, it must still be capable of distinguishing between $0$ and $\beta$ triangles, and that already requires $\Omega(m/\sqrt{\beta})$ space.

%% file: verified-algorithm-proof.tex
\section{Verified Triangle Counting}
\label{appendix:verifiedalgorithm}
\subsection{Algorithm}
Recall our notation: For an edge $e$ and a set of edges $F$, let $t_e^F$ be the number of triangles in $\{e\}\cup F$ that include $e$. Let $t_e:=t_e^E$ where $E$ is the edge set of the graph, $\Delta_e(G) = \Delta_e = \max_{e} t_e$. 


In this section, we prove \autoref{thm:bb}.
\verifiedalg*
During a single pass, the algorithm collects the various sets of edges.
The role of these is as follows: The set $P$ will likely include most of the heavy edges, i.e., edges involved in many  triangles. Sets  $E_0, E_1,\ldots, E_{\log \sqrt{\tguess}-1}$ will be used to construct $P$. Together $P$ and  $E_{\sqrt{\tguess}}$ will be used to estimate the number of triangles that include heavy edges. Another set $S$ will be used to estimate the number of triangles that don't share an edge with many other triangles, i.e., triangles consisting of three light edges.
\begin{itemize}
\item {\bf Finding Potentially Heavy Edges:} For $i=0, 1, \ldots, \log \sqrt{\tguess}$, let $V_i$ be a subset of vertices where each vertex is sampled with probability  $p_i:=\min\{1,10 \delta^{-1} \epsilon^{-2} (\log n) /2^i\}$. 
Let $E_i$ be defined as the set of edges incident to $V_i$ amongst the first $q_im$ elements of the stream, where $q_i:=2^i/\sqrt{\tguess}$. During a pass over the stream, store the following set of edges:
\[
P:=\{e\in E: \mbox{position of $e$ in stream is $> q_im$ and $t_e^{E_i}\geq 1$ for some $i$}  \}
\]

\item {\bf Colorful Sampling:} Independently assign one of $C:=\delta \epsilon^2 \sqrt{\tau}/7$ ``colors'' chosen uniformly at random  to each node of the graph. During a pass over the stream, collect all edges whose endpoints are monochromatic. Call this set  $S$. 
%

\item {\bf Post-Processing:} Let $O=E_{\log(\sqrt{\tguess})}$ and $p=p_{\log(\sqrt{\tguess})}$.
$$
\oracle(e)=
\begin{cases}
\light & \mbox{ if } t_e^{O}< p\sqrt{\tguess}\\
\heavy & \mbox{ otherwise}
\end{cases}
$$
The oracle will henceforth be used to \emph{define} whether an edge is heavy or light.\footnote{That is, while we will later show that $e$ being defined as heavy/light will roughly correspond to whether $t_e$ is larger/smaller than $\sqrt{\tguess}$, the actual definition is a function of the edges sampled by the algorithm; this will help significantly with the analysis. 
Note that the oracle is defined independently of the ordering of the data stream because $E_{\log(\sqrt{\tguess})}$ is constructed based on the entire stream rather than a strict prefix.} Let $E_\heavy=\{e\in E: \oracle(e)=\heavy\}$
and let $\tilde{T}_0$ be the number of triangles amongst $S\setminus E_\heavy$. Define
\[
X= \tilde{T}_0 C^2 
+
\frac{1}{p}\sum_{e\in P\cap E_\heavy} \left ( t_{e,0}^{O}+t_{e,1}^{O}/2+t_{e,2}^{O}/3\right )
\]
where $t_{e,i}^O$ is the number of triangles including $e$ where $i$ of the other two edges in $O$ are heavy.  The coefficients of $t_{e,i}^O$ take into account that a triangle with multiple heavy edges can be counted from the perspective of multiple edges and hence we need to compensate for overcounting. {If $X\geq \tau(1-\epsilon)$ return $X$, if not output \fail.}

\end{itemize}

%
 
 \subsection{Analysis} We first analyze the space used by the algorithm:

\begin{theorem}[Space Bound]\label{thm:finalfrontier}
The expected space used by the algorithm is $\tilde{O}(\epsilon^{-2}\delta^{-2} (m/\sqrt{\tau}+T/\tau))$.
\end{theorem}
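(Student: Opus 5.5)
We bound the expected size of each stored edge set separately and add the bounds by linearity of expectation. The algorithm stores $E_0,\dots,E_{\log\sqrt{\tguess}}$ (including $O=E_{\log\sqrt{\tguess}}$), the set $P$, and the monochromatic set $S$. The counters $t^O_{e,i}$ and $\tilde T_0$ are computed from these sets in post-processing, so they add no extra space. Since $\delta^{-1}\le\delta^{-2}$, it suffices to prove each term is $\tilde O(\epsilon^{-2}\delta^{-2}(m/\sqrt{\tguess}+T/\tguess))$. Any bound with a smaller power of $\delta$ is absorbed.

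\textbf{The sets $E_i$ and $S$.} Because the stream is in random order, a fixed edge lies among the first $q_i m$ positions with probability $q_i$. Independently, it is incident to $V_i$ with probability at most $2p_i$. Hence
\[
\Exp{|E_i|}\le 2p_iq_im\le 20\delta^{-1}\epsilon^{-2}(\log n)\,\frac{m}{\sqrt{\tguess}} .
\]
Summing over the $O(\log n)$ levels gives $\tilde O(\delta^{-1}\epsilon^{-2}m/\sqrt{\tguess})$. The same computation with $q=1$ and $p=p_{\log\sqrt{\tguess}}$ covers $O$. An edge is monochromatic with probability $1/C=7/(\delta\epsilon^2\sqrt{\tguess})$, so $\Exp{|S|}=m/C=O(\delta^{-1}\epsilon^{-2}m/\sqrt{\tguess})$.

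\textbf{The set $P$.} An edge $e=uv$ enters $P$ at level $i$ only if some triangle $uvw$ has both $uw$ and $vw$ in $E_i$. Both edges must then lie in the prefix, which has probability about $q_i^2$; for a uniform permutation the two events are negatively correlated up to lower-order terms. In addition, either $w\in V_i$, or both $u,v\in V_i$, which has probability at most $p_i+p_i^2\le 2p_i$. A union bound over the $t_e$ triangles on $e$ and over all edges gives
\[
\Exp{|P\text{ at level } i|}\le \sum_e 2t_e p_iq_i^2 = 6T\,p_iq_i^2 .
\]
The truncation $p_i=\min\{1,\cdot\}$ splits the sum over levels into two ranges:
\begin{itemize}
\item at the small levels, where $p_i=1$, the terms are $6T\cdot 4^i/\tguess$ with $2^i\lesssim \delta^{-1}\epsilon^{-2}\log n$;
\item at the remaining levels, the terms are $60T\delta^{-1}\epsilon^{-2}(\log n)\,2^i/\tguess$.
\end{itemize}

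\textbf{Main obstacle.} Summing the second range all the way to $2^i=\sqrt{\tguess}$ gives only $\tilde O(\delta^{-1}\epsilon^{-2}T/\sqrt{\tguess})$, which is the form in Theorem~\ref{thm:bb}. Reaching the stated $T/\tguess$ term requires a finer charging at the large levels, and this is the step I expect to be hard. What I would try first is to exploit that $e$ must also appear after position $q_im$ and that a single witnessing triangle suffices. I would charge $e$ only to the smallest level $i$ at which it is witnessed, and sum $\Pr[\text{first witness at level }i]$ rather than summing over all levels. For edges with $t_e\ge 2^i$, the probability of being witnessed by level $i$ is already constant, so these edges cost $O(1)$ each, and there are at most $3T/2^i$ of them at level $i$. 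For edges with $t_e<2^i$ the union bound is tight, and the total should telescope to $\tilde O(\delta^{-2}\epsilon^{-2}T/\tguess)$, possibly after absorbing the extra $\delta^{-1}$ coming from $p_i$. If this charging fails, the fallback is the $T/\sqrt{\tguess}$ bound above, which still suffices for how Theorem~\ref{thm:bb} is used elsewhere in the paper.
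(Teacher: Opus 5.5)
Your bounds for the sets $E_i$, for $S$, and for $P$ follow the paper's argument: linearity of expectation over the stored sets, giving $m\sum_i 2p_iq_i$, $m/C$, and $\sum_e\sum_i 2t_ep_iq_i^2$. Your accounting of the case $u,v\in V_i$ and of the sum over edges is, if anything, more careful than the paper's. Two small simplifications are available. The prefix probability is exactly $\binom{q_im}{2}/\binom{m}{2}\le q_i^2$. The split on the $\min$ in $p_i$ is unnecessary, since $p_i\le 10\delta^{-1}\epsilon^{-2}(\log n)/2^i$ always.

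This argument proves $\tilde O(\epsilon^{-2}\delta^{-1}(m/\sqrt{\tau}+T/\sqrt{\tau}))$, and that is exactly where the paper's own proof ends. Its computation for $P$ concludes with $O(\delta^{-1}\epsilon^{-2}(\log n)\,T/\sqrt{\tau})$, which is the bound recorded in Theorem~\ref{thm:bb}. The $T/\tau$ (and the $\delta^{-2}$) in the displayed statement is a typo, not the result of some finer argument.

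The step you flag as the main obstacle would fail, because the $T/\tau$ bound is false. Let $L=10\delta^{-1}\epsilon^{-2}\log n$ and assume $\tau\ge 4L^2$. Let $G$ be a disjoint union of cliques on $k+2$ vertices with $k=\lceil\sqrt{\tau}/L\rceil$, so that $t_e=k$ for every edge and $T=mk/3$. At the level with $2^i=\sqrt{\tau}/2$ we have $q_i=1/2$ and $p_ik\ge 2$.

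An edge $e=uv$ lies in the second half of the stream with probability about $1/2$. Conditioned on this, the number $N$ of apexes $w$ with $uw,vw$ both in the first half has mean about $k/4$. So $N\ge k/8$ with constant probability, by Markov applied to $k-N$. Since $V_i$ is sampled independently of the order, one of these apexes lies in $V_i$ with probability at least $1-e^{-p_ik/8}\ge 1-e^{-1/4}$. Hence $\Pr[e\in P]=\Omega(1)$ and $\mathbb{E}|P|=\Omega(m)$, while $m/\sqrt{\tau}+T/\tau\le 2m/\sqrt{\tau}$ in this graph.

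For instance, with $\tau=n$ and constant $\epsilon,\delta$, the claimed bound is $\tilde O(m/\sqrt{n})=o(m)$. Yet here $T\approx n^2/(6L^2)\ge\tau$, so this is squarely the regime the algorithm is designed for.

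Your own charging sketch already shows the obstruction. At the top levels, the edges witnessed with constant probability are those with $t_e\gtrsim\sqrt{\tau}/L$, and there can be $\Theta(TL/\sqrt{\tau})$ of them. So no telescoping can go below $T/\sqrt{\tau}$. Drop the charging step and state the result with $T/\sqrt{\tau}$. Your ``fallback'' is the correct theorem, and it is all the rest of the paper uses.
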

\begin{proof}
The expected numbers of edges in all of the  $E_i$ sets is at most \[m\sum_{i=0}^{\log \sqrt{\tau }} 2p_i q_i = O(1)\cdot m \sum_{i=0}^{\log \sqrt{\tau}} 2^i/\sqrt{\tau} \cdot \delta^{-1} \epsilon^{-2} \log(n) /2^i\leq \tilde{O} (\delta^{-1} \epsilon^{-2} m/\sqrt{\tau} 
)
\ . \] 
The expected number of edges in $P$ is at most 
\begin{eqnarray*}
\sum_{i=0}^{\log \sqrt{\tau }} t_e p_i q_i^2 = 
O(\delta^{-1} \epsilon^{-2}  \log n) \cdot \sum_{i=0}^{\log \sqrt{\tau }} t_e 2^{2i}/\tau \cdot  1/2^i
&=& 
O(\delta^{-1} \epsilon^{-2} \log n) \cdot \sum_{i=0}^{\log \sqrt{\tau }} t_e 2^{i}/\tau \\
&=&  O(\delta^{-1} \epsilon^{-2} (\log n) T/\sqrt{\tau} )
 \ . \end{eqnarray*}
The expected number of edges in $S$ is $m/C=O(\epsilon^{-2} \delta^{-1} m/\sqrt{\tau})$. 
\end{proof}

To prove the accuracy guarantees we will argue that  a) the oracle distinguishes between edges $e$ with large or small $t_e$ values with sufficient accuracy, b) the algorithm stores almost all  edges with large $t_e$ values, and c) the number of triangles made of edges with small $t_e$ values, can be accurately estimated using the colorful sampling. 
 
We start by proving the oracle guarantees in a way that covers both \ppass~and the verified algorithm. We remind the reader of the relevant lemma.
\orcguarantees*
\begin{proof}
    Note that for every edge $e$ we have $t_e^O\sim \textup{Bin}(t_e,p)$, so $\Exp{t_E^O} = pt_e$. Suppose $t_e^O < p\gamma$ yet $t_e > 2\gamma$. Then $t_e^O<p\gamma$ implies $t_e^O < (1/2)t_ep$ and by Chernoff,
    \begin{align*}
    \prob{t_e^O < p\gamma}&\le \prob{t_e^O < (1/2)t_ep}\le \exp\rbk{-\frac{(1/2)^2}{2}t_ep}\\
    &\le \exp\rbk{-\frac{1}{4}p\gamma}\le \exp\rbk{-\frac{50\varepsilon^{-2}\log(n)}{4}}\le n^{-10}.
    \end{align*}
    On the other hand, suppose $\oracle(e) = \heavy$. We first show that $t_e$ must be large. Suppose $t_e < \gamma/2$, and therefore $\mu\defeq \Exp{t_e^O} \le p\gamma/2$. We write $p\gamma = (1+\kappa)\mu$ for a fitting $\kappa >1$. Therefore by Chernoff
    \begin{align*}
        \prob{t_e^O\ge p\gamma} &\le \prob{t_e^O\ge (1+\kappa)\mu}\le \exp\rbk{-\frac{\kappa \mu}{ 3}} = \exp\rbk{-\frac{p\gamma - \mu}{3}} \\
        &\le \exp\rbk{-\frac{p\gamma}{6}}\le \exp\rbk{-\frac{50\varepsilon^{-2}\log(n)}{6}} \le n^{-7}
    \end{align*}
    so with high probability, $t_e \ge \gamma/2$. The following follows from a standard two sided Chernoff bound conditioned on the above calculation:
    \begin{align*}
        \prob{|t_e^O-\mu|\ge\varepsilon\mu}\le 2\exp\rbk{-\frac{\varepsilon^{2}}{3}\mu }\le 2\exp\rbk{-\frac{50}{3}\log(n)} \le n^{-7}
    \end{align*}
    concluding the proof.
\end{proof}
\begin{lemma}[Oracle Guarantees] \label{lem:orcguarntees}
With high probability, $t_e^{O}=(1\pm \epsilon) t_e p$ for all heavy edges $e$ and $t_e\leq 2\sqrt{\tguess}$ for all light edges. 
\end{lemma}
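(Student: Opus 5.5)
The plan is to derive Lemma~\ref{lem:orcguarntees} directly from Lemma~\ref{lem:oracleguarantees}, specialized to the verified algorithm's parameters. In the post-processing step the oracle is built from $O=E_{\log(\sqrt{\tguess})}$, with sampling probability $p=p_{\log\sqrt{\tguess}}=\min\{1,10\delta^{-1}\epsilon^{-2}(\log n)/\sqrt{\tguess}\}$ and threshold $p\sqrt{\tguess}$. So the natural move is to set $\gamma=\sqrt{\tguess}$ and check that this oracle matches the hypotheses of Lemma~\ref{lem:oracleguarantees}.

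First I will confirm that $O$ is a set of edges incident to independently sampled vertices drawn from the \emph{whole} stream. Since $q_{\log\sqrt{\tguess}}=1$, the prefix is the entire stream, which is exactly what the footnote records. Consequently, for each edge $e$, $t_e^O\sim\mathrm{Bin}(t_e,p)$: each of the $t_e$ triangles on $e$ is detected exactly when its apex vertex is sampled, and distinct triangles on $e$ have distinct apexes.

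Second, I will rerun the two Chernoff estimates from the proof of Lemma~\ref{lem:oracleguarantees} with this $p$.
\begin{itemize}
\item For light edges: if $t_e>2\sqrt{\tguess}$, then $\Pr[t_e^O<p\sqrt{\tguess}]\le\exp(-p\sqrt{\tguess}/4)$.
\item For heavy edges: if $t_e<\sqrt{\tguess}/2$, then $\Pr[t_e^O\ge p\sqrt{\tguess}]\le\exp(-p\sqrt{\tguess}/6)$. Once $t_e\ge\sqrt{\tguess}/2$, so that $\mu\ge p\sqrt{\tguess}/2$, the two-sided bound gives $\Pr[|t_e^O-\mu|\ge\epsilon\mu]\le 2\exp(-\epsilon^2\mu/3)$.
\end{itemize}
In every case the exponent is at least of order $p\sqrt{\tguess}\,\epsilon^2$. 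When $p<1$ this equals $\Theta(\delta^{-1}\log n)$, which is $\ge c\log n$ for any fixed $c$ since $\delta\le1$. The constant $10$ here replaces the $50$ in Lemma~\ref{lem:oracleguarantees}, but the factor $\delta^{-1}$ compensates. If it does not, I will enlarge the constant in $p_i$, which only changes the space by a constant factor.

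Third, I will take a union bound over at most $m\le n^2$ edges to get failure probability $1/\poly(n)$.

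The main obstacle is the degenerate case $p=1$, i.e.\ $\sqrt{\tguess}\le 10\delta^{-1}\epsilon^{-2}\log n$, where the exponents can be small. In that case, however, $t_e^O=t_e$ deterministically. The heavy-edge claim then holds trivially, and an edge is labelled light only when $t_e<\sqrt{\tguess}\le2\sqrt{\tguess}$. So this case needs no probabilistic argument, and the lemma holds with high probability overall.
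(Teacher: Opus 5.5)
Your route is the paper's. You note that $q_{\log\sqrt{\tguess}}=1$, so $O$ is a whole-stream vertex sample, then specialize Lemma~\ref{lem:oracleguarantees} to $\gamma=\sqrt{\tguess}$, rerun the Chernoff bounds, and take a union bound over edges. Your separate handling of $p=1$ (where $O=E$ and everything is deterministic) and of the constant $10\delta^{-1}$ versus $50$ are correct details that the paper's one-line proof skips.

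There is, however, a genuine gap in one step: the claim $t_e^O\sim\mathrm{Bin}(t_e,p)$. The paper's proof makes the same assertion without justification. Your justification, that a triangle on $e$ is detected exactly when its apex is sampled, overlooks the endpoints of $e$. For $e=\{u,v\}$ and apex $w$, both $\{u,w\}$ and $\{v,w\}$ lie in $O$ iff $w\in V_p$ \emph{or} $u,v\in V_p$. So on the event $u,v\in V_p$, which has probability $p^2$ rather than $1/\poly(n)$, you get $t_e^O=t_e$. If moreover $t_e\ge p\sqrt{\tguess}$, the edge is labelled \heavy{} while $t_e^O=t_e>(1+\epsilon)pt_e$ whenever $p<1/(1+\epsilon)$.

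This is not rare across the graph. Take a disjoint union of cliques of size $k=20\delta^{-1}\epsilon^{-2}\log n$, run with $\tguess=T$. Then every edge has $t_e\ge p\sqrt{\tguess}$, and $p=\Theta(1/\sqrt n)$. The expected number of edges with both endpoints sampled is $p^2m=\Theta(\delta^{-1}\epsilon^{-2}\log n)$, and such an edge exists with high probability. So the heavy-edge half of the statement is false under the literal definition of $t_e^F$, and no Chernoff argument can establish it.

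The light-edge half does survive. $t_e^O$ is always at least the number $X_e$ of sampled apexes, and $X_e\sim\mathrm{Bin}(t_e,p)$ exactly, so your lower-tail bound still applies. The repair is to let $t_e^O$, and likewise the $t_{e,i}^O$, count only triangles on $e$ whose third vertex lies in $V_p$. This is computable, since the algorithm knows $V_p$ and stores every edge at those vertices. With that definition $t_e^O\sim\mathrm{Bin}(t_e,p)$ holds exactly, and your argument goes through unchanged.
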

\begin{proof}
Note that for any edge $e$, we have $t_e^{O}\sim \textup{Bin}(t_e,p)$ where  $p$ was defined   to be $p_{\log(\sqrt{\tguess})}=10 \delta^{-1} \epsilon^{-2} (\log n) /\sqrt{\tau}$.  The result follows from the Chernoff bound as in \ref{lem:oracleguarantees} and then taking the union bound over all edges.
\end{proof}
The next lemma establishes that $P$, the set of potentially heavy edges we construct, will contain most of the edges that ultimately will be defined to be heavy by the oracle. 

\begin{lemma}[Missing Heavy Edges]\label{missingmass}
$\sum_{e\in E^\heavy \setminus P} t_e \leq \epsilon T$ with probability at least $1-\delta$. 
\end{lemma}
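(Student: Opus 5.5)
Throughout I treat the oracle's heavy/light labelling as fixed. It depends only on $O=E_{\log\sqrt{\tguess}}$, which is built from the whole stream, so it is independent of the stream order. I will therefore condition on the success event of Lemma~\ref{lem:orcguarntees}. On that event every heavy edge has $t_e\ge \sqrt{\tguess}/2$: the proof of Lemma~\ref{lem:oracleguarantees} rules out $t_e<\gamma/2$ with $\gamma=\sqrt{\tguess}$. The plan is to bound $\EE\big[\sum_{e\in E^\heavy\setminus P} t_e\big]$ by $\epsilon\delta T$ and then apply Markov's inequality, which gives failure probability $\delta$.

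Fix a heavy edge $e$. It lands in $P$ if, for some $i$, two conditions hold: its stream position exceeds $q_i m$, and some triangle through $e$ has both other edges in $E_i$. Those two other edges share the third vertex $w$ of the triangle. A clean sufficient condition is therefore: $w\in V_i$ and both edges $(w,x),(w,y)$ appear among the first $q_i m$ elements. I pick the scale $i$ with $2^i\approx \sqrt{\tguess}\,\epsilon'$, or more naturally the largest $i$ with $q_i\le$ some constant like $1/2$, so that $q_i=\Theta(1)$. The probability that $e$'s position exceeds $q_i m$ is then a constant, which is not small enough on its own, so I should use the whole range of $i$.

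Here is the better approach. Let $\pi_e\in[0,1]$ be $e$'s relative position. Choose the index $i$ with $q_i\le \pi_e<q_{i+1}$, so $q_i\ge \pi_e/2$. The edges of each of the $t_e$ triangles fall before $q_im$ with probability about $q_i^2$ (via the random-order and segment argument). The third vertex is sampled with probability $p_i$. The events for different $w$ are independent across vertices, so
\[
\Pr[e\notin P\mid \pi_e]\le \exp\!\big(-\Omega(t_e\, p_i q_i^2)\big) .
\]
Substituting, $p_iq_i^2 = 10\delta^{-1}\epsilon^{-2}(\log n)\,2^i/\tguess$, and with $t_e\ge\sqrt{\tguess}/2$ and $2^i\approx \pi_e\sqrt{\tguess}$ this gives exponent $\Omega(\delta^{-1}\epsilon^{-2}\pi_e\log n)$. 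The only bad case is small $\pi_e$. The events $\pi_e<q_0=1/\sqrt{\tguess}$ and the $i=0$ regime have to be handled separately: an edge in the first $q_0 m$ positions can be missed. That happens with probability $q_0$, contributing $\sum_e t_e q_0\le 3T/\sqrt{\tguess}$ to the expectation. This needs $\tguess$ large compared to $(\epsilon\delta)^{-2}$, which I will assume or absorb. Integrating the exponential bound over $\pi_e$ gives $\Pr[e\notin P]\le O(\delta\epsilon^2/\log n)+q_0$. Summing $t_e$ times this over heavy edges yields an expected missing mass of at most $O(\epsilon^2\delta)\cdot 3T\le \epsilon\delta T$, and Markov finishes the proof.

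The main obstacle is the dependence structure. Positions of distinct edges in a random permutation are not independent, and the paths to different $w$ share the conditioning on $\pi_e$. I would handle this by conditioning on $\pi_e$ and using the multinomial/Bernoulli equivalence from the segmentation lemma: given the prefix length, the other edges lie in the prefix approximately independently with probability $q_i$. Independence of the vertex sampling across $w$ then makes the events independent across triangles, because the triangles through $e$ have disjoint other-edge pairs.
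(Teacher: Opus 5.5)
Your proposal is correct in outline, up to the same implicit largeness assumptions on $\tau$ that the paper makes, but it takes a different route.

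\textbf{The paper's route.} For each heavy edge the paper fixes a single scale $i(e)=\lceil\log_2(\tau\delta\epsilon^2/t_e)\rceil$, chosen so that $t_ep_iq_i^2\ge 10\log n$. Then $e$ is missed either with probability $n^{-10}$, or because it lies in the prefix of length $q_{i(e)}m$, which has probability at most $2\delta\epsilon^2\sqrt{\tau}/t_e$. The weight $t_e$ cancels. The sum is then controlled by the \emph{count} bound $|E_{\heavy}|\le 4T/\sqrt{\tau}$, giving expected missing mass $16\delta\epsilon^2T$.

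\textbf{Your route.} You condition on $e$'s position and use the largest prefix that ends before it. You then integrate the exponential miss probability over the uniform position. You put $t_e\ge\sqrt{\tau}/2$ inside the exponent and sum with $\sum_e t_e\le 3T$, rather than counting heavy edges.

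\textbf{Comparison.} The paper's version is shorter: one tail bound per edge and no conditioning on position. Yours follows how the algorithm actually uses every prefix, and it makes visible the boundary term $3T/\sqrt{\tau}$ from edges in the first $q_0m$ positions. That term is not an artifact of your method. The paper's $i(e)$ is negative once $t_e\ge 2\delta\epsilon^2\tau$, a case its proof skips, and there one is also forced onto the shortest prefixes. Your assumption $\tau\gtrsim(\epsilon\delta)^{-2}$ costs nothing, since the algorithm already needs $C=\delta\epsilon^2\sqrt{\tau}/7\ge 1$. Both arguments also write $p_i=10\delta^{-1}\epsilon^{-2}(\log n)/2^i$ and ignore the cap at $1$ for small $i$. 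This too is absorbed only by taking $\tau$ large.

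\textbf{The idea you abandoned.} Your first idea, $2^i\approx\sqrt{\tau}\epsilon'$, already works with $\epsilon'=\Theta(\epsilon\delta)$, modulo the same cap caveat. A single global scale with $q_i=\Theta(\epsilon\delta)$ gives $t_ep_iq_i^2=\Omega(\epsilon^{-1}\log n)$ for every heavy edge. Hence $\Pr[e\notin P]\le O(\epsilon\delta)+n^{-\Omega(1)}$, and the expected missing mass is $O(\epsilon\delta T)$, with no integration and no count bound.

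\textbf{One correction on dependence.} The segmentation lemma does not apply here, because the prefix lengths $q_im$ are fixed rather than multinomial. The right tool is the one the paper cites, negative association. Conditioned on $e$'s position, the other edges occupy a uniformly random permutation of the remaining positions, so their prefix-membership indicators are negatively associated. Hence the events that the disjoint pairs $\{(w,x),(w,y)\}$ both lie in the prefix are negatively associated. Combined with independent vertex sampling, this gives $\Pr[\text{no triangle through } e \text{ is detected}]\le\prod_w\bigl(1-p_i\Pr[\text{both other edges at } w \text{ lie in the prefix}]\bigr)$.
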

\begin{proof}
First note that $|E_\heavy|\leq 4T/\sqrt{\tguess}$ since, by Lemma \ref{lem:orcguarntees}, if $e$ is heavy then $t_e\geq p\sqrt{\tguess}/(p(1+\epsilon))= \sqrt{\tguess}/(1+\epsilon)$ and summing $t_e$ over all heavy edges is at most $3T$. Next we consider the probability a heavy edge is not in $P$. To do this, let $i=\lceil \log_2 ( \tguess/(\delta^{-1}\epsilon^{-2} t_e) )\rceil$ 
and note that if $e$ does not appear within the first $q_i m$ edges of the stream then the probability $e$ is not added to $P$ is at most
\begin{eqnarray*}
(1-q_i^2 p_i)^{t_e}  \leq  e^{-t_e q_i^2 p_i} \leq e^{-t_e 10 \delta^{-1}\epsilon^{-2} (\log n) 2^{i}/\tguess}
 \leq  1/n^{10}
\end{eqnarray*}
because the events that different triangles are formed between $e$ and edges in $E_i$ are negatively associated. 

Therefore, the probability that we do not include heavy edge $e$ in $P$ is 
\begin{eqnarray*}
\prob{e\in E^\heavy\setminus P}
& \leq & 1/n^{10}+
\prob{e \mbox{ appears in a prefix of length $q_im$}} \\
& = &  1/n^{10}+2^{i}/\sqrt{\tguess}   \leq   2 \cdot \frac{2\delta \sqrt{\tguess}}{\epsilon^{-2} t_e} \ ,
\end{eqnarray*}
where we used the fact that $1/n^{10}$ was dominated by the second term for sufficiently large $n$.

Therefore:
\begin{eqnarray*}
\expec{\sum_{e\in E^\heavy \setminus P} t_e}
 \leq  \sum_{e\in E^\heavy} \frac{4\delta\sqrt{\tguess}}{\epsilon^{-2} t_e}\cdot t_e
 \leq  |E^\heavy| \cdot \frac{4\delta \sqrt{\tguess}}{\epsilon^{-2}}
\leq \frac{16\delta T}{\epsilon^{-2}}
\end{eqnarray*}
using the fact  $|E^\heavy| \leq 4T/\sqrt{\tguess}$. The result then follows by an application of the Markov bound assuming $\epsilon<1/16$.
\end{proof}

The total number of triangles can be written as:
\[T= T_0+\sum_{e\in E_\heavy}(t_{e,0}+t_{e,1}/2+t_{e,2}/3)\]
where $T_0$ is the number of triangles with $0$ heavy edges and $T_\heavy$ is the number of triangles with at least 1 heavy edge. $t_{e,i}$ is the number of triangles including $e$ where exactly $i$ of the other edges are heavy. The next two lemmas consider errors incurred when estimating the terms $T_0$ and $T_1+T_2+T_3$.

\begin{lemma}[Light Triangles]\label{lightcontrib} 
$\prob{|\tilde{T}_0C^2 - T_0| \geq  \epsilon \max(T,\tau/2) }\leq \delta$. 
\end{lemma}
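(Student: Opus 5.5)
The plan is to compute the expectation and variance of $\tilde T_0$, the number of triangles in $S\setminus E_\heavy$, and conclude with Chebyshev. Two facts make this work. First, the oracle is defined from $O=E_{\log\sqrt{\tguess}}$, which is independent of the coloring. Second, by Lemma~\ref{lem:orcguarntees} every light edge $e$ satisfies $t_e\le 2\sqrt{\tguess}$. So I condition on the oracle (and on the event of Lemma~\ref{lem:orcguarntees}) and treat the set of light triangles (those with no heavy edge) as fixed. There are $T_0$ of them, and each is built from light edges with $t_e\le 2\sqrt{\tguess}$.

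\textbf{Expectation.} A triangle survives into $S$ exactly when its three vertices receive the same color, which happens with probability $1/C^2$. Hence $\EE[\tilde T_0 C^2]=T_0$.

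\textbf{Variance.} Write $\tilde T_0=\sum_\Delta Z_\Delta$, summing over light triangles, where $Z_\Delta$ indicates that $\Delta$ is monochromatic. Then
\[\Var[\tilde T_0]=\sum_{\Delta,\Delta'}\operatorname{Cov}(Z_\Delta,Z_{\Delta'}).\]
Pairs of triangles sharing at most one vertex are independent, since colors are assigned independently per vertex and the events involve disjoint conditions beyond one shared vertex. Pairs sharing exactly two vertices share an edge $e$, which is light. Such a pair spans 4 vertices, so it is jointly monochromatic with probability $C^{-3}$. The number of such pairs is at most $\sum_{e \text{ light}} t_e^2\le 2\sqrt{\tguess}\sum_e t_e\le 6\sqrt{\tguess}\,T$. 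The diagonal terms contribute at most $T_0/C^2$. Therefore
\[\Var[\tilde T_0C^2]\le C^2T_0+6C\sqrt{\tguess}\,T .\]

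\textbf{Chebyshev.} Substitute $C=\delta\epsilon^2\sqrt{\tguess}/7$. Then $C^2T\le \delta\epsilon^2\tguess T/49$ and $C\sqrt{\tguess}T=\delta\epsilon^2\tguess T/7$, so the variance is at most roughly $\delta\epsilon^2\tguess T$. Chebyshev gives a failure probability of
\[\frac{\delta\epsilon^2\tguess T}{\epsilon^2\max(T,\tguess/2)^2}\le 2\delta,\]
using $\tguess T\le 2\max(T,\tguess/2)^2$. The stray factor of 2 is absorbed by adjusting the constant in $C$.

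\textbf{Main obstacle.} The delicate point is the conditioning. The light/heavy labels must be independent of the colors, so that $\tilde T_0$ is an unbiased sum over a fixed set of triangles. This holds because $O$ and the coloring use independent randomness. The small-probability failure of Lemma~\ref{lem:orcguarntees}, which is the only source of the $t_e\le 2\sqrt{\tguess}$ bound, must also be absorbed, either into $\delta$ or into the $1/\poly(n)$ term.
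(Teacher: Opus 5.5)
Your proposal is correct and follows the paper's argument: the same second-moment bound (diagonal terms giving $C^2T_0$, and pairs of light triangles sharing a light edge controlled via $t_e\le 2\sqrt{\tau}$, giving the $O(C\sqrt{\tau}\,T)$ term), then Chebyshev, with the stray factor of $2$ absorbed as the paper does by taking $C=\delta\epsilon^2\sqrt{\tau}/14$. Your single inequality $\tau T\le 2\max(T,\tau/2)^2$ replaces the paper's two-case split. Your remark that the bound implicitly conditions on the oracle event of Lemma~\ref{lem:orcguarntees}, which is independent of the coloring, makes explicit a step the paper leaves implicit.
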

\begin{proof}

Since $C\leq \tau$, we can bound the variance as:
\[\var{\tilde{T}_0C^2 } \leq T_0 C^2 + 3T_0 C\cdot (\eta_\light-1) 
\leq T_0 C^2 + 6T_0\sqrt{\tau} C\leq 7C T_0\sqrt{\tau}\]
where $\eta_\light\leq 2\sqrt{\tau}$ is the maximum number of triangles amongst $E\setminus E_\heavy$ sharing the same edge. 
Hence, by an application of Chebyshev bound
\[
\prob{|\tilde{T}_0C^2 - T_0| \geq  \epsilon \max(T,\tau/2) }\leq \frac{7C T_0\sqrt{\tau}}{\epsilon^2 \max(T,\tau/2)^2}\]
If $T\geq \tau/2$, this is at most \[7C\sqrt{\tau} \epsilon^{-2}T^{-1}\leq 14 C \epsilon^{-2}/\sqrt{\tau} \  \] and if $T\leq \tau/2$, this is also at most \[7C\sqrt{\tau} (\tau /2) \epsilon^{-2}(\tau/2)^{-2}\leq 14 C \epsilon^{-2}/\sqrt{\tau} \ . \]
Hence, setting $C=\delta \epsilon^{2} \sqrt{\tau}/14$ suffices to ensure the probability is at most $\delta$.
\end{proof}

\begin{lemma}[Heavy Triangles]\label{heavycontrib}
With probability at least $1-1/\poly(n)$,
\[\frac{1}{p} \sum_{e\in E_\heavy} \left (t_{e,0}^{O}+t_{e,1}^{O}/2+t_{e,2}^{O}/3 \right ) = (1\pm \epsilon) (T_1+T_2+T_3)
\ . \]
\end{lemma}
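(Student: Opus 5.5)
The plan is to reduce the statement to a per-edge concentration bound and then use linearity. The first step is to rewrite the target sum in terms of triangles. Each triangle $\Delta$ with $k\ge 1$ heavy edges appears in the true sum $\sum_{e\in E_\heavy}(t_{e,0}+t_{e,1}/2+t_{e,2}/3)$ once from each of its $k$ heavy edges. Each appearance has weight $1/k$, because a heavy edge $e$ of $\Delta$ sees exactly $k-1$ other heavy edges. So this sum equals exactly $T_1+T_2+T_3$, the number of triangles containing at least one heavy edge.

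Next I will treat the estimator. For a fixed heavy edge $e$, a triangle through $e$ is counted in $t_e^O$ precisely when its third vertex $w$ belongs to $V_{\log\sqrt\tau}$. This holds because $e=\{u,v\}$ and the two other edges $\{u,w\},\{v,w\}$ are both incident to $w$. (The vertices $u,v$ themselves may also be sampled. If so, $e$'s triangles are all counted; I would handle this by defining $O$ as in the paper and noting that $t_e^O\sim\mathrm{Bin}(t_e,p)$, as asserted in Lemma~\ref{lem:orcguarntees}, or by restricting attention to the third vertex.) Hence, for each $i$, $t_{e,i}^O$ is a $p$-thinning of the $t_{e,i}$ triangles, and $\mathbb E[t_{e,i}^O]=p\,t_{e,i}$.

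The main step is concentration. By Lemma~\ref{lem:orcguarntees}, every heavy edge satisfies $t_e\ge \sqrt\tau/(1+\epsilon)$, so $p\,t_e=\Omega(\delta^{-1}\epsilon^{-2}\log n)$. This alone controls $t_e^O$ but not the individual classes $t_{e,i}^O$, which may be small. To avoid this, I would bound the weighted sum $Z_e=t_{e,0}^O+t_{e,1}^O/2+t_{e,2}^O/3$ directly. It is a sum of independent weighted indicators (one per third vertex $w$) with weights in $[1/3,1]$, and its mean is $p\,z_e$, where $z_e\ge t_e/3$. A Chernoff bound for sums of independent $[0,1]$-bounded variables then gives
\[
\Pr\bigl[|Z_e-p z_e|>\epsilon p z_e\bigr]\le 2\exp(-\epsilon^2 p z_e/3)\le n^{-c}.
\]
A union bound over all $m\le n^2$ edges shows that this holds simultaneously for every heavy edge with probability $1-1/\poly(n)$. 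Summing the bounds and dividing by $p$ then yields $(1\pm\epsilon)(T_1+T_2+T_3)$.

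The obstacle I expect is that the heavy/light classification is itself defined by the same sample $O$, so conditioning on $e$ being heavy distorts the distribution of $Z_e$. My approach is to prove the concentration for every edge $e$ with $t_e\ge\sqrt\tau/2$ unconditionally, since these are fixed edges. By Lemma~\ref{lem:orcguarntees}, with high probability every heavy edge lies in this set, so intersecting the two high-probability events resolves the dependence. I would treat the dependence in the classes $t_{e,i}$ the same way: I would bound $Z_e$ for every possible labeling pattern, or equivalently bound each of the three weighted-sum concentrations jointly through a per-triangle argument.
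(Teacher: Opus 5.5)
\textbf{Verdict: same route as the paper, but with a genuine gap.} Your skeleton matches the paper's two-line proof: Chernoff for the weighted count $t_{e,0}^O+t_{e,1}^O/2+t_{e,2}^O/3$ of each heavy edge, then summation. Your $1/k$ identity for $T_1+T_2+T_3$ is a correct refinement. So are bounding $Z_e$ directly rather than each class, and proving the bound for every fixed edge with $t_e\ge\sqrt{\tau}/2$ to avoid conditioning on $e$'s own label. One side remark: of your two options for sampled endpoints, only the second is valid. With the paper's $O$, if both endpoints of $e$ lie in $V_p$ then $t_e^O=t_e$, so $t_e^O\sim\mathrm{Bin}(t_e,p)$ is false; you should count a triangle on $e$ only through its third vertex.

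\textbf{The gap.} It is the obstacle you flag last, and neither of your patches closes it. The weight $c_w\in\{1,1/2,1/3\}$ of the triangle $\{u,v,w\}$ is set by the labels of $\{u,w\}$ and $\{v,w\}$. Those labels are functions of the same sample $V_p$: they depend on whether $v$, and any third vertex $w'$ of $e$ adjacent to $w$, is sampled. So $Z_e$ is not a sum of independent indicators with fixed weights. A union bound over weight patterns pays up to $3^{t_e}$ against a Chernoff tail of only $\exp(-\epsilon^2 p t_e/9)$. Your ``equivalently, a per-triangle argument'' is not an argument.

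\textbf{The claim can fail for this estimator.} Take $G$ to be a clique on $k=\sqrt{\tau}+2$ vertices, with $p<1$. Consider the event that exactly $N=\lceil p\sqrt{\tau}\rceil$ clique vertices are sampled. This has probability $\Theta(1/\sqrt{pk})=\Theta(\epsilon\sqrt{\delta/\log n})$, not $1/\poly(n)$. On this event:
\begin{itemize}
\item an edge with no sampled endpoint has $t^O=N$ and is heavy;
\item an edge with exactly one sampled endpoint has $t^O=N-1$ and is light;
\item so each of the $\binom{k-N}{2}$ heavy edges among unsampled vertices has $Z_e=N$, with every weight equal to $1$.
\end{itemize}
The estimator is then at least $\frac{1}{p}\binom{k-N}{2}N\approx k^3/2$. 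Since $T_1+T_2+T_3\le\binom{k}{3}\approx k^3/6$, this is roughly three times the target or more. The oracle guarantees still hold here, so the failure comes purely from the shared sample.

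\textbf{How to repair it.} The paper's appeal to Chernoff makes the same tacit independence assumption, so you inherited this hole rather than introduced it. The missing idea is to decouple classification from counting:
\begin{itemize}
\item define the heavy/light labels from $V_p$;
\item compute the counts $t_{e,i}$ from an independent vertex sample $V'_p$, counting a triangle on $e$ only when its third vertex lies in $V'_p$.
\end{itemize}
Conditioned on the labels, the weights are then fixed and the indicators are independent, so your Chernoff-plus-union-bound argument goes through verbatim.
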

\begin{proof}
By an application of the Chernoff bound,
for each heavy edge $e$, with high probability 
\[t_{e,0}^{O}+t_{e,1}^{O}/2+t_{e,2}^{O}/3=(1\pm \epsilon)p(t_{e,0}+t_{e,1}/2+t_{e,2}/3) \ . \]
The result follows by summing over the heavy edges.
\end{proof}

We conclude with the proof of Theorem \ref{thm:bb}.
\begin{proof}[Proof of Theorem \ref{thm:bb}]
If $ \tau/2 \leq T$, by combining Lemmas  \ref{lem:orcguarntees}, \ref{missingmass}, \ref{lightcontrib}, and \ref{heavycontrib}, then with probability at least $1-2\delta-1/\poly(n)$, we have: \[(1-2\epsilon)T\leq 
(T_0-\epsilon T) +(1-\epsilon) (T_1+T_2+T_3-\epsilon T) \leq X\leq (T_0+\epsilon T)+(1+\epsilon) (T_1+T_2+T_3)\leq (1+2\epsilon) T  \ .\] Note that if $\tau \leq T$ then the output is $(1\pm 2\epsilon)T$ and in particular is at least $(1-2\epsilon) T\geq (1-2\epsilon)\tau$.

If $ \tau/2 \geq T$, by combining Lemmas \ref{lem:orcguarntees}, \ref{lightcontrib}, and \ref{heavycontrib}, then with probability at least $1-2\delta-1/\poly(n)$ 
 \[
X\leq T_0+\epsilon \tau/2 +(1+\epsilon)(T_1+T_2+T_3)\leq (1+\epsilon) T+ \epsilon \tau /2
\leq  (1+\epsilon) \tau /2+ \epsilon \tau /2< (1-2\epsilon) \tau \ . 
\]
assuming $\epsilon \leq 1/6$.
Re-parameterizing $\epsilon \leftarrow \epsilon/2$ and $\delta\leftarrow \delta/2$ in the algorithm yields the required accuracy results. The space bound is proved in Theorem \ref{thm:finalfrontier}.
\end{proof}